
\documentclass[letterpaper, 11pt, onecolumn, draftclsnofoot]{IEEEtran}
\usepackage{fancyhdr}
\usepackage{amsmath,epsfig}
\usepackage{threeparttable}
\usepackage{epsf,epsfig}
\usepackage{amsthm}
\usepackage{amsmath}
\usepackage{amssymb}
\usepackage{amsfonts}
\usepackage{algorithmic}
\usepackage{algorithm}
\usepackage[noadjust]{cite}
\usepackage{dsfont}
\usepackage{subfigure}



\addtolength{\textfloatsep}{-30pt}

\newtheorem{theorem}{Theorem}

\newtheorem{lemma}{Lemma}
\newtheorem{corollary}{Corollary}

\newtheorem{proposition}{Proposition}

\def\phi{\varphi}

\def\SINR{\mathsf{SINR}}
\def\SNR{\mathsf{SNR}}

\def\l{\left}
\def\r{\right}
\def\({\left(}
\def\){\right)}

\setcounter{page}{1}



\def\bee{{\mathbf{e}}}
\def\bff{{\mathbf{f}}}

\def\bh{{\mathbf{h}}}

\def\bp{{\mathbf{p}}}
\def\bq{{\mathbf{q}}}

\def\bs{{\mathbf{s}}}

\def\bv{{\mathbf{v}}}

\def\b0{{\mathbf{0}}}




\def\cP{\mathcal{P}}




\newcommand{\Pout}{P_{\mathsf{out}}}
\newcommand{\tPout}{\tilde{P}_{\mathsf{out}}}

\newcommand{\nn}{\nonumber}

\begin{document}

\title{\huge \setlength{\baselineskip}{30pt} Cooperative Feedback for Multi-Antenna Cognitive Radio Networks}
\author{\large \setlength{\baselineskip}{15pt} Kaibin Huang and Rui Zhang \thanks{K. Huang is  with Yonsei University, S. Korea. R. Zhang is with National University of Singapore. Email: huangkb@yonsei.ac.kr, elezhang@nus.edu.sg.}\vspace{-40pt}}

\maketitle

\begin{abstract}
Cognitive beamforming (CB) is a multi-antenna technique for efficient spectrum sharing between primary users (PUs) and secondary users (SUs) in a cognitive radio network. Specifically, a multi-antenna SU transmitter applies CB to suppress the interference to the PU receivers as well as  enhance the corresponding  SU-link performance.  In this paper, for a multiple-input-single-output (MISO) SU channel coexisting with a single-input-single-output (SISO) PU channel, we propose a new and practical paradigm for designing CB based on the finite-rate cooperative feedback from the PU receiver to the SU transmitter. Specifically, the PU receiver communicates to the SU transmitter the quantized SU-to-PU channel direction information (CDI) for computing the SU transmit beamformer, and the interference power control (IPC) signal that regulates the SU transmission power according to the tolerable interference margin at the PU receiver. Two  CB algorithms based on cooperative feedback are proposed: one restricts the  SU transmit beamformer  to be orthogonal to the quantized SU-to-PU channel direction and the other relaxes such a constraint. In addition, cooperative feedforward of the SU CDI from the SU transmitter to the PU receiver  is exploited to allow more efficient cooperative feedback. The outage probabilities of the SU link for  different CB  and  cooperative feedback/feedforward algorithms are analyzed, from which the optimal bit-allocation tradeoff between the CDI and IPC feedback is characterized.
\end{abstract}

\begin{keywords}
Beamforming, cognitive radio, limited feedback, cooperative communication, interference channels,
multi-antenna systems.
\end{keywords}

\section{Introduction}
In a cognitive radio network, secondary users (SUs) are allowed to access the spectrum allocated to a primary network so long as the resultant interference to the primary users (PUs) is within a tolerable margin \cite{Goldsmith2009:BreakingSpectrumGridlockCognitiveRadios}. \emph{Cognitive beamforming}
(CB) is a promising  technique  that enables a  multi-antenna SU transmitter to regulate
its interference to each PU receiver by intelligent beamforming, and thereby transmit more frequently with larger power with respect to a  single-antenna SU transmitter. The optimal CB requires the SU transmitter to acquire  the channel state information (CSI)  of its interference channels to the PU receivers and even that of the primary links, which is difficult without the PUs' cooperation. We consider a two-user cognitive-radio network comprising a multiple-input-single-output (MISO) SU link and a single-input-single-output (SISO) PU link. This paper establishes a new approach of enabling CB at the SU transmitter based on  the finite-rate CSI feedback from the  PU receivers and presents a set of jointly designed CB and  feedback algorithms. The effect of feedback CSI quantization on the SU link performance is quantified, yielding insight into the feedback requirement. 

Existing CB designs assume that the SU transmitter either has prior CSI of the interference channels to the PU receivers or can acquire such information by observing the PU transmissions, which may be  impractical. Assuming perfect CSI of the SU-to-PU channels, the optimal CB
design is proposed in
\cite{Zhang2008:ExploitMultiAntSpectrumSharingCognitiveRadio}
for maximizing the SU throughput  subject to
a given set of interference power constraints at the PU receivers.
The perfect CSI assumption is relaxed in
\cite{Zhang2008:CognitiveBeamformMadePractical} and a more practical
CB algorithm is designed where a SU transmitter estimates the
required CSI by exploiting channel reciprocity and periodically
observing the PU transmissions. However, channel estimation
errors can  cause unacceptable residual
interference from the SU transmitter to the PU receivers. This issue is addressed in \cite{ZhangLiang:RobustCognitiveBeamforming:2008} by optimizing   the cognitive beamfomer to cope with CSI inaccuracy. Besides CB, the power of
the SU transmitter can be adjusted opportunistically to further
increase the SU throughput by exploiting the primary-link CSI as proposed in
\cite{Chen:CognitiveRadioOppPowreCntrl:2008} and
\cite{Zhang2008:OptimalPwrControlCognitiveRadio}. Such CSI, however, is even more
difficult for the SU to obtain than that of the SU-to-PU channels if the PU receivers provide no feedback.

For multiple-input multiple-output 
(MIMO) wireless systems, CSI feedback from the receiver enables precoding at
the transmitter, which not only enhances the  throughput but also simplifies the
transceiver design \cite{ScaStoETAL:OptimalDesignSpaceTimePrecoders:2002}. However, CSI feedback can incur substantial  overhead due
to the multiplicity of MIMO channel coefficients. This motivates
active research on designing efficient feedback quantization
algorithms, called \emph{limited feedback}
\cite{LovHeaETAL:WhatValuLimiFee:Oct:2004}. There exists a rich literature on limited feedback \cite{Love:OverviewLimitFbWirelssComm:2008} where MIMO CSI quantizers have been designed based on various principles such as     line packing
\cite{LovHeaETAL:GrasBeamMultMult:Oct:03} and Lloyd's
algorithm \cite{Lau:MIMOBlockFadingFbLinkCapConst:04}, and targeting different systems ranging from single-user beamforming \cite{MukSabETAL:BeamFiniRateFeed:Oct:03, LovHeaETAL:GrasBeamMultMult:Oct:03} to multiuser downlink 
\cite{Jindal:MIMOBroadcastFiniteRateFeedback:06, SharifHassibi:CapMIMOBroadcastPartSideInfo:Feb:05,
Huang:SDMASumFbRate:06}.  In view of prior work, limited feedback for coexisting networks remains a largely uncharted   area. In particular, there exist  few results on limited feedback for cognitive radio  networks.

In traditional cognitive radio networks, primary users have higher priority of accessing the radio spectrum and are reluctant to cooperate with secondary users having lower priority and belonging  to  an alien network  \cite{Zhao:SurveyDynamSpectAccess:2007}. However, inter-network cooperation is expected in the emerging heterogeneous wireless networks that employ macro-cells, micro-cells, and femto-cells to serve users with different priorities \cite{LTEAdvance}.  
For instance, a macro-cell  mobile user  can assist the cognitive transmission in a nearby  femto-cell. Thus, the design of efficient cooperation methods in cognitive radio networks will facilitate the implementation of next-generation heterogeneous wireless networks. 

This paper presents  a new and practical  paradigm for designing  CB based on  the finite-rate CSI feedback  from  the PU receiver to the SU transmitter, called \emph{cooperative feedback}. To be specific, the PU receiver communicates to
the SU transmitter i) the \emph{channel-direction information} (CDI),
namely the quantized shape of the SU-to-PU MISO channel, for computing the cognitive beamformer and ii)
the \emph{interference-power-control} (IPC) signal that regulates the SU transmission power  according to the tolerable
interference margin at the PU receiver. Our main contributions are summarized as follows.

\begin{enumerate}
\item We present two CB algorithms for the SU
transmitter based on the finite-rate cooperative feedback from the
PU receiver.  One is \emph{orthogonal cognitive beamforming} (OCB) where the SU
transmit beamformer is restricted to be orthogonal to the
feedback SU-to-PU channel shape and the SU
transmission power is controlled  by the IPC feedback. The other is
\emph{non-orthogonal cognitive  beamforming} (NOCB) for which the orthogonality
constraint on OCB is relaxed and the matching IPC signal is designed. 

\item In addition to cooperative feedback, we propose \emph{cooperative feedforward} of
the secondary-link CSI  from the SU transmitter to the PU receiver. The feedforward is found to enable  more efficient IPC feedback, allowing  larger SU transmission power.

\item We analyze the secondary-link performance in terms of the signal-to-noise ratio (SNR) outage
probability for OCB. In particular, regardless of whether there is feedforward, the SU outage
probability  is shown to be lower-bounded in the high SNR regime due to feedback CDI  quantization. The lower bound is proved to decrease exponentially with the number of CDI feedback bits. 

\item Finally, we derive the optimal bit allocation for the CDI and IPC feedback under a sum feedback rate constraint, which minimizes an upper bound on the SU outage probability.
\end{enumerate}

The remainder of this paper is organized as follows. Section~\ref{Section:System} introduces the system model. Section~\ref{Section:Algo} presents the jointly designed CB and cooperative feedback algorithms.  Section~\ref{Section:Outage:Prob} and Section~\ref{Section:Tradeoff} provide  the analysis of  the SU outage probability and the optimal tradeoff between the CDI and IPC feedback-bit allocation, respectively. Simulation results are given in Section~\ref{Section:Simulation}, followed by concluding remarks in  Section~\ref{Section:Conclusion}.

\section{System Model}\label{Section:System}
We consider a primary link coexisting with a secondary link. The
transmitter $\mathsf{T_{\mathsf{p}}}$ and the receiver $\mathsf{R_{\mathsf{p}}}$ of the primary link both
have a single antenna, while the secondary link comprises a
multi-antenna transmitter $\mathsf{T_{\mathsf{s}}}$  and a single-antenna receiver
$\mathsf{R_{\mathsf{s}}}$. 
The multiple antennas at $\mathsf{T_{\mathsf{s}}}$ are employed for beamforming where the beamformer is represented by $\bff$. All channels follow independent block fading. The channel coefficients of the primary and secondary links are independent and identically
distributed (i.i.d.) circularly symmetric complex Gaussian random
variables with zero-mean and unit-variance, denoted by
$\mathcal{CN}(0,1)$. Consequently, the
primary signal received at $\mathsf{R_{\mathsf{p}}}$ has the power $P_{\mathsf{p}} g_{\mathsf{p}}$, where $P_{\mathsf{p}}=\|\bff\|^2$ is
the transmission power of $\mathsf{T_{\mathsf{p}}}$ and $g_{\mathsf{p}}$  the primary channel  power that is  exponentially  distributed
with unit variance, denoted by $\exp(1)$. The MISO channels from
$\mathsf{T_{\mathsf{s}}}$ to $\mathsf{R_{\mathsf{p}}}$ and from $\mathsf{T_{\mathsf{s}}}$ to $\mathsf{R_{\mathsf{s}}}$ are represented by the
$L\times 1$  vectors $\bh_{\textsf{x}}$ consisting of i.i.d.
$\mathcal{CN}(0,1)$ elements and $\bh_{\mathsf{s}}$ comprising i.i.d.
$\mathcal{CN}(0,\lambda)$ elements, respectively, where $0<\lambda <
1$ accounts for a larger path loss  between $\mathsf{T_{\mathsf{s}}}$ and $\mathsf{R_{\mathsf{p}}}$
than that between $\mathsf{T_{\mathsf{s}}}$ and $\mathsf{R_{\mathsf{s}}}$ (or between $\mathsf{T_{\mathsf{p}}}$ and $\mathsf{R_{\mathsf{p}}}$).
To facilitate analysis, $\bh_{\textsf{x}}$ is decomposed into the channel gain
$g_{\textsf{x}} = \|\bh_{\textsf{x}}\|^2/\lambda$ and channel shape $\bs_{\textsf{x}} =
\bh_{\textsf{x}}/\|\bh_{\textsf{x}}\|$, and hence $\bh_{\textsf{x}} = \sqrt{\lambda g_{\textsf{x}}}\bs_{\textsf{x}}$;
similarly, let $\bh_{\mathsf{s}} = \sqrt{g_{\mathsf{s}}}\bs_{\mathsf{s}}$. The channel power $g_{\textsf{x}}$
and $g_{\mathsf{s}}$ follow independent chi-square distributions with $L$
complex degrees of freedom.  

\begin{figure}[t]
\begin{center}
\includegraphics[width=10cm]{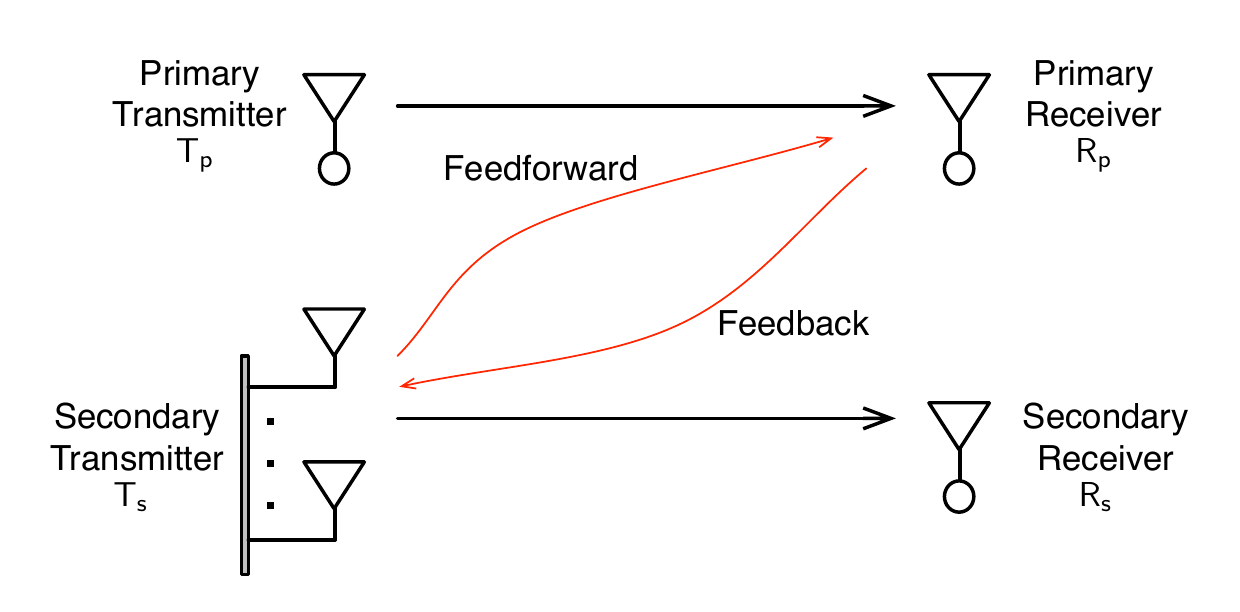}
\caption{Coexisting single-antenna primary and multi-antenna secondary links}
\label{Fig:Sys}
\end{center}
\end{figure}

The primary receiver $\mathsf{R_{\mathsf{p}}}$ cooperates with the secondary transmitter
$\mathsf{T_{\mathsf{s}}}$ to maximize the secondary-link throughput  without compromising the primary-link performance. We assume that $\mathsf{R_{\mathsf{p}}}$ estimates  
$\bh_{\textsf{x}}$ and $g_{\mathsf{p}}$ perfectly and has prior knowledge of  the maximum  SU transmission power 
$P_{\max}$. This enables  $\mathsf{R_{\mathsf{p}}}$ to compute and 
communicate to $\mathsf{T_{\mathsf{s}}}$ the IPC signal
and the CDI $\bs_{\textsf{x}}$. Under a  finite-rate feedback constraint,
the IPC and CDI feedback  must be both quantized. Let $\hat{\bs}_{\textsf{x}}$
denote the output of quantizing $\bs_{\textsf{x}}$. Following
\cite{YooJindal:FiniteRateBroadcastMUDiv:2007,Zhou:QuantifyPowrLossTxBeamFiniteRateFb:2005},
we adopt the quantization model where  $\hat{\bs}_{\textsf{x}}$ lies on  a
hyper sphere-cap centered at $\bs_{\textsf{x}}$ and its radius depends on the
quantization resolution. Specifically, the quantization error
$\epsilon = 1 - |\hat{\bs}_{\textsf{x}}^\dagger\bs_{\textsf{x}}|^2$ has the following cumulative
distribution function  for $L > 1$ 
\cite{YooJindal:FiniteRateBroadcastMUDiv:2007} \footnote{ $\dagger$
denotes   the Hermitian-transpose matrix operation.}
\begin{equation}\label{Eq:CDI:Quant}
\Pr(\epsilon\leq \tau) =  \left\{\begin{aligned}
&2^B \tau^{L-1},&& 0 \leq \tau \leq 2^{-\frac{B}{L-1}}\\
&1,&& \textrm{otherwise}
\end{aligned}\right.
\end{equation}
where $B$ is the number of CDI feedback bits. The  IPC feedback quantization  is discussed in Section~\ref{Section:Algo}.

Feedback of $\bs_{\mathsf{s}}$ from $\mathsf{R_{\mathsf{s}}}$ to $\mathsf{T_{\mathsf{s}}}$ is also required for computing the beamformer $\bff$, called \emph{local feedback}.  
We assume no feedback of $g_{\mathsf{s}}$ from $\mathsf{R_{\mathsf{s}}}$ to $\mathsf{T_{\mathsf{s}}}$. Thus the transmission power $P_{\mathsf{s}}$ of $\mathsf{T_{\mathsf{s}}}$ is independent of $g_{\mathsf{s}}$. 
We also consider the scenario where $\mathsf{T_{\mathsf{s}}}$ sends $\bs_{\mathsf{s}}$ to $\mathsf{R_{\mathsf{p}}}$, called \emph{feedforward},
prior to cooperative  feedback. This information is used by $\mathsf{R_{\mathsf{p}}}$ to
predict the beamformer at $\mathsf{T_{\mathsf{s}}}$ and thereby tolerate larger transmission power at $\mathsf{T_{\mathsf{s}}}$. 
For simplicity, the local  feedback and the feedforward are assumed perfect. This assumption allows us to
focus on the effect of finite-rate cooperative feedback.

The performance of the primary and secondary links are both measured by the SNR or signal-to-interference-plus-noise ratio (SINR)
outage probability. Accordingly, the data rates for
the primary and secondary links are fixed as $R_{\mathsf{p}} =
\log_2(1+\theta_{\mathsf{p}})$ and $R_{\mathsf{s}} = \log_2(1+\theta_{\mathsf{s}})$, respectively,
where $\theta_{\mathsf{p}}$ and $\theta_{\mathsf{s}}$ specify the  receive SNR/SINR thresholds for correct decoding.
The receive SNR and SINR
at $\mathsf{R_{\mathsf{p}}}$ are given by
\begin{equation}
\SNR_{\mathsf{p}} = \gamma_{\mathsf{p}}g_{\mathsf{p}}\quad \textrm{and}\quad \SINR_{\mathsf{p}}  =
\frac{\gamma_{\mathsf{p}}g_{\mathsf{p}}}{1 + \frac{\lambda g_{\textsf{x}}}{\sigma^2}|\bff^{\dagger}\bs_{\textsf{x}}|^2 }
\end{equation}
where $\gamma_{\mathsf{p}}$ is  the PU transmit SNR given by
$\gamma_{\mathsf{p}} = P_{\mathsf{p}}/\sigma^2$, and the noise samples  at both
$\mathsf{R_{\mathsf{p}}}$ and $\mathsf{R_{\mathsf{s}}}$ are i.i.d. $\mathcal{CN}(0,\sigma^2)$ random
variables. The PU outage probability is unaffected by the SU transmission and can be written as 
\begin{eqnarray}
\bar{P}_{\textrm{out}} &=&  \Pr(\SINR_{\mathsf{p}} < \theta_{\mathsf{p}}) \nn\\
&=& \Pr(\SNR_{\mathsf{p}} <  \theta_{\mathsf{p}})\label{Eq:TxConst}\\
&=& 1 - e^{-\frac{\theta_{\mathsf{p}}}{\gamma_{\mathsf{p}}}} \label{Eq:Pout:Prim}
\end{eqnarray}
where the equality in \eqref{Eq:TxConst} specifies  a constraint on the SU CB design and  
\eqref{Eq:Pout:Prim} follows from that the primary channel gain $g_{\mathsf{p}}$ is distributed as $\exp(1)$.
In a heterogeneous network, a primary transmitter such as a macro-cell base station is located far away from a receiver  served by a secondary transmitter such as a femto-cell base station. Therefore, interference from $\mathsf{T_{\mathsf{p}}}$ to $\mathsf{R_{\mathsf{s}}}$ is
assumed negligible and the receive SNR at $\mathsf{R_{\mathsf{s}}}$ is
\begin{equation}\label{Eq:RxSNR:Sec}
\SNR_{\mathsf{s}} = \frac{g_{\mathsf{s}}}{\sigma^2} |\bff^{\dagger} \bs_{\mathsf{s}}|^2. 
\end{equation}
It follows that the SU outage probability is 
\begin{equation}
\Pout = \Pr(\SNR_{\mathsf{s}} \leq \theta_{\mathsf{p}}). 
\end{equation}

\section{Cognitive Beamforming and Cooperative Feedback Algorithms}\label{Section:Algo}
The beamforming algorithms are designed to
minimize the secondary link outage probability under the PU-outage-probability constraint in \eqref{Eq:TxConst}. 
The OCB and NOCB algorithms together with
matching IPC feedback designs are discussed in separate subsections.

\subsection{Orthogonal  Cognitive Beamforming}\label{Section:ZFBeam:Algo}
The OCB beamformer at $\mathsf{T_{\mathsf{s}}}$, denoted as $\bff_{\mathsf{o}}$, suppresses interference
to $\mathsf{R_{\mathsf{p}}}$ and yet enhances $\SNR_{\mathsf{s}}$ in \eqref{Eq:RxSNR:Sec}. To
 this end, $\bff_{\mathsf{o}}$ is constrained to be orthogonal to the
feedback CDI $\hat{\bs}_{\textsf{x}}$, giving the name OCB. Despite the
orthogonality constraint, there exists residual interference from
$\mathsf{T_{\mathsf{s}}}$ to $\mathsf{R_{\mathsf{p}}}$ due to the quantization error in $\hat{\bs}_{\textsf{x}}$. The
interference power can be controlled to satisfy the constraint in
\eqref{Eq:TxConst} using  IPC feedback from $\mathsf{R_{\mathsf{p}}}$ to $\mathsf{T_{\mathsf{s}}}$.
Specifically, the transmission power of $\mathsf{T_{\mathsf{s}}}$, defined as $P_{\mathsf{s}} = \|\bff_{\mathsf{o}}\|^2 $,  satisfies the constraint
$P_{\mathsf{s}}\leq \hat{\eta}$, where $\hat{\eta}$ is the quantized
IPC feedback signal to be designed in the sequel. It follows from above discussion that 
the beamformer  $\bff_{\mathsf{o}}$ solves the following 
optimization problem
\begin{equation}\label{Eq:Beamformeamform}
\begin{aligned}
&\textrm{maximize:} &&|\bff_{\mathsf{o}}^\dagger\bs_{\mathsf{s}}|^2\\
&\textrm{subject to:} &&  \bff_{\mathsf{o}} ^{\dagger}\hat{\bs}_{\textsf{x}}=0 \\
&&& \|\bff_{\mathsf{o}}\|^2 \leq \hat{\eta}.
\end{aligned}
\end{equation}
To solve the above problem,  we decompose  $\bs_{\mathsf{s}}$ as
$\bs_{\mathsf{s}} = a\hat{\bs}_{\textsf{x}} + b\hat{\bs}_{\perp}$ where
$\hat{\bs}_{\perp}$ is an $L\times 1$ vector with unit norm such that   $\hat{\bs}_{\perp}^{\dagger}\hat{\bs}_{\textsf{x}}=0$, and
the coefficients $(a, b)$ satisfy $|a|^2+|b|^2=1$. With this
decomposition, the optimization problem in
\eqref{Eq:Beamformeamform} can be rewritten as
\begin{equation}\label{Eq:Beamformeamform:a}
\begin{aligned}
&\textrm{maximize:}&&|b\bff_{\mathsf{o}}^\dagger\hat{\bs}_{\perp}|^2\\
&\textrm{subject to:}&&\|\bff_{\mathsf{o}}\|^2\leq \hat{\eta}.
\end{aligned}
\end{equation}
It follows that $\bff_{\mathsf{o}}$ implements the maximum-ratio transmission
\cite{Lo:MaxRatioTx:99} and is thus given as
\begin{equation}\label{Eq:Beam:ZF}
\bff_{\mathsf{o}} = \sqrt{\hat{\eta}}\hat{\bs}_{\perp}.
\end{equation}

\subsubsection{The Design of IPC Feedback}\label{Section:IPC:OCB}

The unquantized IPC feedback signal, denoted as
$\eta$, is designed such that the constraint  $\|\bv\|^2 \leq \eta$
is sufficient for enforcing that in \eqref{Eq:TxConst}. The
quantization of $\eta$ will be discussed in the next subsection. The
constraint in \eqref{Eq:TxConst} can be translated into one on the
residual interference power $I_{\mathsf{o}}$ from $\mathsf{T_{\mathsf{s}}}$ to $\mathsf{R_{\mathsf{p}}}$ as follows. Let $\mathsf{null}(\hat{\bs}_{\textsf{x}})$ denote the null space of $\hat{\bs}_{\textsf{x}}$ and its basis vectors are represented as $(\bee_1, \bee_2, \cdots, \bee_{L-1})$. It follows from the CDI quantization model in
Section~\ref{Section:System} that
\begin{equation} \label{Eq:Proj}
\sum_{n=1}^{L-1} |\bs_{\textsf{x}}^\dagger\bee_n|^2 = \epsilon. 
\end{equation}
Without loss of generality, let $\bee_1 = \hat{\bs}_{\perp}$ since $\hat{\bs}_{\perp}\in\mathsf{null}(\hat{\bs}_{\textsf{x}})$  and define $\delta = |\bs_{\textsf{x}}^\dagger \hat{\bs}_{\perp}|^2$. Thus from \eqref{Eq:Proj} and since $|\bs_{\textsf{x}}^\dagger \hat{\bs}_{\perp}|^2 \leq \sum_{n=1}^{L-1} |\bs_{\textsf{x}}^\dagger\bee_n|^2$, we can obtain that $\delta \leq \epsilon$. Furthermore, define $\bq = \sum_{n=2}^{L-1} (\bs_{\textsf{x}}^\dagger\bee_n) \bee_n$ and $\bs_{\textsf{x}}$ can be decomposed as 
\begin{equation}\label{Eq:Ss:Decompose}
\bs_{\textsf{x}} = e^{j\theta_1}\sqrt{1-\epsilon}\hat{\bs}_{\textsf{x}} + e^{j\theta_2}\sqrt{\delta}\hat{\bs}_{\perp} + \bq
\end{equation}
where  the angles
$(\theta_1, \theta_2)$ represent appropriate phase
rotations.  Using the above expression,  $I_{\mathsf{o}}$  can be upper-bounded
as
\begin{eqnarray}
I_{\mathsf{o}} &=& \lambda g_{\textsf{x}} |\bff^\dagger\bs_{\textsf{x}}|^2\label{Eq:Iz:Def}\\
&=&  \lambda g_{\textsf{x}} | \sqrt{\eta}\hat{\bs}_{\perp}^\dagger(e^{j\theta_1}\sqrt{1-\epsilon}\hat{\bs}_{\textsf{x}} + e^{j\theta_2}\sqrt{\delta}\hat{\bs}_{\perp} +\bq)|^2\label{Eq:Iz:a}\\
& =& \lambda g_{\textsf{x}} \eta \delta \label{Eq:Iz}\\
&\leq& \lambda g_{\textsf{x}} \eta\epsilon \label{Eq:I}
\end{eqnarray}
where  \eqref{Eq:Iz:a} is obtained by substituting \eqref{Eq:Beam:ZF} and  \eqref{Eq:Ss:Decompose}. Note that computing
$\delta$  at $\mathsf{R_{\mathsf{p}}}$ requires $\hat{\bs}_{\perp}$ that can be derived from the  feedforward
of $\bs_{\mathsf{s}}$ from $\mathsf{T_{\mathsf{s}}}$. Therefore, $I_{\mathsf{o}}$ can be obtained at $\mathsf{R_{\mathsf{p}}}$
using \eqref{Eq:Iz} for the case of feedforward or otherwise
approximated using \eqref{Eq:I}. Based on the principle of
opportunistic power control  in
\cite{Chen:CognitiveRadioOppPowreCntrl:2008}, the constraint in \eqref{Eq:TxConst} is equivalent to that:
\begin{equation}\label{Eq:I:Const}
I_{\mathsf{o}} \leq  \omega, \quad \textrm{if}\ \omega \geq 0
\end{equation}
where 
\begin{equation}
\omega = \sigma^2\l(\frac{\gamma_{\mathsf{p}}g_{\mathsf{p}}}{\theta_{\mathsf{p}}}-1\r). \label{Eq:Omega}
\end{equation}
If $\omega < 0$,
$I_{\mathsf{o}}$ can be arbitrarily large since $\mathsf{R_{\mathsf{p}}}$ experiences outage even
without any interference from $\mathsf{T_{\mathsf{s}}}$. For the case without
feedforward, the IPC signal $\eta$ is
obtained by combining \eqref{Eq:I} and \eqref{Eq:I:Const} as
\begin{equation}\label{Eq:Ps}
\eta =\l\{\begin{aligned}
&\frac{\omega}{\lambda g_{\textsf{x}} \epsilon},&&\omega \geq 0 \\
&P_{\max}, && \textrm{otherwise}.
\end{aligned}\r.
\end{equation}
The counterpart of $\eta$ for the case of feedforward, denoted as
$\acute{\eta}$, follows from \eqref{Eq:Iz} and \eqref{Eq:I:Const} as
\begin{equation}\label{Eq:Ps:FF}
\acute{\eta} =\l\{\begin{aligned}
&\frac{\omega}{\lambda g_{\textsf{x}} \delta},&&\omega \geq 0 \\
&P_{\max}, && \textrm{otherwise}.
\end{aligned}\r.
\end{equation}
Note that the constraint $P_{\mathsf{s}}\leq \acute{\eta}$ is looser than  $P_{\mathsf{s}}\leq \eta$ 
in the case of $w\geq 0$ since $\delta \leq
\epsilon$.

\subsubsection{The Quantization of IPC Feedback} \label{Section:IPC:OCB:Quant}
Let $\hat{\eta}$
denote the  $(A+1)$-bit output of quantizing $\eta$.  The first bit indicates whether there is an outage event at
$\mathsf{R_{\mathsf{p}}}$; the following $A$ bits represent $\hat{\eta}$ if $\mathsf{R_{\mathsf{p}}}$ is not
in outage (i.e., $\omega \geq 0$) or otherwise are neglected by $\mathsf{T_{\mathsf{s}}}$.
Given $\omega\geq 0$,
$\hat{\eta}$ is constrained to take on values from a finite set  of
$N=2^A$ nonnegative scalars, denoted by $\cP = \{p_0, p_1, \cdots,
p_{N-1}\}$ where $p_0< p_1<\cdots<p_{N-1}$. Note that the optimal
design of $\cP$ for minimizing the SU outage probability
requires additional knowledge at $\mathsf{R_{\mathsf{p}}}$ of the secondary-link  data  rate and
channel distribution. For simplicity, we consider the suboptimal design of $\cP$ whose elements partition the space of $\eta$ using the criterion of
equal probability.\footnote{The IPC quantizer can be improved  by
limiting the quantization range to $P_{\max}$ and optimizing the set
$\mathcal{P}$ using Lloyd's algorithm \cite{Lloyd57}. However, the
corresponding analysis is complicated. Thus, we use the current design for
simplicity and do not pursue the optimization of the IPC
quantization in this work.} Specifically, $p_0 = 0$ and
\begin{equation}\label{Eq:IPCCbk}
\left\{\begin{aligned}
&\Pr(p_n< \eta\leq p_{n+1}\mid  \gamma_{\mathsf{p}} g_{\mathsf{p}}\geq \theta_{\mathsf{p}})  = \frac{1}{N}, && 0\leq n \leq N-2\\
&\Pr(\eta> p_{n}\mid  \gamma_{\mathsf{p}} g_{\mathsf{p}}\geq \theta_{\mathsf{p}})  = \frac{1}{N}, && n = N-1.
\end{aligned}
\r.
\end{equation}
 Given
$\cP$, define the operator $\lfloor \cdot \rfloor_{\mathcal{P}}$ on
$x\geq 0$ as $\lfloor x \rfloor_{\mathcal{P}}  = \max_{p\in
\mathcal{P}} p$ subject to $p \leq x$. Then $\hat{\eta}$ is
given as
\begin{equation}\label{Eq:QIPC}
\hat{\eta} =\l\{\begin{aligned}
&\l\lfloor\eta
\r\rfloor_{\mathcal{P}}, && \omega \geq 0 ~\textrm{and} ~\eta<P_{\max} \\
&P_{\max}, && \textrm{otherwise}.
\end{aligned}
\r.
\end{equation}
Note that $\hat{\eta} \leq \eta$ and thus the constraint
$\|\bff_{\mathsf{o}}\|^2\leq \hat{\eta}$ is sufficient for maintaining the
constraint in \eqref{Eq:I:Const} or its equivalence in \eqref{Eq:TxConst}.
Last, for  the case with feedforward, the output $\tilde{\eta}$ of quantizing $\acute{\eta}$ in \eqref{Eq:Ps:FF} is given by \eqref{Eq:QIPC}  with  $\eta$  replaced  with $\acute{\eta}$.

\subsection{Non-Orthogonal Cognitive Beamforming}

The NOCB beamformer at $\mathsf{T_{\mathsf{s}}}$ is designed by relaxing the
orthogonality constraint on OCB. We formulate the design of
NOCB beamformer as a convex optimization problem and derive its
closed-form solution. The  matching IPC feedback signal is also designed.

The NOCB beamformer, denoted as $\bff_{\mathsf{n}}$, is modified from the OCB
counterpart by replacing the constraint $\bff_{\mathsf{o}}^\dagger\hat{\bs}_{\textsf{x}}
=0$ with $|\bff_{\mathsf{n}}^\dagger \hat{\bs}_{\textsf{x}}|^2 \leq \hat{\mu}_1$ where
$0\leq \hat{\mu}_1 \leq P_{\max}$. In other words, NOCB controls
transmission power in the direction specified by $\hat{\bs}_{\textsf{x}}$ rather
than suppressing it.  In addition, $\bff_{\mathsf{n}}$ satisfies a power
constraint $\|\bff_{\mathsf{n}}\|^2 \leq \hat{\mu}_2$ with $0\leq \hat{\mu}_2
\leq P_{\max}$. The parameters $\hat{\mu}_1$ and $\hat{\mu}_2$
constitute  the quantized IPC feedback signal designed in the
sequel. Under the above constraints, the design of  $\bff_{\mathsf{n}}$ to maximize
the receive SNR at $\mathsf{R_{\mathsf{s}}}$ can be formulated as the following
optimization problem
\begin{equation}
\begin{aligned}
&\textrm{maximize:}&& |\bff_{\mathsf{n}}^\dagger\bs_{\mathsf{s}}|^2\\
&\textrm{subject to:} && |\bff_{\mathsf{n}}^\dagger\hat{\bs}_{\textsf{x}}|^2 \leq \hat{\mu}_1\\
&&&\|\bff_{\mathsf{n}}\|^2 \leq \hat{\mu}_2. 
\end{aligned}\label{Eq:OpBeam:Prob}
\end{equation}
To solve the above problem, we write $\bff_{\mathsf{n}} =
\alpha \hat{\bs}_{\textsf{x}} + \beta\hat{\bs}_{\perp} + \rho\bp$ where $\bp=\bq/\|\bq\|$ with $\bq$  identical to that in \eqref{Eq:Ss:Decompose} 
and $|\alpha|^2 + |\beta|^2 + |\rho|^2 \leq \hat{\mu}_2$.  An optimization  problem having
the same form as \eqref{Eq:OpBeam:Prob} is solved in
\cite{Zhang2008:ExploitMultiAntSpectrumSharingCognitiveRadio}. Using
the results in
\cite[Theorem~2]{Zhang2008:ExploitMultiAntSpectrumSharingCognitiveRadio}
and $\bs_{\mathsf{s}} = a\hat{\bs}_{\textsf{x}} + b\hat{\bs}_{\perp}$, we obtain the
following lemma.
\begin{lemma}\label{Lem:NOBeam} The NOCB beamformer is given by $\bff_{\mathsf{n}} = \alpha \hat{\bs}_{\textsf{x}} + \beta\hat{\bs}_{\perp}$ where
\begin{itemize}
\item[--] If $\hat{\mu}_1 \geq |a|^2\hat{\mu}_2$
\begin{equation}
\alpha = a \sqrt{\hat{\mu}_2},\quad \beta = b\sqrt{\hat{\mu}_2}\label{Eq:Beam:Case1}
\end{equation}

\item[--] If $0\leq \hat{\mu}_1  < |a|^2\hat{\mu}_2$
\begin{equation}
\alpha = a \sqrt{\hat{\mu}_1},\quad \beta = b\sqrt{\hat{\mu}_2-\hat{\mu}_1}.
\end{equation}
\end{itemize}
\end{lemma}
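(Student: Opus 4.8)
The plan is to reduce the $L$-dimensional problem \eqref{Eq:OpBeam:Prob} to a scalar power-allocation problem over the two-dimensional subspace $\mathrm{span}\{\hat{\bs}_{\textsf{x}},\hat{\bs}_{\perp}\}$, and then to read off the two cases from the location of the unconstrained optimum. Writing $\bff_{\mathsf{n}} = \alpha\hat{\bs}_{\textsf{x}} + \beta\hat{\bs}_{\perp} + \rho\bp$ with $\{\hat{\bs}_{\textsf{x}},\hat{\bs}_{\perp},\bp\}$ orthonormal, and using $\bs_{\mathsf{s}} = a\hat{\bs}_{\textsf{x}} + b\hat{\bs}_{\perp}$ (which has no $\bp$-component), the objective becomes $|\bff_{\mathsf{n}}^{\dagger}\bs_{\mathsf{s}}|^2 = |\overline{\alpha}a + \overline{\beta}b|^2$, the first constraint becomes $|\alpha|^2\le\hat{\mu}_1$, and the power constraint becomes $|\alpha|^2+|\beta|^2+|\rho|^2\le\hat{\mu}_2$. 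Since $\rho$ enters neither the objective nor the interference constraint and only consumes power budget, it is optimal to take $\rho = 0$; this already justifies the claimed form $\bff_{\mathsf{n}} = \alpha\hat{\bs}_{\textsf{x}} + \beta\hat{\bs}_{\perp}$. One could alternatively observe that \eqref{Eq:OpBeam:Prob} is a verbatim instance of the beamforming problem of \cite[Theorem~2]{Zhang2008:ExploitMultiAntSpectrumSharingCognitiveRadio} (with secondary direction $\bs_{\mathsf{s}}$, quantized interference direction $\hat{\bs}_{\textsf{x}}$, interference budget $\hat{\mu}_1$ and power budget $\hat{\mu}_2$) and quote the closed form from there; below I sketch the self-contained route.

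Next I would dispense with the phases. For fixed $|\alpha|,|\beta|$, the modulus $|\overline{\alpha}a+\overline{\beta}b|$ is largest when $\overline{\alpha}a$ and $\overline{\beta}b$ are co-phased, i.e.\ when $\alpha$ and $\beta$ are aligned with $a$ and $b$ respectively, giving the value $|a|\,|\alpha| + |b|\,|\beta|$. Setting $x=|\alpha|^2$ and $y=|\beta|^2$, the problem reduces to: maximize $|a|\sqrt{x}+|b|\sqrt{y}$ subject to $x\le\hat{\mu}_1$ and $x+y\le\hat{\mu}_2$. The objective is strictly increasing in $y$, so the power constraint is tight, $y=\hat{\mu}_2-x$, and it remains to maximize the concave function $\phi(x):=|a|\sqrt{x}+|b|\sqrt{\hat{\mu}_2-x}$ over $x\in[0,\min\{\hat{\mu}_1,\hat{\mu}_2\}]$. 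Solving $\phi'(x)=0$ and using $|a|^2+|b|^2=1$ (fixed when $\bs_{\mathsf{s}}$ is decomposed in Section~\ref{Section:ZFBeam:Algo}) gives the unconstrained maximizer $x^{\star}=|a|^2\hat{\mu}_2$, which corresponds to the maximum-ratio beamformer $\bff_{\mathsf{n}} = \sqrt{\hat{\mu}_2}\,\bs_{\mathsf{s}}$.

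The two cases of the lemma are now just a comparison of $x^{\star}$ with the feasible interval. If $\hat{\mu}_1\ge|a|^2\hat{\mu}_2$, then $x^{\star}$ is feasible, so $|\alpha|^2=|a|^2\hat{\mu}_2$ and $|\beta|^2=|b|^2\hat{\mu}_2$; co-phasing with $\bs_{\mathsf{s}}$ and absorbing the irrelevant global phase gives $\alpha=a\sqrt{\hat{\mu}_2}$, $\beta=b\sqrt{\hat{\mu}_2}$, i.e.\ the first case. If $0\le\hat{\mu}_1<|a|^2\hat{\mu}_2$, then $\phi$ is still increasing on all of $[0,\hat{\mu}_1]$ (because $\hat{\mu}_1<x^{\star}$), so the maximizer is the right endpoint $x=\hat{\mu}_1$, whence $|\alpha|^2=\hat{\mu}_1$ and $|\beta|^2=\hat{\mu}_2-\hat{\mu}_1$ with $\alpha,\beta$ co-phased with $a,b$; this is the second case of Lemma~\ref{Lem:NOBeam}.

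The computation is routine; the only places that need care are (i) justifying $\rho=0$ and the collapse to two dimensions from orthonormality, (ii) the phase-alignment step that turns the complex modulus $|\overline{\alpha}a+\overline{\beta}b|$ into the real expression $|a|\sqrt{x}+|b|\sqrt{y}$, and (iii) the degenerate cases $a=0$ or $b=0$, where the decomposition $\bs_{\mathsf{s}}=a\hat{\bs}_{\textsf{x}}+b\hat{\bs}_{\perp}$ (and hence the choice of $\hat{\bs}_{\perp}$) is ill-posed; these are handled separately and trivially ($a=0$ forces plain maximum-ratio transmission along $\hat{\bs}_{\perp}$, and $b=0$ forces $|\alpha|^2=\min\{\hat{\mu}_1,\hat{\mu}_2\}$). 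I do not anticipate a genuine obstacle here, particularly since the heavy lifting can be outsourced to \cite[Theorem~2]{Zhang2008:ExploitMultiAntSpectrumSharingCognitiveRadio}; the residual work is verifying the substitution and simplifying with the explicit two-dimensional geometry.
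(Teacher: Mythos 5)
Your reduction is sound and, at bottom, follows the same route as the paper: the paper gives no standalone proof of Lemma~\ref{Lem:NOBeam} beyond invoking \cite[Theorem~2]{Zhang2008:ExploitMultiAntSpectrumSharingCognitiveRadio} with the substitution $\bs_{\mathsf{s}} = a\hat{\bs}_{\textsf{x}} + b\hat{\bs}_{\perp}$, which is exactly your ``outsourcing'' remark. Your self-contained version (discard any component of $\bff_{\mathsf{n}}$ outside $\mathrm{span}\{\hat{\bs}_{\textsf{x}},\hat{\bs}_{\perp}\}$, co-phase with $a,b$, then maximize the concave scalar $|a|\sqrt{x}+|b|\sqrt{\hat{\mu}_2-x}$ over $x\in[0,\min\{\hat{\mu}_1,\hat{\mu}_2\}]$) is correct, and the case split at $x^\star=|a|^2\hat{\mu}_2$ reproduces the lemma's threshold and the maximum-ratio solution of the first case exactly.

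There is, however, one substantive mismatch you pass over silently: in the second case your (correct) optimizer has the interference and power constraints both tight, $|\alpha|^2=\hat{\mu}_1$ and $|\beta|^2=\hat{\mu}_2-\hat{\mu}_1$, i.e.\ $\alpha=(a/|a|)\sqrt{\hat{\mu}_1}$, $\beta=(b/|b|)\sqrt{\hat{\mu}_2-\hat{\mu}_1}$, whereas the printed formula $\alpha=a\sqrt{\hat{\mu}_1}$, $\beta=b\sqrt{\hat{\mu}_2-\hat{\mu}_1}$ has moduli $|a|\sqrt{\hat{\mu}_1}$ and $|b|\sqrt{\hat{\mu}_2-\hat{\mu}_1}$, consumes power $|a|^2\hat{\mu}_1+|b|^2(\hat{\mu}_2-\hat{\mu}_1)<\hat{\mu}_2$, leaves the constraint $|\bff_{\mathsf{n}}^\dagger\hat{\bs}_{\textsf{x}}|^2\le\hat{\mu}_1$ slack, and is therefore strictly suboptimal for \eqref{Eq:OpBeam:Prob} whenever $0<|a|,|b|<1$. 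You assert that your result ``is the second case of Lemma~\ref{Lem:NOBeam}'' when it is not, literally. The discrepancy is almost certainly a normalization slip in the lemma rather than in your algebra: your version is the one consistent with the paper's later use of the lemma, e.g.\ the claim $\bff_{\mathsf{o}}=\bff_{\mathsf{n}}$ when $\hat{\mu}_1=0$ in Appendix~\ref{App:Converge} requires $|\beta|=\sqrt{\hat{\mu}_2}$, not $|b|\sqrt{\hat{\mu}_2}$. Still, a complete write-up must either state the case-2 coefficients with the unit-modulus factors $a/|a|$, $b/|b|$ and flag the correction, or explain a convention under which $a,b$ are real nonnegative; as written, the proof does not establish the statement it claims to prove. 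The degenerate cases $a=0$ and $b=0$ you dispose of are handled correctly.
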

Note that the beamformer in \eqref{Eq:Beam:Case1} performs the maximum-ratio transmission \cite{Lo:MaxRatioTx:99}.

In the remainder of this section, the IPC feedback signal $\hat{\mu}
= (\hat{\mu}_1, \hat{\mu}_2)$ is designed to enforce the constraint
in \eqref{Eq:TxConst}.  The unquantized version of $\hat{\mu}$, denoted as $\mu = (\mu_1,
\mu_2)$, is first designed as follows. Similar to
\eqref{Eq:I:Const}, the constraint  in \eqref{Eq:TxConst} can be
transformed into the following constraint on the residual
interference power $I_{\mathsf{n}}$ from $\mathsf{T_{\mathsf{s}}}$ to $\mathsf{R_{\mathsf{p}}}$:
\begin{eqnarray}
I_{\mathsf{n}} &=& \lambda g_{\textsf{x}} |\bff_{\mathsf{n}}^\dagger \bs_{\textsf{x}}|^2  \nn\\
&\leq& \omega, \quad \textrm{if}\ \omega \geq 0 \label{Eq:I:Const:NB}
\end{eqnarray}
or otherwise $\|\bff_{\mathsf{n}}\|^2 = P_{\max}$. To facilitate the design, $I_{\mathsf{n}}$ is upper-bounded as
follows:
\begin{eqnarray}
I_{\mathsf{n}}
&=& \lambda g_{\textsf{x}} |(\alpha\hat{\bs}_{\textsf{x}} + \beta\hat{\bs}_{\perp})^\dagger (e^{j\theta_1} \sqrt{1-\epsilon}\hat{\bs}_{\textsf{x}} + e^{j\theta_2}\sqrt{\delta}\hat{\bs}_{\perp} + \bq)|^2\label{Eq:1}\\
&=& \lambda g_{\textsf{x}} |\alpha e^{j\theta_1}\sqrt{1-\epsilon} + \beta e^{j\theta_2}\sqrt{\delta} |^2\nn\\
&\leq& \lambda g_{\textsf{x}} (|\alpha| \sqrt{1-\epsilon} + |\beta|\sqrt{\delta} )^2\nn\\
&\leq& \lambda g_{\textsf{x}} (|\alpha| \sqrt{1-\epsilon} + \sqrt{P_{\max}\delta} )^2\label{Eq:Io:Ub:a}\\
&\leq& \lambda g_{\textsf{x}} (|\alpha|\sqrt{1-\epsilon} + \sqrt{P_{\max}\epsilon } )^2\label{Eq:Io:Ub}
\end{eqnarray}
where \eqref{Eq:1} uses Lemma~\ref{Lem:NOBeam} and \eqref{Eq:Ss:Decompose},
\eqref{Eq:Io:Ub:a} applies $|\beta|^2 \leq
P_{\max}$, and \eqref{Eq:Io:Ub} follows from $\delta \leq \epsilon$. Recall
that computing $\delta$ at $\mathsf{R_{\mathsf{p}}}$ requires  feedforward.
Therefore, for the case without feedforward, the bound on $I_{\mathsf{n}}$ in
\eqref{Eq:Io:Ub} should be used in
designing the IPC feedback. Specifically, combining \eqref{Eq:I:Const:NB} and \eqref{Eq:Io:Ub}
 gives the following constraint on $\alpha$
\begin{equation}\label{Eq:Const:Alpha}
|\alpha| \leq \nu,\quad \textrm{if}\ \nu \geq 0
\end{equation}
where
\begin{equation}
\nu= \frac{\sqrt{\frac{\omega}{\lambda g_{\textsf{x}}}}-\sqrt{\epsilon P_{\max}}}{\sqrt{1-\epsilon}}. \label{Eq:Nu}
\end{equation}
For $\nu \geq 0$, it follows that $\mu_1 = \nu^2$. For $\nu<0$, the
above constraint is invalid and thus we set $\mu_1 = |\alpha|^2=0$; as
a result, the NOCB optimization problem in \eqref{Eq:OpBeam:Prob}
converges to the OCB counterpart in \eqref{Eq:Beamformeamform},
leading to $\mu_2 = \eta$. Furthermore, it can be observed from
\eqref{Eq:Io:Ub:a} that setting $\mu_2 = P_{\max}$ for the case of
$\mu_1 >0$ does not violate the interference constraint in
\eqref{Eq:I:Const:NB}. Combining above results gives the following  IPC
feedback design: 
\begin{equation}\label{Eq:IPC:NOCB}
\mu = \l\{\begin{aligned}
&(\nu^2, P_{\max}), && \nu \geq 0,  \omega \geq 0 \\
&(0, \eta), && \nu < 0, \omega \geq 0 \\
&(P_{\max}, P_{\max}), && \omega < 0
\end{aligned}
\r.
\end{equation}
where $\eta$ and $\nu$ are given in  \eqref{Eq:Ps} and \eqref{Eq:Nu}, respectively. 
It follows that the quantized IPC feedback, denoted as $\hat{\mu}$, is given as
\begin{equation}\label{Eq:IPC:NOCB:Quant}
\hat{\mu} = \l\{\begin{aligned}
&(\hat{\nu}, P_{\max}), && \nu \geq 0, \omega \geq 0 \\
&(0, \hat{\eta}), && \nu < 0, \omega \geq 0 \\
&(P_{\max}, P_{\max}), && \omega < 0
\end{aligned}
\r.
\end{equation}
where $\hat{\nu}=\lfloor\nu^2\rfloor_{\mathcal{P}'}$ with
$\mathcal{P}'$ being a scalar quantizer codebook designed similarly as $\mathcal{P}$ discussed in
Section~\ref{Section:IPC:OCB:Quant}.  The feedback of $\hat{\mu}$ is
observed from \eqref{Eq:IPC:NOCB:Quant} to involve the transmission
of only  a single  scalar (either $\hat{\nu}$ or $\hat{\eta}$) with one
additional bit for separating the first two cases in
\eqref{Eq:IPC:NOCB:Quant}. Note that the third case can be
represented by setting $\hat{\nu}=P_{\max}$.

For the case with feedforward, the IPC feedback
is designed by applying  the constraint in \eqref{Eq:I:Const:NB} to the
upper bound on $I_{\mathsf{n}}$  in \eqref{Eq:Io:Ub:a} and following 
similar steps as discussed earlier. The resultant
quantized IPC feedback, denoted as $\check{\mu}$, is
\begin{equation}\label{Eq:IPC:NOCB:FF}
\check{\mu} = \l\{\begin{aligned}
&(\check{\nu}, P_{\max}), && \acute{\nu} \geq 0, \omega \geq 0 \\
&(0, \tilde{\eta}), && \acute{\nu} < 0, \omega \geq 0 \\
&(P_{\max}, P_{\max}), && \omega < 0
\end{aligned}
\r.
\end{equation}
where  $\check{\nu}=
\lfloor\acute{\nu}^2\rfloor_{\check{\mathcal{P}}}$ with $\acute{\nu}$
being   the unquantized IPC feedback signal
 \begin{equation}\label{Eq:Const:Alpha new}
\acute{\nu} = \frac{\sqrt{\frac{\omega}{\lambda g_{\textsf{x}}}}-\sqrt{\delta
P_{\max}}}{\sqrt{1-\epsilon}}
\end{equation}
and  $\check{\mathcal{P}}$ a suitable quantizer codebook. Note that
$\check{\mu} \geq \hat{\mu}$ since $\check{\nu}\geq \hat{\nu}$ and
$\tilde{\eta}\geq \hat{\eta}$. In other words, feedforward relaxes the constraint on the SU transmission power. 

 \subsection{Comparison between Orthogonal and Non-Orthogonal Cognitive Beamforming}
Regardless of whether feedforward exists, NOCB outperforms  OCB since NOCB relaxes the SU transmission  power constraint with respect to OCB, which can be verified by  comparing the IPC signals in \eqref{Eq:Ps} and \eqref{Eq:Ps:FF} with those  in  \eqref{Eq:IPC:NOCB}  and \eqref{Eq:IPC:NOCB:FF}, respectively. Next, the performance of OCB and NOCB converges as
$P_{\max}\rightarrow \infty$.  Let $\Pout$ and $\tPout$ denote the SU outage probabilities for OCB and NOCB, respectively. 
\begin{proposition}\label{Cor:Converge}
For large $P_{\max}$, the SU outage probabilities for OCB and NOCB converge as
\begin{equation}\label{Eq:App:i}
\lim_{P_{\max}\rightarrow \infty} \tPout = \lim_{P_{\max}\rightarrow \infty} \Pout
\end{equation}
regardless of whether feedforward is available. 
\end{proposition}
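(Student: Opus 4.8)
The plan is to show that as $P_{\max}\to\infty$, the IPC feedback signals governing NOCB converge to those governing OCB, so that the two beamformers induce the same distribution of $\SNR_{\mathsf{s}}$ in the limit. First I would isolate where the gap between OCB and NOCB comes from. For OCB, the transmit power in the case $\omega\geq 0$ (and $\eta<P_{\max}$) is $\hat\eta=\lfloor\omega/(\lambda g_{\textsf{x}}\epsilon)\rfloor_{\mathcal P}$, and the beamformer is $\sqrt{\hat\eta}\,\hat{\bs}_\perp$, so $\SNR_{\mathsf{s}}=(g_{\mathsf{s}}/\sigma^2)\,\hat\eta\,|b|^2$ with $|b|^2=1-|a|^2$. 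For NOCB, by Lemma~\ref{Lem:NOBeam} in the regime $0\le\hat\mu_1<|a|^2\hat\mu_2$ we get $|\bff_{\mathsf{n}}^\dagger\bs_{\mathsf{s}}|^2 = (|\alpha||a|+|\beta||b|)^2$ with $\alpha=a\sqrt{\hat\mu_1}$, $\beta=b\sqrt{\hat\mu_2-\hat\mu_1}$; in the regime $\hat\mu_1\ge|a|^2\hat\mu_2$ it is simply $\hat\mu_2$. The essential observation is that in \eqref{Eq:IPC:NOCB:Quant} the nonzero-$\nu$ branch uses $\hat\mu_2=P_{\max}$ together with $\hat\mu_1=\hat\nu=\lfloor\nu^2\rfloor_{\mathcal P'}$, and from \eqref{Eq:Nu}, $\nu^2 = \big(\sqrt{\omega/(\lambda g_{\textsf{x}})}-\sqrt{\epsilon P_{\max}}\big)^2/(1-\epsilon)$, which for fixed $\omega,g_{\textsf{x}},\epsilon$ becomes negative once $P_{\max}>\omega/(\lambda g_{\textsf{x}}\epsilon)$. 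Hence, almost surely, for every fixed channel realization with $\omega\ge0$, there is a threshold $P^\star$ beyond which $\nu<0$, so NOCB falls into the middle branch of \eqref{Eq:IPC:NOCB:Quant}, namely $\hat\mu=(0,\hat\eta)$ — exactly the OCB power allocation. The feedforward variants behave identically: from \eqref{Eq:Const:Alpha new}, $\acute\nu<0$ once $P_{\max}>\omega/(\lambda g_{\textsf{x}}\delta)$, and the middle branch of \eqref{Eq:IPC:NOCB:FF} is $(0,\tilde\eta)=\acute\eta$ quantized, which is the OCB-with-feedforward allocation.

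From here the argument is a routine limiting statement on the outage probability. Write $\Pout = \Pr(\SNR_{\mathsf{s}}\le\theta_{\mathsf{p}})$ and $\tPout$ similarly, both as functions of $P_{\max}$. Condition on the event $\{\omega<0\}$ — i.e. $\gamma_{\mathsf{p}}g_{\mathsf{p}}<\theta_{\mathsf{p}}$ — on which both schemes transmit at $P_{\max}$; since $P_{\max}\to\infty$, on this event $\SNR_{\mathsf{s}}\to\infty$ for both and so the conditional outage probability tends to $0$ for both, contributing equally. On the complementary event $\{\omega\ge0\}$, I would use the pointwise convergence just established: define, for each realization $(g_{\textsf{x}},g_{\mathsf{p}},\epsilon,\delta,a,b)$ with $\omega\ge0$, the indicator $\mathds{1}\{\SNR_{\mathsf{s}}\le\theta_{\mathsf{p}}\}$ under each scheme; for $P_{\max}$ large enough (depending on the realization) the two indicators agree. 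Then dominated convergence (the indicators are bounded by $1$, which is integrable) gives $\lim_{P_{\max}\to\infty}\big(\tPout-\Pout\big)=\mathbb{E}\big[\lim_{P_{\max}\to\infty}(\mathds{1}_{\mathrm{NOCB}}-\mathds{1}_{\mathrm{OCB}})\big]=0$, provided each limit on the right exists. The same computation handles the feedforward case verbatim after replacing $(\eta,\hat\eta,\nu)$ by $(\acute\eta,\tilde\eta,\acute\nu)$.

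The one point requiring care — and the main obstacle — is that I want not merely that the \emph{difference} of the two indicators vanishes, but that each side individually has a limit, so that \eqref{Eq:App:i} reads as equality of two limits rather than as $\lim(\tPout-\Pout)=0$. This forces me to check that $\Pout$ itself converges as $P_{\max}\to\infty$: on $\{\omega\ge0\}$, $\hat\eta=\lfloor\omega/(\lambda g_{\textsf{x}}\epsilon)\rfloor_{\mathcal P}$ as soon as $P_{\max}>\omega/(\lambda g_{\textsf{x}}\epsilon)$, so the integrand stabilizes pointwise; but one must also confirm the quantizer codebook $\mathcal P$ is held fixed (or itself converges) as $P_{\max}$ grows — per Section~\ref{Section:IPC:OCB:Quant}, $\mathcal P$ is defined by the equal-probability partition of the law of $\eta$ conditioned on $\gamma_{\mathsf{p}}g_{\mathsf{p}}\ge\theta_{\mathsf{p}}$, which does not depend on $P_{\max}$, so this is fine. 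A subtle edge case is the $P_{\max}$-cap inside \eqref{Eq:QIPC}: when $\eta\ge P_{\max}$ OCB uses $\hat\eta=P_{\max}$, but the set $\{\eta\ge P_{\max}\}=\{\omega/(\lambda g_{\textsf{x}}\epsilon)\ge P_{\max}\}$ has probability vanishing to $0$ as $P_{\max}\to\infty$ (since $\omega,g_{\textsf{x}},\epsilon$ are finite a.s.), so it does not affect the limit — and on its complement OCB and NOCB coincide as shown. Assembling these pieces yields \eqref{Eq:App:i}.
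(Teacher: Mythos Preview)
Your proposal is correct and follows essentially the same route as the paper: both hinge on the observation that, conditioned on $\omega\geq 0$, one has $\nu<0$ (equivalently $\hat\mu_1=0$) whenever $P_{\max}>\omega/(\lambda g_{\textsf{x}}\epsilon)$, so NOCB collapses to OCB in the limit, while on $\{\omega<0\}$ both schemes transmit at $P_{\max}$ and have vanishing conditional outage. The paper packages this via a direct total-probability decomposition (its equations \eqref{Eq:App:g}--\eqref{Eq:App:k}), whereas you phrase it as pointwise convergence of indicators plus dominated convergence and add some welcome care about the codebook $\mathcal{P}$ and the existence of the individual limits; the substance is the same.
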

\begin{proof}
See Appendix~\ref{App:Converge}.
\end{proof}
The above discussion is consistent with  simulation results in
Fig.~\ref{Fig:OptimBeam}.

\subsection{The Effect of Quantizing  Local Feedback and Feedforward}

In practice, the local feedback and feedforward of  $\bs_{\mathsf{s}}$ must be quantized like the cooperative feedback signals. Let $\hat{\bs}_{\mathsf{s}}$ denote the quantized version of $\bs_{\mathsf{s}}$. The corresponding cognitive beamforming and cooperative feedback algorithms can be modified from  those in the  preceding sections by replacing $\bs_{\mathsf{s}}$ with $\hat{\bs}_{\mathsf{s}}$. The error in the feedback/feedforward of $\bs_{\mathsf{s}}$ at most causes a loss  on the received SNR at $\mathsf{R_{\mathsf{s}}}$ without  affecting  the primary link performance, which does not change the fundamental results of this work. Note that  extensive work has been carried out on quantifying the performance  loss of beamforming systems caused by  local feedback quantization  (see e.g., \cite{Zhou:QuantifyPowrLossTxBeamFiniteRateFb:2005, LovHeaETAL:GrasBeamMultMult:Oct:03,MukSabETAL:BeamFiniRateFeed:Oct:03}). Furthermore, simulation results presented in Fig.~\ref{Fig:QuantFeedforward} confirm that the quantization of $\bs_{\mathsf{s}}$ has insignificant effect on the SU outage probability, justifying  the current assumption of perfect local feedback and feedforward.

\section{Outage Probability}\label{Section:Outage:Prob}
The CDI typically requires more feedback bits than the IPC signal  since the former
is  an   $L\times 1$ complex vector and the latter is a real
scalar. For this reason, assuming perfect IPC feedback, this section
focuses on quantifying  the effects of CDI quantization on the SU
outage probability for OCB. Similar  analysis  for
NOCB is  complicated with little new insight and
hence omitted.

\subsection{Orthogonal  Cognitive Beamforming without Feedforward}
The outage  probability depends on the distribution of the SU transmission power $P_{\mathsf{s}}$, which is given in the
following lemma.

\begin{lemma}\label{Lem:Ps}
For OCB without feedforward, the distribution of $P_{\mathsf{s}}$ is
given as
\begin{eqnarray}
\Pr(P_{\mathsf{s}} = P_{\max}) &\!=\!& 1- e^{-\frac{\theta_{\mathsf{p}}}{\gamma_{\mathsf{p}}}} \l[\frac{(L-1)\theta_{\mathsf{p}}\lambda \gamma_{\max}}{\gamma_{\mathsf{p}}}\r]2^{-\frac{B}{L-1}}+O\l(2^{-\frac{2B}{L-1}}\r)\label{Eq:P1}\\
\Pr(P_{\mathsf{s}} < \tau)&\!=\!& e^{-\frac{\theta_{\mathsf{p}}}{\gamma_{\mathsf{p}}}}
\l[\frac{(L-1)\theta_{\mathsf{p}}\lambda\tau}{\gamma_{\mathsf{p}}\sigma^2}\r]2^{-\frac{B}{L-1}}+O\l(2^{-\frac{2B}{L-1}}\r),
~\forall ~0\leq\tau\leq P_{\max}\label{Eq:P2}
\end{eqnarray}
where $\gamma_{\max} = P_{\max}/\sigma^2$.
\end{lemma}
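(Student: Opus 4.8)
The plan is to reduce both parts to a single quantity, $F(t):=\Pr(\omega\ge 0,\ \eta<t)$, where $\omega$ is as in \eqref{Eq:Omega} and $\eta=\omega/(\lambda g_{\textsf{x}}\epsilon)$, and then to evaluate $F$ at $t=\tau$ and at $t=P_{\max}$. First I would note that by \eqref{Eq:Beam:ZF} the SU power is $P_{\mathsf{s}}=\|\bff_{\mathsf{o}}\|^{2}=\hat\eta$, and that combining \eqref{Eq:Ps} with \eqref{Eq:QIPC} shows $P_{\mathsf{s}}=P_{\max}$ \emph{exactly} on $\{\omega<0\}\cup\{\omega\ge0,\ \eta\ge P_{\max}\}$ while $P_{\mathsf{s}}=\lfloor\eta\rfloor_{\mathcal{P}}\le\eta$ on the complement. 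Since $\omega\ge0$ is equivalent to $g_{\mathsf{p}}\ge\theta_{\mathsf{p}}/\gamma_{\mathsf{p}}$ and $g_{\mathsf{p}}\sim\exp(1)$ is independent of $(g_{\textsf{x}},\epsilon)$, we get $\Pr(\omega\ge0)=e^{-\theta_{\mathsf{p}}/\gamma_{\mathsf{p}}}$; hence $\Pr(P_{\mathsf{s}}=P_{\max})=1-F(P_{\max})$ holds exactly (the IPC codebook never enters this event), and $\Pr(P_{\mathsf{s}}<\tau)\ge F(\tau)$ for every $\tau\le P_{\max}$ because $\hat\eta\le\eta$. So everything reduces to computing $F$, together with a check on the IPC quantizer for the equality in \eqref{Eq:P2}.

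To compute $F(t)$ I would condition on $(g_{\textsf{x}},\epsilon)$, which are independent of $g_{\mathsf{p}}$. By \eqref{Eq:Omega}, the event $\{\omega\ge0,\ \eta<t\}$ is precisely $\theta_{\mathsf{p}}/\gamma_{\mathsf{p}}\le g_{\mathsf{p}}<(\theta_{\mathsf{p}}/\gamma_{\mathsf{p}})\bigl(1+t\lambda g_{\textsf{x}}\epsilon/\sigma^{2}\bigr)$, whose probability under $g_{\mathsf{p}}\sim\exp(1)$ equals $e^{-\theta_{\mathsf{p}}/\gamma_{\mathsf{p}}}\bigl(1-e^{-\theta_{\mathsf{p}} t\lambda g_{\textsf{x}}\epsilon/(\gamma_{\mathsf{p}}\sigma^{2})}\bigr)$. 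Since $\epsilon\le 2^{-B/(L-1)}$, I would expand $1-e^{-x}=x+O(x^{2})$ using the elementary bound $0\le x-(1-e^{-x})\le x^{2}/2$ and take expectations, obtaining
\[
F(t)=e^{-\theta_{\mathsf{p}}/\gamma_{\mathsf{p}}}\,\frac{\theta_{\mathsf{p}}\lambda t}{\gamma_{\mathsf{p}}\sigma^{2}}\,\E[g_{\textsf{x}}]\,\E[\epsilon]+R(t),\qquad |R(t)|\le \tfrac12\Bigl(\tfrac{\theta_{\mathsf{p}}\lambda t}{\gamma_{\mathsf{p}}\sigma^{2}}\Bigr)^{2}\E[g_{\textsf{x}}^{2}]\,\E[\epsilon^{2}],
\]
uniformly for $t\in[0,P_{\max}]$. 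Then I would plug in the moments: $\E[g_{\textsf{x}}]=L$ and $\E[g_{\textsf{x}}^{2}]=L(L+1)$ for a chi-square with $L$ complex degrees of freedom, and from the distribution in \eqref{Eq:CDI:Quant}, $\E[\epsilon]=\int_{0}^{2^{-B/(L-1)}}\bigl(1-2^{B}s^{L-1}\bigr)\,ds=\tfrac{L-1}{L}\,2^{-B/(L-1)}$ and $\E[\epsilon^{2}]=\tfrac{L-1}{L+1}\,2^{-2B/(L-1)}$. This gives $F(t)=e^{-\theta_{\mathsf{p}}/\gamma_{\mathsf{p}}}\,\tfrac{(L-1)\theta_{\mathsf{p}}\lambda t}{\gamma_{\mathsf{p}}\sigma^{2}}\,2^{-B/(L-1)}+O\bigl(2^{-2B/(L-1)}\bigr)$. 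Setting $t=\tau$ yields \eqref{Eq:P2}, and setting $t=P_{\max}$ with $\gamma_{\max}=P_{\max}/\sigma^{2}$ in $\Pr(P_{\mathsf{s}}=P_{\max})=1-F(P_{\max})$ yields \eqref{Eq:P1}.

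I expect the only nonroutine point to be the bookkeeping for the equal-probability IPC quantizer in \eqref{Eq:P2}. The bound $\hat\eta\le\eta$ already gives $\Pr(P_{\mathsf{s}}<\tau)\ge F(\tau)$; for the reverse I would use that, conditionally on $\omega\ge0$, $\hat\eta$ is supported on the $N=2^{A}$ equiprobable quantile points of $\eta\mid\{\omega\ge0\}$, so $\bigl|\Pr(\hat\eta<\tau\mid\omega\ge0)-\Pr(\eta<\tau\mid\omega\ge0)\bigr|\le 1/N$, and then argue this discrepancy does not affect the leading-order term quoted in \eqref{Eq:P2}. The computations in the previous paragraph are otherwise entirely standard, and \eqref{Eq:P1} avoids the issue altogether, since $\{P_{\mathsf{s}}=P_{\max}\}$ does not depend on the codebook $\mathcal{P}$.
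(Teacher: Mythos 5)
Your computation of $F(t)=\Pr(\omega\ge 0,\ \eta<t)$ is correct and is, in substance, the paper's own argument: the paper also integrates out $g_{\mathsf{p}}$ first (giving the factor $e^{-\theta_{\mathsf{p}}/\gamma_{\mathsf{p}}}$ and an exponential in $\lambda g_{\textsf{x}}\epsilon$), then removes the small quantity to first order and is left with the mean of $\epsilon$ against the chi-square law of $g_{\textsf{x}}$; it just performs the $g_{\textsf{x}}$ integral exactly to get $\int f_\epsilon(\tau)(1+u\tau)^{-L}d\tau$ and expands in $\tau$, whereas you expand $1-e^{-x}$ first and multiply the moments $\E[g_{\textsf{x}}]\E[\epsilon]=L\cdot\frac{L-1}{L}2^{-B/(L-1)}$. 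Both give the same leading term, and your explicit remainder bound via $\E[g_{\textsf{x}}^2]\E[\epsilon^2]$ is if anything slightly more careful than the paper's formal $O(\tau^2)$ expansion. The identity $\Pr(P_{\mathsf{s}}=P_{\max})=1-F(P_{\max})$ and $\Pr(P_{\mathsf{s}}<\tau)=F(\tau)$ is also how the paper proceeds.

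The one genuine problem is your treatment of the IPC quantizer in the last paragraph. The lemma lives in Section~\ref{Section:Outage:Prob}, whose standing assumption is \emph{perfect} IPC feedback, so $P_{\mathsf{s}}=\min(\eta,P_{\max})$ and the equality \eqref{Eq:P2} is exactly $F(\tau)$ with no bookkeeping needed; your detour through $\hat\eta$ in \eqref{Eq:QIPC} is unnecessary. More importantly, the patch you propose does not work as an argument: the quantile-codebook discrepancy $\bigl|\Pr(\hat\eta<\tau\mid\omega\ge0)-\Pr(\eta<\tau\mid\omega\ge0)\bigr|\le 1/N=2^{-A}$ is of fixed order $2^{-A}$, which is \emph{not} absorbed into $O\l(2^{-2B/(L-1)}\r)$ and in general is not even small compared with the leading term $\propto 2^{-B/(L-1)}$. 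Indeed, with quantized IPC the equality in \eqref{Eq:P2} fails, and the paper accordingly proves only an upper bound with an extra term of order $2^{-A}$ (via $\Delta P$ in Lemma~\ref{Lem:Delta:P} and Lemma~\ref{Lem:Ps:QuantPower}, leading to the $\alpha 2^{-A}$ term in Proposition~\ref{Prop:Pout:QuantPwr}). So you should simply invoke the perfect-IPC assumption and delete the quantizer argument; with that reading, your proof is complete and matches the paper's.
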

\begin{proof}
See Appendix~\ref{App:Ps}.
\end{proof}
For a sanity check, from the above results,
\begin{equation}
\lim_{B\rightarrow\infty}\Pr(P_{\mathsf{s}} = P_{\max}) = 1\quad
\textrm{and}\quad \lim_{B\rightarrow\infty}\Pr(P_{\mathsf{s}}<P_{\max})=0. \nn
\end{equation}
These are consistent  with the fact that OCB with perfect CDI feedback
($B\rightarrow\infty$) nulls the interference from $\mathsf{T_{\mathsf{s}}}$ to $\mathsf{R_{\mathsf{p}}}$, allowing $\mathsf{T_{\mathsf{s}}}$ to always transmit
using the maximum power.

Next, define the effective channel power of the secondary link as
$\tilde{g}_{\mathsf{s}} = g_{\mathsf{s}}|\hat{\bff}^{\dagger}_{\mathsf{o}} \bs_{\mathsf{s}}|^2$ with
$\hat{\bff}_{\mathsf{o}}=\bff_{\mathsf{o}}/\sqrt{P_{\mathsf{s}}}$. The following result directly follows from \cite[Lemma~2]{Jindal:RethinkMIMONetwork:LinearThroughput:2008} on
zero-forcing beamforming for mobile ad hoc networks.

\begin{lemma} \label{Lem:gs}
The effective channel power $\tilde{g}_{\mathsf{s}}$  is a chi-square random
variable with $(L-1)$ complex degrees of freedom, whose probability density function is given
as
\begin{equation}
f_{\tilde{g}_{\mathsf{s}}}(\tau) = \frac{\tau^{L-2}}{\Gamma(L-1)}e^{-\tau}
\end{equation}
where $\Gamma(\cdot)$ denotes the gamma function.
\end{lemma}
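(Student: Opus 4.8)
The plan is to recognize $\tilde g_{\mathsf{s}}$ as the squared norm of the orthogonal projection of the secondary direct-channel vector $\bh_{\mathsf{s}}=\sqrt{g_{\mathsf{s}}}\,\bs_{\mathsf{s}}$ onto the $(L-1)$-dimensional subspace $\mathsf{null}(\hat{\bs}_{\textsf{x}})$, and then to apply the unitary invariance of $\bh_{\mathsf{s}}$ together with its independence from $\hat{\bs}_{\textsf{x}}$. First I would note, from \eqref{Eq:Beam:ZF} and $P_{\mathsf{s}}=\|\bff_{\mathsf{o}}\|^2=\hat{\eta}$, that $\hat{\bff}_{\mathsf{o}}=\bff_{\mathsf{o}}/\sqrt{P_{\mathsf{s}}}=\hat{\bs}_{\perp}$, so that the decomposition $\bs_{\mathsf{s}}=a\hat{\bs}_{\textsf{x}}+b\hat{\bs}_{\perp}$ gives $\hat{\bff}_{\mathsf{o}}^{\dagger}\bs_{\mathsf{s}}=b$ and hence $\tilde g_{\mathsf{s}}=g_{\mathsf{s}}|b|^{2}$. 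The key step is the identity $b\hat{\bs}_{\perp}=\bs_{\mathsf{s}}-(\hat{\bs}_{\textsf{x}}^{\dagger}\bs_{\mathsf{s}})\hat{\bs}_{\textsf{x}}=\bP\bs_{\mathsf{s}}$, where $\bP\bydef\bI-\hat{\bs}_{\textsf{x}}\hat{\bs}_{\textsf{x}}^{\dagger}$ is the orthogonal projector onto $\mathsf{null}(\hat{\bs}_{\textsf{x}})$; consequently $\tilde g_{\mathsf{s}}=g_{\mathsf{s}}\|\bP\bs_{\mathsf{s}}\|^{2}=\|\bP\bh_{\mathsf{s}}\|^{2}$. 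This re-expression is what removes the apparent circularity coming from the fact that $\hat{\bs}_{\perp}$ (and thus $b$) is itself built from $\bs_{\mathsf{s}}$.

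Next I would use independence and rotational symmetry. The quantized CDI $\hat{\bs}_{\textsf{x}}$, and therefore the projector $\bP$, is a deterministic function of $\bh_{\textsf{x}}$ and the fixed CDI codebook, hence independent of $\bh_{\mathsf{s}}$. Conditioning on $\hat{\bs}_{\textsf{x}}$, the vector $\bh_{\mathsf{s}}$ is still i.i.d.\ complex Gaussian. Choosing a unitary $\bU$ that carries $\mathsf{null}(\hat{\bs}_{\textsf{x}})$ onto the span of the first $L-1$ coordinate axes, $\bU\bh_{\mathsf{s}}$ is again i.i.d.\ $\mathcal{CN}$ and $\|\bP\bh_{\mathsf{s}}\|^{2}=\|\bU\bP\bh_{\mathsf{s}}\|^{2}=\sum_{n=1}^{L-1}\big|[\bU\bh_{\mathsf{s}}]_{n}\big|^{2}$ is a sum of $L-1$ i.i.d.\ squared-magnitude $\mathcal{CN}$ variables, i.e.\ a chi-square random variable with $L-1$ complex degrees of freedom, whose density is exactly the one stated in the lemma. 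Since this conditional distribution does not depend on the realization of $\hat{\bs}_{\textsf{x}}$, it is also the unconditional law of $\tilde g_{\mathsf{s}}$, which finishes the argument. Equivalently, this is the content of \cite[Lemma~2]{Jindal:RethinkMIMONetwork:LinearThroughput:2008} specialized to a single nulled direction, so one may alternatively just invoke that result.

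I do not expect a genuine obstacle here; the only points that deserve an explicit sentence are (i) that $\bP\bs_{\mathsf{s}}$ and $b\hat{\bs}_{\perp}$ coincide, which converts the quantity into a projection of $\bh_{\mathsf{s}}$ alone, and (ii) that the quantization map applied to $\bs_{\textsf{x}}$ introduces no statistical coupling to $\bh_{\mathsf{s}}$, so that conditioning on $\hat{\bs}_{\textsf{x}}$ leaves the law of $\bh_{\mathsf{s}}$ intact. With those in place the lemma follows by the standard chi-square/projection calculation.
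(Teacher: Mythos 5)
Your proof is correct, and it is essentially the paper's own argument: the paper gives no proof beyond invoking \cite[Lemma~2]{Jindal:RethinkMIMONetwork:LinearThroughput:2008}, and your projection/isotropy derivation (the identity $b\hat{\bs}_{\perp}=(\bI-\hat{\bs}_{\textsf{x}}\hat{\bs}_{\textsf{x}}^{\dagger})\bs_{\mathsf{s}}$ removing the apparent circularity, plus independence of $\hat{\bs}_{\textsf{x}}$ from $\bh_{\mathsf{s}}$ and unitary invariance) is exactly the standard proof of that cited fact. Nothing is missing.
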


Using Lemmas~\ref{Lem:Ps} and \ref{Lem:gs}, the main result of this
section is obtained as shown in the following theorem.
\begin{theorem}\label{Theo:Pout:PerfPwrFb}
The SU outage probability for OCB without feedforward is
\begin{equation}\label{Eq:Pout}
\Pout =1 - \frac{\Gamma\l(L-1,  \frac{\theta_{\mathsf{s}}}{\gamma_{\max}}\r)}{\Gamma(L-1)} +\phi 2^{-\frac{B}{L-1}} + O\l(2^{-\frac{2B}{L-1}}\r)
\end{equation}
where $\Gamma(\cdot, \cdot)$ denote the incomplete gamma function and 
\begin{equation}
\phi =e^{-\frac{\theta_{\mathsf{p}}}{\gamma_{\mathsf{p}}}} \frac{(L-1)\lambda\theta_{\mathsf{p}}\theta_{\mathsf{s}}\Gamma\l(L-2,  \frac{\theta_{\mathsf{s}}}{\gamma_{\max}}\r)}{\gamma_{\mathsf{p}}\Gamma(L-1)}.
\end{equation}
\end{theorem}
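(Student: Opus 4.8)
The plan is to combine Lemma~\ref{Lem:Ps} and Lemma~\ref{Lem:gs} by conditioning on the SU transmission power $P_{\mathsf{s}}$. Writing $\SNR_{\mathsf{s}} = \tilde g_{\mathsf{s}} P_{\mathsf{s}}/\sigma^2$, the outage event $\{\SNR_{\mathsf{s}}\le\theta_{\mathsf{s}}\}$ becomes $\{\tilde g_{\mathsf{s}} \le \theta_{\mathsf{s}}\sigma^2/P_{\mathsf{s}}\}$. I would split according to the two-part support of $P_{\mathsf{s}}$ given by Lemma~\ref{Lem:Ps}: an atom at $P_{\max}$ and a continuous part on $[0,P_{\max})$. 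Since no feedback of $g_{\mathsf{s}}$ is assumed, $P_{\mathsf{s}}$ is independent of $\tilde g_{\mathsf{s}}$, so
\begin{equation}
\Pout = \Pr(P_{\mathsf{s}}=P_{\max})\,F_{\tilde g_{\mathsf{s}}}\!\l(\tfrac{\theta_{\mathsf{s}}}{\gamma_{\max}}\r) + \int_0^{P_{\max}} F_{\tilde g_{\mathsf{s}}}\!\l(\tfrac{\theta_{\mathsf{s}}\sigma^2}{\tau}\r) \, d\Pr(P_{\mathsf{s}}<\tau),\nn
\end{equation}
where $F_{\tilde g_{\mathsf{s}}}$ is the CDF of $\tilde g_{\mathsf{s}}$, i.e.\ a normalized incomplete gamma function by Lemma~\ref{Lem:gs}.

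The first term is handled directly: $F_{\tilde g_{\mathsf{s}}}(\theta_{\mathsf{s}}/\gamma_{\max}) = 1 - \Gamma(L-1,\theta_{\mathsf{s}}/\gamma_{\max})/\Gamma(L-1)$, and multiplying by the expansion \eqref{Eq:P1} of $\Pr(P_{\mathsf{s}}=P_{\max})$ produces the leading term $1 - \Gamma(L-1,\theta_{\mathsf{s}}/\gamma_{\max})/\Gamma(L-1)$ plus an $O(2^{-B/(L-1)})$ correction whose coefficient I keep track of. The second (integral) term is already $O(2^{-B/(L-1)})$ because by \eqref{Eq:P2} the density $\tfrac{d}{d\tau}\Pr(P_{\mathsf{s}}<\tau)$ equals $e^{-\theta_{\mathsf{p}}/\gamma_{\mathsf{p}}}\tfrac{(L-1)\theta_{\mathsf{p}}\lambda}{\gamma_{\mathsf{p}}\sigma^2}2^{-B/(L-1)}$ (constant in $\tau$) up to $O(2^{-2B/(L-1)})$; so to the order required I only need $\int_0^{P_{\max}} F_{\tilde g_{\mathsf{s}}}(\theta_{\mathsf{s}}\sigma^2/\tau)\,d\tau$ times that constant.

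The main computational step is evaluating $\int_0^{P_{\max}} F_{\tilde g_{\mathsf{s}}}(\theta_{\mathsf{s}}\sigma^2/\tau)\,d\tau$ in closed form and then collecting it with the correction from the first term so that the combined coefficient of $2^{-B/(L-1)}$ simplifies to $\phi$. I would substitute $u = \theta_{\mathsf{s}}\sigma^2/\tau$ (so $\tau$ from $0$ to $P_{\max}$ corresponds to $u$ from $\infty$ down to $\theta_{\mathsf{s}}/\gamma_{\max}$), turning the integral into one of the form $\theta_{\mathsf{s}}\sigma^2\int_{\theta_{\mathsf{s}}/\gamma_{\max}}^{\infty} u^{-2}\bigl[1-\Gamma(L-1,u)/\Gamma(L-1)\bigr]\,du$, then integrate by parts using $\tfrac{d}{du}\Gamma(L-1,u) = -u^{L-2}e^{-u}$; the boundary terms and the resulting integral reduce to incomplete gamma functions, and the $u^{L-2}e^{-u}/u^2 = u^{L-4}e^{-u}$ integrand yields $\Gamma(L-2,\theta_{\mathsf{s}}/\gamma_{\max})$, which is exactly what appears in $\phi$. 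I expect the chief obstacle to be bookkeeping: making sure the $O(2^{-B/(L-1)})$ contribution from the atom term (which carries a $\gamma_{\max}$ factor and the full CDF) combines cleanly with the integral term (which carries $P_{\max}$-dependence through the integration limit) so that all $P_{\max}$-dependence collapses into the single $\Gamma(L-2,\theta_{\mathsf{s}}/\gamma_{\max})$ factor and the stray terms cancel; a secondary point is confirming that for $L=2$ the incomplete-gamma expressions degenerate correctly and the stated expansion still holds. The $O(2^{-2B/(L-1)})$ remainder is inherited termwise from Lemma~\ref{Lem:Ps} since $F_{\tilde g_{\mathsf{s}}}\le 1$ and the integration domain is bounded.
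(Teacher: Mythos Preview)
Your approach is correct and will produce the stated expansion, but it differs from the paper's in the choice of conditioning variable. The paper conditions on $\tilde g_{\mathsf{s}}$ rather than on $P_{\mathsf{s}}$: it writes
\[
\Pout=\int_0^{\theta_{\mathsf{s}}/\gamma_{\max}}\!\Pr(P_{\mathsf{s}}\le P_{\max})f_{\tilde g_{\mathsf{s}}}(\tau)\,d\tau
+\int_{\theta_{\mathsf{s}}/\gamma_{\max}}^{\infty}\!\Pr\!\l(P_{\mathsf{s}}\le \tfrac{\theta_{\mathsf{s}}\sigma^2}{\tau}\r)f_{\tilde g_{\mathsf{s}}}(\tau)\,d\tau.
\]
The first integral is just $F_{\tilde g_{\mathsf{s}}}(\theta_{\mathsf{s}}/\gamma_{\max})$, and in the second the factor $1/\tau$ from \eqref{Eq:P2} combines with the chi-square density $\tau^{L-2}e^{-\tau}/\Gamma(L-1)$ to yield $\tau^{L-3}e^{-\tau}$, so $\Gamma(L-2,\theta_{\mathsf{s}}/\gamma_{\max})$ appears in one line with no integration by parts and no $\gamma_{\max}$-dependent cross terms to cancel. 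Your route, conditioning on $P_{\mathsf{s}}$, forces exactly the cancellation you flagged: the $O(2^{-B/(L-1)})$ correction from the atom at $P_{\max}$ carries a factor $\gamma_{\max}\bigl[1-\Gamma(L-1,\theta_{\mathsf{s}}/\gamma_{\max})/\Gamma(L-1)\bigr]$ that is exactly offset by the boundary term from your integration by parts, leaving only the $\Gamma(L-2,\cdot)$ piece. So both work; the paper's ordering simply sidesteps the bookkeeping you anticipated. One small slip in your write-up: after integrating by parts (with $v=-u^{-1}$), the surviving integrand is $u^{-1}\cdot u^{L-2}e^{-u}=u^{L-3}e^{-u}$, not $u^{L-4}e^{-u}$; this is what actually produces $\Gamma(L-2,\theta_{\mathsf{s}}/\gamma_{\max})$.
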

\begin{proof}See Appendix~\ref{App:Outage}.
\end{proof}
The last two terms in \eqref{Eq:Pout} represent the increase of the
SU outage probability due to the feedback CDI quantization. The asymptotic
outage probabilities for large $P_{\max}$ and  $B$
are  given in the following two corollaries.

\begin{corollary}\label{Cor:Pout:HiSNR}
For large $P_{\max}$, the SU outage probability in Theorem
\ref{Theo:Pout:PerfPwrFb} converges as
\begin{eqnarray}
\lim_{P_{\max}\rightarrow\infty}\Pout &=& e^{-\frac{\theta_{\mathsf{p}}}{\gamma_{\mathsf{p}}}}\frac{(L-1)\lambda\theta_{\mathsf{p}}\theta_{\mathsf{s}}}{(L-2)\gamma_{\mathsf{p}}}2^{-\frac{B}{L-1}}+ O\l(2^{-\frac{2B}{L-1}}\r)\label{Eq:Pout:HiSNR:a}\\
 &> & e^{-\frac{\theta_{\mathsf{p}}}{\gamma_{\mathsf{p}}}}\frac{\lambda\theta_{\mathsf{p}}\theta_{\mathsf{s}}}{\gamma_{\mathsf{p}}}2^{-\frac{B}{L-1}}+ O\l(2^{-\frac{2B}{L-1}}\r)\label{Eq:Pout:HiSNR:b}.
\end{eqnarray}
\end{corollary}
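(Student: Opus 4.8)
The plan is to read off both claims from the exact expression for $\Pout$ in Theorem~\ref{Theo:Pout:PerfPwrFb} by sending $\gamma_{\max}=P_{\max}/\sigma^2\to\infty$, which drives the argument $\theta_{\mathsf{s}}/\gamma_{\max}$ of every incomplete gamma function appearing in \eqref{Eq:Pout} to zero. I would invoke the elementary limit $\Gamma(s,x)\to\Gamma(s)$ as $x\to 0^{+}$ for $\mathrm{Re}(s)>0$; more precisely $\Gamma(s)-\Gamma(s,x)=\int_{0}^{x}t^{s-1}e^{-t}\,dt=O(x^{s})$. Applied with $s=L-1$ this shows the constant term obeys $1-\Gamma(L-1,\theta_{\mathsf{s}}/\gamma_{\max})/\Gamma(L-1)=O(\gamma_{\max}^{-(L-1)})\to 0$; applied with $s=L-2$ (which requires $L\ge 3$, consistent with the $L-2$ in the denominator of the claim) it gives $\Gamma(L-2,\theta_{\mathsf{s}}/\gamma_{\max})\to\Gamma(L-2)$, hence
\[
\lim_{P_{\max}\to\infty}\phi=e^{-\theta_{\mathsf{p}}/\gamma_{\mathsf{p}}}\,\frac{(L-1)\lambda\theta_{\mathsf{p}}\theta_{\mathsf{s}}\,\Gamma(L-2)}{\gamma_{\mathsf{p}}\,\Gamma(L-1)}.
\]

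To put this in the stated form I would apply the recursion $\Gamma(L-1)=(L-2)\,\Gamma(L-2)$, so that $\Gamma(L-2)/\Gamma(L-1)=1/(L-2)$ and the limiting coefficient of $2^{-B/(L-1)}$ equals $e^{-\theta_{\mathsf{p}}/\gamma_{\mathsf{p}}}(L-1)\lambda\theta_{\mathsf{p}}\theta_{\mathsf{s}}/\big((L-2)\gamma_{\mathsf{p}}\big)$. Substituting back into \eqref{Eq:Pout}, where the constant term now vanishes, yields \eqref{Eq:Pout:HiSNR:a}. The strict inequality \eqref{Eq:Pout:HiSNR:b} is then immediate, since $(L-1)/(L-2)>1$ for $L\ge 3$ makes the limiting coefficient strictly exceed $e^{-\theta_{\mathsf{p}}/\gamma_{\mathsf{p}}}\lambda\theta_{\mathsf{p}}\theta_{\mathsf{s}}/\gamma_{\mathsf{p}}$.

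The point that needs care is the interchange of the two asymptotic regimes: Theorem~\ref{Theo:Pout:PerfPwrFb} is an expansion in $B$ for fixed $P_{\max}$, whereas here $P_{\max}\to\infty$ is taken afterwards, so I would check that the first-order coefficient $\phi$ and the implied constant in the $O(2^{-2B/(L-1)})$ remainder stay bounded as $\gamma_{\max}\to\infty$ --- the former because $\Gamma(L-2,\theta_{\mathsf{s}}/\gamma_{\max})$ is monotone increasing in $\gamma_{\max}$ and bounded above by $\Gamma(L-2)$, the latter by tracking the $\gamma_{\max}$-dependence that the proof of Theorem~\ref{Theo:Pout:PerfPwrFb} inherits from Lemmas~\ref{Lem:Ps} and \ref{Lem:gs}. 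Once that uniform boundedness is established the limit may be taken term by term, and no further difficulty arises.
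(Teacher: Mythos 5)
Your proposal is correct and follows essentially the route the paper intends: take $\gamma_{\max}\to\infty$ in \eqref{Eq:Pout}, use $\Gamma(s,x)\to\Gamma(s)$ as $x\to 0^{+}$ so the constant term vanishes and $\phi$ tends to $e^{-\theta_{\mathsf{p}}/\gamma_{\mathsf{p}}}(L-1)\lambda\theta_{\mathsf{p}}\theta_{\mathsf{s}}/\bigl((L-2)\gamma_{\mathsf{p}}\bigr)$ via $\Gamma(L-1)=(L-2)\Gamma(L-2)$, with the strict inequality from $(L-1)/(L-2)>1$. Your additional check that the remainder's implied constant stays bounded in $\gamma_{\max}$ (and the explicit note that $L\ge 3$ is needed) is a careful touch the paper leaves implicit, but it does not change the argument.
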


This result in \eqref{Eq:Pout:HiSNR:a} shows that for large $P_{\max}$,
$\Pout$ saturates at a  level that depends   on the
quality of CDI feedback because the transmission by $\mathsf{T_{\mathsf{s}}}$
contributes residual interference to $\mathsf{R_{\mathsf{p}}}$. The saturation
level of $\Pout$  in \eqref{Eq:Pout:HiSNR:a} decreases exponentially
with increasing $B$, which suppresses   the  residual interference. More details can be found in Fig.~\ref{Fig:Pout} and
the related discussion in Section \ref{Section:Simulation}.

To facilitate subsequent discussion, we refer to the range of
$P_{\max}$ where  $\Pout$ saturates as the \emph{interference
limiting regime}. From \eqref{Eq:Pout:HiSNR:b}, it can be observed that in the
interference limiting regime $\Pout$ increases with the number of
antennas $L$. The reason is that the CDI quantization error
grows with $L$ if $B$ is fixed, thus increasing the  residual
interference from $\mathsf{T_{\mathsf{s}}}$ to $\mathsf{R_{\mathsf{p}}}$. To prevent $\Pout$ from growing
with $L$ in the interference limiting regime, $B$ has to increase at
least linearly with $(L-1)$. However, $\Pout$ decreases with $L$
outside the interference limiting regime, as shown by simulation
results in Fig. \ref{Fig:Feedforward} in Section~\ref{Section:Simulation}.

\begin{corollary}\label{Cor:Pout:Perfect CDI}
For large $B$, the SU outage probability in Theorem
\ref{Theo:Pout:PerfPwrFb} converges as 
\begin{equation}
\lim_{B\rightarrow\infty}\Pout = 1 - \frac{\Gamma\l(L-1,
\frac{\theta_{\mathsf{s}}}{\gamma_{\max}}\r)}{\Gamma(L-1)}.\label{Eq:2}
\end{equation}
\end{corollary}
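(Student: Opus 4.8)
The plan is to obtain the claim directly from the asymptotic expansion of $\Pout$ established in Theorem~\ref{Theo:Pout:PerfPwrFb}, and then to sanity-check the limiting value against the idealized perfect-CDI system. Recall from \eqref{Eq:Pout} that
\[
\Pout =1 - \frac{\Gamma\!\l(L-1,\,\tfrac{\theta_{\mathsf{s}}}{\gamma_{\max}}\r)}{\Gamma(L-1)} +\phi\, 2^{-\frac{B}{L-1}} + O\!\l(2^{-\frac{2B}{L-1}}\r),
\]
where the constant $\phi$ depends only on $L,\lambda,\theta_{\mathsf{p}},\theta_{\mathsf{s}},\gamma_{\mathsf{p}},\gamma_{\max}$ and not on the feedback rate $B$. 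First I would note that $\phi$ is finite for every admissible $L\ge 2$: for $L>2$ the factor $\Gamma(L-2,\theta_{\mathsf{s}}/\gamma_{\max})$ appearing in $\phi$ is an ordinary incomplete gamma value, and for $L=2$ it equals the exponential integral $\Gamma(0,\theta_{\mathsf{s}}/\gamma_{\max})=E_1(\theta_{\mathsf{s}}/\gamma_{\max})<\infty$.

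Next I would let $B\to\infty$ term by term. Since $2^{-B/(L-1)}\to 0$ and $\phi$ is a fixed finite constant, $\phi\,2^{-B/(L-1)}\to 0$; and because the remainder in Theorem~\ref{Theo:Pout:PerfPwrFb} is $O(2^{-2B/(L-1)})$ with an implied constant that does not vary with $B$, it too tends to $0$. Only the $B$-independent term survives, giving $\lim_{B\to\infty}\Pout = 1-\Gamma(L-1,\theta_{\mathsf{s}}/\gamma_{\max})/\Gamma(L-1)$, which is precisely \eqref{Eq:2}. There is essentially no obstacle in this argument; the single point that merits a line of care — and the only thing I would spell out — is the finiteness of $\phi$ at the boundary case $L=2$, handled as above.

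Finally, as a check on both the algebra and the interpretation, I would verify that the limit matches the outage probability of the perfect-CDI system directly. When $B\to\infty$ the quantization error $\epsilon\to 0$, so by the IPC design the secondary transmitter is never constrained by the primary-outage requirement and transmits with $P_{\mathsf{s}}=P_{\max}$; then \eqref{Eq:RxSNR:Sec} gives $\SNR_{\mathsf{s}}=\gamma_{\max}\,\tilde{g}_{\mathsf{s}}$ with $\tilde{g}_{\mathsf{s}}=g_{\mathsf{s}}|\hat{\bff}^{\dagger}_{\mathsf{o}}\bs_{\mathsf{s}}|^2$, and Lemma~\ref{Lem:gs} yields
\[
\Pr(\SNR_{\mathsf{s}}\le\theta_{\mathsf{s}})=\Pr\!\l(\tilde{g}_{\mathsf{s}}\le\tfrac{\theta_{\mathsf{s}}}{\gamma_{\max}}\r)=\int_{0}^{\theta_{\mathsf{s}}/\gamma_{\max}}\frac{\tau^{L-2}}{\Gamma(L-1)}\,e^{-\tau}\,d\tau=1-\frac{\Gamma\!\l(L-1,\,\tfrac{\theta_{\mathsf{s}}}{\gamma_{\max}}\r)}{\Gamma(L-1)},
\]
which agrees with \eqref{Eq:2} and identifies the saturation value as the outage floor caused by the finite transmit power $P_{\max}$ rather than by CDI quantization.
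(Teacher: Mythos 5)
Your argument is correct and is essentially the paper's own (implicit) proof: the corollary follows directly from the expansion in Theorem~\ref{Theo:Pout:PerfPwrFb} by letting $B\rightarrow\infty$, since $\phi$ is a finite $B$-independent constant and both $B$-dependent terms vanish. Your added sanity check against the perfect-CDI system mirrors the paper's earlier remark following Lemma~\ref{Lem:Ps} that $\Pr(P_{\mathsf{s}}=P_{\max})\rightarrow 1$ as $B\rightarrow\infty$, so nothing further is needed.
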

As $B\rightarrow\infty$, both links are decoupled  and the limit of $\Pout$ in \eqref{Eq:2} decreases continuously with
$P_{\max}$. 

\subsection{Orthogonal Cognitive  Beamforming with Feedforward}\label{Section:ZFFF:Outage}
Effectively, feedforward changes the
analysis in the preceding section by replacing $\epsilon$ with
$\delta$. 
\begin{lemma}\label{Lem:PDF:EspP} The probability density function of $\delta$ is given as
\begin{equation}
f_{\delta}(\tau) =(L-1)2^{\frac{B}{L-1}}\l(1- 2^{\frac{B}{L-1}}\tau\r)^{L-2},\quad 0\leq \tau \leq 2^{-\frac{B}{L-1}}.
\end{equation}
\end{lemma}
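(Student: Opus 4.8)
The plan is to derive the density of $\delta = |\bs_{\textsf{x}}^\dagger \hat{\bs}_{\perp}|^2$ by exploiting the structure revealed in Section~\ref{Section:System}. Recall that $\epsilon = 1 - |\hat{\bs}_{\textsf{x}}^\dagger \bs_{\textsf{x}}|^2$ is the CDI quantization error with the distribution in \eqref{Eq:CDI:Quant}, and that by \eqref{Eq:Proj} we have $\sum_{n=1}^{L-1} |\bs_{\textsf{x}}^\dagger \bee_n|^2 = \epsilon$ where $\bee_1 = \hat{\bs}_{\perp}$ is one basis vector of $\mathsf{null}(\hat{\bs}_{\textsf{x}})$. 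Thus $\delta$ is the contribution of one particular coordinate out of the $L-1$ that sum to $\epsilon$. The key observation is that, conditioned on $\epsilon$, the normalized vector of projections $(\bs_{\textsf{x}}^\dagger\bee_1, \dots, \bs_{\textsf{x}}^\dagger\bee_{L-1})/\sqrt{\epsilon}$ is a unit vector in $\mathbb{C}^{L-1}$ that is isotropically distributed, independently of $\epsilon$. This is because $\bs_{\textsf{x}}$, restricted to the null space of $\hat{\bs}_{\textsf{x}}$ and renormalized, is uniform on the unit sphere of that $(L-1)$-dimensional space by the rotational symmetry of the quantization cell model together with the isotropy of the underlying channel.

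\textbf{From isotropy to a Beta law.} Writing $\delta/\epsilon = |u_1|^2$ where $u = (u_1, \dots, u_{L-1})$ is uniform on the complex unit sphere $\mathbb{S}^{2(L-1)-1}$, it is a standard fact that $|u_1|^2 \sim \mathrm{Beta}(1, L-2)$, i.e. it has density $(L-2)(1-t)^{L-3}$ on $[0,1]$, and that this ratio is independent of $\epsilon$. Hence $f_\delta(\tau) = \int f_{\delta\mid\epsilon}(\tau\mid e)\, f_\epsilon(e)\, de$, where the inner conditional density is the scaled Beta density $\frac{1}{e}(L-2)(1-\tau/e)^{L-3}$ supported on $\tau \le e$. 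The marginal density of $\epsilon$ is obtained by differentiating \eqref{Eq:CDI:Quant}: $f_\epsilon(e) = 2^B (L-1) e^{L-2}$ on $[0, 2^{-B/(L-1)}]$. I would then carry out the integral
\begin{equation}
f_\delta(\tau) = \int_\tau^{2^{-B/(L-1)}} \frac{L-2}{e}\l(1 - \frac{\tau}{e}\r)^{L-3} 2^B (L-1) e^{L-2}\, de,
\end{equation}
and the substitution $v = \tau/e$ (or direct expansion) should collapse this to the claimed form $(L-1)2^{B/(L-1)}(1 - 2^{B/(L-1)}\tau)^{L-2}$ on $0 \le \tau \le 2^{-B/(L-1)}$.

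\textbf{Alternative shortcut.} A cleaner route avoids the conditional-density bookkeeping entirely: since $\delta/\epsilon \sim \mathrm{Beta}(1,L-2)$ independent of $\epsilon$, and since \eqref{Eq:CDI:Quant} shows that $\epsilon^{L-1} \cdot 2^B$ is uniform on $[0,1]$ (equivalently $\epsilon$ has the law of $2^{-B/(L-1)} W^{1/(L-1)}$ for $W$ uniform on $[0,1]$), one recognizes $\delta$ as a product of independent Beta-type variables. Indeed, if $X \sim \mathrm{Beta}(1,L-2)$ and $Y \sim \mathrm{Beta}(L-1,1)$ are independent, then $XY \sim \mathrm{Beta}(1, L-1)$ by the standard Beta-algebra identity $\mathrm{Beta}(a, b+c)\cdot\mathrm{Beta}(a+b+\cdots, \cdot)$ — more precisely, $X\cdot Y$ where $Y$ has the distribution of $\epsilon \cdot 2^{B/(L-1)}$ gives $\delta\cdot 2^{B/(L-1)} \sim \mathrm{Beta}(1, L-1)$, whose density is $(L-1)(1-t)^{L-2}$ on $[0,1]$. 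Rescaling by $2^{-B/(L-1)}$ yields exactly the stated $f_\delta$.

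\textbf{Main obstacle.} The substantive step — the one place where care is genuinely needed — is justifying that $\delta/\epsilon$ is $\mathrm{Beta}(1, L-2)$ and \emph{independent} of $\epsilon$ under the quantization-cell model of \eqref{Eq:CDI:Quant}. The model only prescribes the marginal law of $\epsilon$; one must argue that the conditional distribution of $\bs_{\textsf{x}}$ given $\hat{\bs}_{\textsf{x}}$ and $\epsilon$ is uniform on the sphere of radius $\sqrt{\epsilon}$ about the "equatorial" direction, which relies on the rotational symmetry of the codebook cells inherited from the isotropy of the true channel direction $\bs_{\textsf{x}}$ (this is exactly the modeling assumption imported from \cite{YooJindal:FiniteRateBroadcastMUDiv:2007,Zhou:QuantifyPowrLossTxBeamFiniteRateFb:2005}). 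Once that conditional isotropy is granted, everything else is a routine Beta-function computation. I would therefore state the isotropy fact explicitly as the crux, cite the quantization model, and then present whichever of the two computations above is shorter.
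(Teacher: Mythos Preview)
Your approach is essentially the same as the paper's: the paper also writes $\delta$ as $\kappa\epsilon$ with $\kappa=|\bs_1^\dagger\bs_2|^2$ for independent isotropic unit vectors in $\mathbb{C}^{L-1}$ (citing \cite{Jindal:MIMOBroadcastFiniteRateFeedback:06} for the independence and \cite{MukSabETAL:BeamFiniRateFeed:Oct:03} for $\Pr(\kappa>\tau)=(1-\tau)^{L-2}$), then integrates out $\epsilon$. The only cosmetic differences are that the paper works with the CDF and differentiates at the end, whereas you integrate the conditional density directly; your Beta-product ``shortcut'' is a nice alternative packaging of the same computation.
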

\begin{proof} See Appendix~\ref{App:PDF:EspP}.
\end{proof}

Let $\acute{P}_{\mathsf{s}}$  represent the transmission power of $\mathsf{T_{\mathsf{s}}}$ for the
case of feedforward. 
\begin{lemma}\label{Lem:Ps:FF}
The distribution  of $\acute{P}_{\mathsf{s}}$ is given as
\begin{eqnarray}
\Pr(\acute{P}_{\mathsf{s}} = P_{\max}) &\!=\!& 1- e^{-\frac{\theta_{\mathsf{p}}}{\gamma_{\mathsf{p}}}} \l(\frac{\theta_{\mathsf{p}}\lambda \gamma_{\max}}{\gamma_{\mathsf{p}}}\r)2^{-\frac{B}{L-1}}+O\l(2^{-\frac{2B}{L-1}}\r)\label{Eq:P1:FF}\\
\Pr(\acute{P}_{\mathsf{s}} < \tau)&\!=\!& e^{-\frac{\theta_{\mathsf{p}}}{\gamma_{\mathsf{p}}}}
\l(\frac{\theta_{\mathsf{p}}\lambda\tau}{\gamma_{\mathsf{p}}\sigma^2}\r)2^{-\frac{B}{L-1}}+O\l(2^{-\frac{2B}{L-1}}\r),
~ \forall ~ 0\leq \tau\leq P_{\max}. \label{Eq:P2:FF}
\end{eqnarray}
\end{lemma}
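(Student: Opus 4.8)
The plan is to mirror the proof of Lemma~\ref{Lem:Ps}, substituting the quantization error $\epsilon$ by $\delta$ and feeding in the density of $\delta$ from Lemma~\ref{Lem:PDF:EspP}; the tail of $\delta$ turns out to have a cleaner form than that of $\epsilon$, which actually shortens the computation. First I would record that, under the perfect-IPC-feedback assumption of this section, the scalar IPC quantizer is taken infinitely fine, so that $\acute{P}_{\mathsf{s}}=\min(\acute{\eta},P_{\max})$ with $\acute{\eta}$ given by \eqref{Eq:Ps:FF}. Hence, for $0\le\tau\le P_{\max}$,
\[
\Pr(\acute{P}_{\mathsf{s}}<\tau)=\Pr\!\left(\omega\ge0,\ \acute{\eta}<\tau\right).
\]
Using \eqref{Eq:Omega}, the event $\{\omega\ge0\}$ equals $\{g_{\mathsf{p}}\ge\theta_{\mathsf{p}}/\gamma_{\mathsf{p}}\}$, and on this event $\{\acute{\eta}<\tau\}=\{\delta>\omega/(\tau\lambda g_{\textsf{x}})\}$ for $\tau>0$. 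Since the channel shape is independent of the channel gain and $g_{\mathsf{p}}$ is independent of the secondary channels, the triple $(g_{\mathsf{p}},g_{\textsf{x}},\delta)$ is mutually independent, so this probability can be written as a nested expectation that I would evaluate from the inside out.

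Next I would integrate out $\delta$, then $g_{\textsf{x}}$. Integrating the density in Lemma~\ref{Lem:PDF:EspP} gives $\Pr(\delta>x)=(1-2^{B/(L-1)}x)^{L-1}$ for $0\le x\le2^{-B/(L-1)}$ and $0$ otherwise; substituting $x=\omega/(\tau\lambda g_{\textsf{x}})$ forces $g_{\textsf{x}}\ge a$, where $a:=2^{B/(L-1)}\omega/(\tau\lambda)\ge0$. With the chi-square density $f_{g_{\textsf{x}}}(u)=u^{L-1}e^{-u}/\Gamma(L)$, the substitution $u=a+s$ collapses the inner integral to the exact identity
\[
\int_a^{\infty}\left(1-\frac{a}{u}\right)^{L-1}\frac{u^{L-1}e^{-u}}{\Gamma(L)}\,du=e^{-a},
\]
because $(1-a/(a+s))^{L-1}(a+s)^{L-1}=s^{L-1}$ and $\int_0^\infty s^{L-1}e^{-s}\,ds=\Gamma(L)$.

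It then remains to average $e^{-a}$ over $g_{\mathsf{p}}$ restricted to $g_{\mathsf{p}}\ge\theta_{\mathsf{p}}/\gamma_{\mathsf{p}}$. Writing $a=\frac{\sigma^2\,2^{B/(L-1)}}{\tau\lambda}\left(\frac{\gamma_{\mathsf{p}}g_{\mathsf{p}}}{\theta_{\mathsf{p}}}-1\right)$ and shifting $g_{\mathsf{p}}\mapsto\theta_{\mathsf{p}}/\gamma_{\mathsf{p}}+w$ against the $\exp(1)$ density yields the closed form
\[
\Pr(\acute{P}_{\mathsf{s}}<\tau)=\frac{e^{-\theta_{\mathsf{p}}/\gamma_{\mathsf{p}}}}{1+\frac{\sigma^2\gamma_{\mathsf{p}}}{\tau\lambda\theta_{\mathsf{p}}}\,2^{B/(L-1)}},\qquad 0<\tau\le P_{\max}.
\]
Expanding $(1+\kappa)^{-1}=\kappa^{-1}+O(\kappa^{-2})$ for $\kappa=\frac{\sigma^2\gamma_{\mathsf{p}}}{\tau\lambda\theta_{\mathsf{p}}}2^{B/(L-1)}\to\infty$ gives \eqref{Eq:P2:FF}; putting $\tau=P_{\max}$, using $P_{\max}/\sigma^2=\gamma_{\max}$, and using $\acute{P}_{\mathsf{s}}\le P_{\max}$ almost surely gives $\Pr(\acute{P}_{\mathsf{s}}=P_{\max})=1-\Pr(\acute{P}_{\mathsf{s}}<P_{\max})$, i.e.\ \eqref{Eq:P1:FF}.

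The only genuinely delicate step is spotting the exact Gamma-type identity $\int_a^{\infty}(1-a/u)^{L-1}u^{L-1}e^{-u}\,du=\Gamma(L)e^{-a}$ in the second stage; without it one is left with an incomplete-gamma expression to expand asymptotically, which still works but is messier and loses the clean exact formula for $\Pr(\acute{P}_{\mathsf{s}}<\tau)$. The remaining points to watch are the careful handling of the support-truncation indicator for $\delta$ (so that the lower limit $a$ appears correctly in the $g_{\textsf{x}}$-integral) and the justification that $\delta$ is independent of $g_{\textsf{x}}$ (shape versus magnitude of $\bh_{\textsf{x}}$).
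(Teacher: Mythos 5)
Your proposal is correct and it reaches \eqref{Eq:P1:FF}--\eqref{Eq:P2:FF}, but by a noticeably different computation than the paper's, even though the high-level plan (repeat Lemma~\ref{Lem:Ps} with $\epsilon$ replaced by $\delta$) is the same. The paper reuses the template of Appendix~\ref{App:Ps} verbatim: starting from the analogue \eqref{Eq:App:r} of \eqref{Eq:P1:a}, it integrates out $g_{\mathsf{p}}$ and then $g_{\textsf{x}}$ to reach the integral $e^{-\theta_{\mathsf{p}}/\gamma_{\mathsf{p}}}\int_0^{2^{-B/(L-1)}} f_\delta(\tau)(1+u\tau)^{-L}\,d\tau$ (i.e.\ \eqref{Eq:App:q} with $f_\epsilon$ replaced by $f_\delta$), then expands $(1+u\tau)^{-L}=1-Lu\tau+O(\tau^2)$ under the integral and evaluates the first-order term via Lemma~\ref{Lem:PDF:EspP} and the beta-function identity $\mathcal{B}(2,L-1)=\Gamma(2)\Gamma(L-1)/\Gamma(L+1)$. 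You integrate in the reverse order: the ccdf $\Pr(\delta>x)=(1-2^{B/(L-1)}x)^{L-1}$ from Lemma~\ref{Lem:PDF:EspP}, the exact identity $\int_a^\infty(1-a/u)^{L-1}u^{L-1}e^{-u}\,du=\Gamma(L)e^{-a}$ for the $g_{\textsf{x}}$ average, and a one-line exponential integral over $g_{\mathsf{p}}$, which produces the exact closed form $\Pr(\acute{P}_{\mathsf{s}}<\tau)=e^{-\theta_{\mathsf{p}}/\gamma_{\mathsf{p}}}\big/\big(1+\tfrac{\sigma^2\gamma_{\mathsf{p}}}{\tau\lambda\theta_{\mathsf{p}}}2^{\frac{B}{L-1}}\big)$ before any asymptotics. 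The two routes agree exactly, not merely to first order: after the substitution $\tau=t\,2^{-B/(L-1)}$ the paper's integral is $\int_0^1(L-1)(1-t)^{L-2}(1+vt)^{-L}dt$ with $v=u2^{-B/(L-1)}$, which has antiderivative $-\tfrac{1}{1+v}\big(\tfrac{1-t}{1+vt}\big)^{L-1}$ and hence equals $\tfrac{1}{1+v}$; its complement on $\{\omega\ge 0\}$ is precisely your closed form at $\tau=P_{\max}$. What your route buys is this clean exact expression and an $O\big(2^{-2B/(L-1)}\big)$ remainder that is manifestly uniform in $\tau\le P_{\max}$; what the paper's route buys is a mechanical repetition of the Lemma~\ref{Lem:Ps} steps. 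Your preliminary reductions --- $\acute{P}_{\mathsf{s}}=\min(\acute{\eta},P_{\max})$ under the perfect-IPC assumption, the mutual independence of $(g_{\mathsf{p}},g_{\textsf{x}},\delta)$ (shape versus gain of $\bh_{\textsf{x}}$, plus independence of $\bs_{\mathsf{s}}$), and the support truncation forcing $g_{\textsf{x}}\ge a$ --- are all sound and consistent with how the paper sets up its proof.
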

\begin{proof}See Appendix~\ref{App:Ps:FF}.
\end{proof}

The following theorem is proved using Lemma~\ref{Lem:Ps:FF} and
following the same procedure as 
Theorem~\ref{Theo:Pout:PerfPwrFb}.

\begin{theorem}\label{Theo:Pout:PerfPwrFb:FF}
For the case of OCB with feedforward, the SU outage probability is
\begin{equation}\label{Eq:Pout:FF}
\acute{P}_{\mathsf{out}} =1 - \frac{\Gamma\l(L-1,  \frac{\theta_{\mathsf{s}}}{\gamma_{\max}}\r)}{\Gamma(L-1)} +\frac{\phi}{L-1} 2^{-\frac{B}{L-1}} + O\l(2^{-\frac{2B}{L-1}}\r)
\end{equation}
where $\phi$ is given in Theorem~\ref{Theo:Pout:PerfPwrFb}.
\end{theorem}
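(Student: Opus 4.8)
The plan is to reproduce the argument behind Theorem~\ref{Theo:Pout:PerfPwrFb} essentially verbatim, with Lemma~\ref{Lem:Ps:FF} playing the role of Lemma~\ref{Lem:Ps}; as the statement and the text preceding it indicate, feedforward affects the analysis only by replacing $\epsilon$ with $\delta$, which is why the transmit-power distribution in Lemma~\ref{Lem:Ps:FF} differs from that in Lemma~\ref{Lem:Ps} precisely by the loss of the factor $(L-1)$. First I would write, using $\bff_{\mathsf{o}} = \sqrt{\acute{P}_{\mathsf{s}}}\,\hat{\bff}_{\mathsf{o}}$ and $\tilde{g}_{\mathsf{s}} = g_{\mathsf{s}}|\hat{\bff}_{\mathsf{o}}^{\dagger}\bs_{\mathsf{s}}|^2$, that $\SNR_{\mathsf{s}} = \acute{P}_{\mathsf{s}}\tilde{g}_{\mathsf{s}}/\sigma^2$, so $\acute{P}_{\mathsf{out}} = \Pr\bigl(\acute{P}_{\mathsf{s}}\tilde{g}_{\mathsf{s}} \le \theta_{\mathsf{s}}\sigma^2\bigr)$. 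Two facts make this tractable: (i) $\tilde{g}_{\mathsf{s}}$ is chi-square with $L-1$ complex degrees of freedom (Lemma~\ref{Lem:gs}), hence $\Pr(\tilde{g}_{\mathsf{s}} \le x) = 1 - \Gamma(L-1,x)/\Gamma(L-1)$; and (ii) $\tilde{g}_{\mathsf{s}}$ is independent of $\acute{P}_{\mathsf{s}}$. Fact (ii) is the one point that genuinely differs from the no-feedforward case and is worth spelling out: conditioning on $(\bh_{\textsf{x}},g_{\mathsf{p}})$ and on the CDI quantization fixes $\hat{\bs}_{\textsf{x}}$, after which $\tilde{g}_{\mathsf{s}} = g_{\mathsf{s}}\bigl(1-|\hat{\bs}_{\textsf{x}}^{\dagger}\bs_{\mathsf{s}}|^2\bigr)$ depends on $\bs_{\mathsf{s}}$ only through the \emph{magnitude} of its component along $\hat{\bs}_{\textsf{x}}$, whereas $\acute{P}_{\mathsf{s}}$ depends on $\bs_{\mathsf{s}}$ only through the \emph{direction} of $\hat{\bs}_{\perp}$ inside $\mathsf{null}(\hat{\bs}_{\textsf{x}})$ (via $\delta = |\bs_{\textsf{x}}^{\dagger}\hat{\bs}_{\perp}|^2$); by the isotropy of $\bs_{\mathsf{s}}$ these two are conditionally independent, and $\tilde{g}_{\mathsf{s}}$ has the chi-square$(L-1)$ law irrespective of the conditioning, which gives unconditional independence as well.

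Next I would split on the events $\{\acute{P}_{\mathsf{s}} = P_{\max}\}$ and $\{\acute{P}_{\mathsf{s}} < P_{\max}\}$ and invoke Lemma~\ref{Lem:Ps:FF}. The first event contributes $\Pr(\acute{P}_{\mathsf{s}} = P_{\max})\bigl[1 - \Gamma(L-1,\theta_{\mathsf{s}}/\gamma_{\max})/\Gamma(L-1)\bigr]$ with $\Pr(\acute{P}_{\mathsf{s}} = P_{\max}) = 1 - c\,\sigma^2\gamma_{\max} + O(2^{-2B/(L-1)})$, where I abbreviate $c := e^{-\theta_{\mathsf{p}}/\gamma_{\mathsf{p}}}\theta_{\mathsf{p}}\lambda/(\gamma_{\mathsf{p}}\sigma^2)\cdot 2^{-B/(L-1)}$. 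The second event contributes $\int_0^{P_{\max}} \bigl[1 - \Gamma(L-1,\theta_{\mathsf{s}}\sigma^2/\tau)/\Gamma(L-1)\bigr]\,dF_{\acute{P}_{\mathsf{s}}}(\tau)$; since Lemma~\ref{Lem:Ps:FF} gives $F_{\acute{P}_{\mathsf{s}}}$ as linear on $(0,P_{\max})$ with slope $c$ up to a uniform $O(2^{-2B/(L-1)})$ error, this integral equals $c\int_0^{P_{\max}}\bigl[1-\Gamma(L-1,\theta_{\mathsf{s}}\sigma^2/\tau)/\Gamma(L-1)\bigr]d\tau + O(2^{-2B/(L-1)})$.

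The remaining deterministic integral I would evaluate by the change of variable $u = \theta_{\mathsf{s}}\sigma^2/\tau$ followed by a single integration by parts (differentiate the lower incomplete gamma, integrate $u^{-2}$), giving $\sigma^2\gamma_{\max}\bigl[1-\Gamma(L-1,\theta_{\mathsf{s}}/\gamma_{\max})/\Gamma(L-1)\bigr] + \theta_{\mathsf{s}}\sigma^2\,\Gamma(L-2,\theta_{\mathsf{s}}/\gamma_{\max})/\Gamma(L-1)$. Substituting back, the contribution $c\,\sigma^2\gamma_{\max}\bigl[1-\Gamma(L-1,\theta_{\mathsf{s}}/\gamma_{\max})/\Gamma(L-1)\bigr]$ from the second event cancels exactly the first-order correction to $\Pr(\acute{P}_{\mathsf{s}} = P_{\max})$ in the first event; what survives is $1 - \Gamma(L-1,\theta_{\mathsf{s}}/\gamma_{\max})/\Gamma(L-1) + c\,\theta_{\mathsf{s}}\sigma^2\,\Gamma(L-2,\theta_{\mathsf{s}}/\gamma_{\max})/\Gamma(L-1) + O(2^{-2B/(L-1)})$. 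Finally, $c\,\theta_{\mathsf{s}}\sigma^2\,\Gamma(L-2,\theta_{\mathsf{s}}/\gamma_{\max})/\Gamma(L-1) = e^{-\theta_{\mathsf{p}}/\gamma_{\mathsf{p}}}\theta_{\mathsf{p}}\lambda\theta_{\mathsf{s}}\,\Gamma(L-2,\theta_{\mathsf{s}}/\gamma_{\max})/(\gamma_{\mathsf{p}}\Gamma(L-1))\cdot 2^{-B/(L-1)} = \bigl(\phi/(L-1)\bigr)2^{-B/(L-1)}$ with $\phi$ as in Theorem~\ref{Theo:Pout:PerfPwrFb}, which is exactly \eqref{Eq:Pout:FF}.

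The arithmetic above is routine once Lemma~\ref{Lem:Ps:FF} is in hand, so the only places demanding real care are: the independence claim (ii) --- because, unlike in the no-feedforward case, $\acute{P}_{\mathsf{s}}$ now depends on $\bs_{\mathsf{s}}$ through $\delta$, and one must separate the direction of $\bs_{\mathsf{s}}$ within $\mathsf{null}(\hat{\bs}_{\textsf{x}})$ from its magnitude along $\hat{\bs}_{\textsf{x}}$ in order to recover independence --- and the verification that the $O(2^{-2B/(L-1)})$ remainders in Lemma~\ref{Lem:Ps:FF} are uniform in $\tau$, so that they may be integrated against $dF_{\acute{P}_{\mathsf{s}}}$ without inflating the stated error order. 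Neither is a serious difficulty given the tools already established.
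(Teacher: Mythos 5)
Your proposal is correct and takes essentially the same route as the paper: the paper's own proof of this theorem consists precisely of invoking Lemma~\ref{Lem:Ps:FF} and repeating the procedure of Theorem~\ref{Theo:Pout:PerfPwrFb}, which is what you do (your decomposition over $\{\acute{P}_{\mathsf{s}}=P_{\max}\}$ and $\{\acute{P}_{\mathsf{s}}<P_{\max}\}$ is just the paper's double integral over the distributions of $\tilde{g}_{\mathsf{s}}$ and the transmit power carried out in the opposite order). Your explicit argument for the independence of $\tilde{g}_{\mathsf{s}}$ and $\acute{P}_{\mathsf{s}}$ (separating the norm of the null-space component of $\bh_{\mathsf{s}}$ from the direction $\hat{\bs}_{\perp}$) addresses a point the paper leaves implicit and is a sound addition rather than a deviation.
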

By comparing Theorems~\ref{Theo:Pout:PerfPwrFb} and
\ref{Theo:Pout:PerfPwrFb:FF}, it can be observed that feedforward
reduces the increment of the outage probability due to feedback-CDI
quantization by a factor of $(L-1)$. Thus, the outage probability
reduction with feedforward is more significant for larger  $L$  as confirmed by simulation results (see Fig.~\ref{Fig:Feedforward} in Section~\ref{Section:Simulation}).

\section{Tradeoff between IPC and CDI Feedback}\label{Section:Tradeoff}
In this section, we consider both quantized CDI and IPC feedback.
Using results derived in the preceding section and under a sum
feedback rate constraint, the optimal  allocation of bits to the IPC and
CDI feedback is derived for  OCB.

First, consider OCB without feedforward. Let $\hat{P}_{\mathsf{s}}$ denote the transmission power of $\mathsf{T_{\mathsf{s}}}$. 
The loss on $\hat{P}_{\mathsf{s}}$ due to the IPC feedback quantization is bounded
by a function of the number of IPC feedback bits $A$. Define the
index $1\leq n_0\leq (2^A-1)$ such that $p_{n_0-1} \leq P_{\max}\leq
p_{n_0}$ where $p_n \in \mathcal{P}$. Then the IPC
power loss $(P_{\mathsf{s}}-\hat{P}_{\mathsf{s}})$ can be upper bounded by $\Delta P$ defined as:
\begin{equation}\label{Eq:Delta:P}
\Delta P = \max_{1\leq n \leq n_0}(p_n - p_{n-1}).
\end{equation}

\begin{lemma} \label{Lem:Delta:P} $\Delta P$ defined in \eqref{Eq:Delta:P} is given by
\begin{equation}
\Delta P = \frac{\gamma_{\mathsf{p}}\sigma^2}{(L-1)\theta_{\mathsf{p}}\lambda}2^{\frac{B}{L-1}-A} + O\l(2^{-\frac{B}{L-1}}\r).
\end{equation}
\end{lemma}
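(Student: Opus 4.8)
The goal is to pin down the maximum spacing $\Delta P = \max_{1 \le n \le n_0}(p_n - p_{n-1})$ of the equal-probability IPC quantizer codebook $\mathcal{P}$ restricted to the range $[0, P_{\max}]$. The plan is to use the definition of $\mathcal{P}$ in \eqref{Eq:IPCCbk}, namely that consecutive cells carry probability mass $1/N$ under the conditional law of $\eta$ given $\gamma_{\mathsf{p}} g_{\mathsf{p}} \ge \theta_{\mathsf{p}}$, and to convert this probability statement into a statement about the gaps $p_n - p_{n-1}$ via the conditional density of $\eta$. Concretely, if $f(\cdot)$ denotes the conditional PDF of $\eta$ and $p_{n-1}, p_n$ are consecutive codepoints with $p_n \le P_{\max}$, then $\int_{p_{n-1}}^{p_n} f(x)\,dx = 1/N = 2^{-A}$, so by the mean value theorem $p_n - p_{n-1} = 2^{-A}/f(\xi_n)$ for some $\xi_n \in (p_{n-1}, p_n)$. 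Hence $\Delta P$ is controlled by $2^{-A}/\min_{x \in [0,P_{\max}]} f(x)$, and I expect the minimum of $f$ on $[0, P_{\max}]$ to be attained near $x = P_{\max}$ (since $\eta = \omega/(\lambda g_{\textsf{x}} \epsilon)$ tends to put more mass near $0$ and the density decays as $x$ grows), so the worst gap is the last one below $P_{\max}$.

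**Computing the conditional density of $\eta$.** The substance of the argument is to get the asymptotics of $f(x)$ for $x$ in a fixed compact set as $B \to \infty$. From \eqref{Eq:Ps}, on the event $\omega \ge 0$ we have $\eta = \omega/(\lambda g_{\textsf{x}} \epsilon)$ with $\omega = \sigma^2(\gamma_{\mathsf{p}} g_{\mathsf{p}}/\theta_{\mathsf{p}} - 1)$, where $g_{\mathsf{p}} \sim \exp(1)$, $g_{\textsf{x}}$ is $\chi^2$ with $L$ complex degrees of freedom, and $\epsilon$ has the CDF in \eqref{Eq:CDI:Quant} — supported on $[0, 2^{-B/(L-1)}]$ and concentrating at $0$ as $B$ grows. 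The key observation is that for large $B$, $\epsilon$ is of order $2^{-B/(L-1)}$, so $\eta = \omega/(\lambda g_{\textsf{x}}\epsilon)$ is typically large (of order $2^{B/(L-1)}$); the event $\{\eta \le x\}$ for fixed $x$ is therefore a rare event, governed by $\epsilon$ being atypically large, i.e. $\epsilon \ge \omega/(\lambda g_{\textsf{x}} x)$. This is exactly the computation already carried out in the proof of Lemma~\ref{Lem:Ps} (it underlies \eqref{Eq:P2}), which gives $\Pr(P_{\mathsf{s}} < \tau \mid \cdot)$, and differentiating in $\tau$ yields the conditional density of $\eta$ (equivalently of the unquantized power) to leading order: $f(x) = e^{-\theta_{\mathsf{p}}/\gamma_{\mathsf{p}}} \frac{(L-1)\theta_{\mathsf{p}}\lambda}{\gamma_{\mathsf{p}}\sigma^2} 2^{-B/(L-1)} + O(2^{-2B/(L-1)})$ — crucially, to leading order this is a constant in $x$ on $[0, P_{\max}]$. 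So the density is (nearly) uniform over the relevant range, with height proportional to $2^{-B/(L-1)}$.

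**Assembling the bound.** Plugging the leading-order density into $p_n - p_{n-1} = 2^{-A}/f(\xi_n)$ gives, uniformly over $1 \le n \le n_0$,
\begin{equation}
p_n - p_{n-1} = \frac{2^{-A}}{\;e^{-\theta_{\mathsf{p}}/\gamma_{\mathsf{p}}}\frac{(L-1)\theta_{\mathsf{p}}\lambda}{\gamma_{\mathsf{p}}\sigma^2}2^{-B/(L-1)} + O(2^{-2B/(L-1)})\;} = \frac{\gamma_{\mathsf{p}}\sigma^2}{(L-1)\theta_{\mathsf{p}}\lambda} e^{\theta_{\mathsf{p}}/\gamma_{\mathsf{p}}} 2^{\frac{B}{L-1} - A}\bigl(1 + O(2^{-B/(L-1)})\bigr),
\end{equation}
and taking the max over $n$ changes nothing at leading order. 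Comparing with the claimed formula, I would expect the $e^{\theta_{\mathsf{p}}/\gamma_{\mathsf{p}}}$ factor to either be absorbed (if the codebook spacing is measured differently in the paper's conventions — e.g. the first quantizer bit already conditions on $\omega \ge 0$ so the conditional density integrates to $1$ and the $e^{-\theta_{\mathsf{p}}/\gamma_{\mathsf{p}}}$ should not appear) or to be a small discrepancy to reconcile; so I would re-examine whether \eqref{Eq:P2} reports the joint or the conditional probability and adjust the constant accordingly, aiming to land exactly on $\Delta P = \frac{\gamma_{\mathsf{p}}\sigma^2}{(L-1)\theta_{\mathsf{p}}\lambda}2^{\frac{B}{L-1}-A} + O(2^{-B/(L-1)})$.

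**Main obstacle.** The delicate point is the error term. The statement claims an additive $O(2^{-B/(L-1)})$ error, which is much smaller than the main term $\Theta(2^{B/(L-1)-A})$ only in a regime where $A$ is not too large relative to $B$; more importantly, controlling the error requires showing the $O(2^{-2B/(L-1)})$ correction to the density in Lemma~\ref{Lem:Ps} is genuinely uniform in $x$ over $[0, P_{\max}]$ (not just pointwise), and that the mean-value-theorem point $\xi_n$ doesn't land where the density degenerates — in particular near $x = 0$, where $\eta = \omega/(\lambda g_{\textsf{x}}\epsilon)$ forces $\epsilon$ close to its upper support endpoint $2^{-B/(L-1)}$ and the density expansion may lose accuracy. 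I would handle this by noting $p_0 = 0$ but the first nontrivial gap still has its MVT point bounded away from $0$ for $B$ large (since $p_1 = \Theta(2^{B/(L-1)-A}) \gg 0$), and by carefully tracking the remainder in the integral $\Pr(\epsilon \ge \omega/(\lambda g_{\textsf{x}} x))$ after integrating out $g_{\mathsf{p}}$ and $g_{\textsf{x}}$, exactly as in Appendix~\ref{App:Ps}, to confirm the correction is $O(2^{-2B/(L-1)})$ uniformly. This uniform-remainder bookkeeping is the real work; everything else is the short chain of MVT plus substitution sketched above.
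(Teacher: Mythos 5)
Your proposal is correct and takes essentially the same route as the paper: the paper equates the equal-probability cell mass $2^{-A}$ from \eqref{Eq:IPCCbk} with $\Pr(p_{n-1}\leq P_{\mathsf{s}} < p_n)/\Pr(\gamma_{\mathsf{p}} g_{\mathsf{p}}\geq\theta_{\mathsf{p}})$ and uses the linear-in-$\tau$ leading term of Lemma~\ref{Lem:Ps}, which is your constant-density/MVT argument written as a CDF difference. The $e^{\theta_{\mathsf{p}}/\gamma_{\mathsf{p}}}$ factor you flagged resolves exactly as you guessed: \eqref{Eq:P2} is the unconditional probability, and dividing by $\Pr(\gamma_{\mathsf{p}} g_{\mathsf{p}}\geq\theta_{\mathsf{p}})=e^{-\theta_{\mathsf{p}}/\gamma_{\mathsf{p}}}$ to form the conditional mass in \eqref{Eq:IPCCbk} cancels the exponential and lands on the stated constant.
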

\begin{proof} See Appendix~\ref{App:Delta:P}.
\end{proof}

Next, the cumulative distribution function of $\hat{P}_{\mathsf{s}}$ is upper-bounded as shown below.
\begin{lemma}\label{Lem:Ps:QuantPower}
The distribution  of $\hat{P}_{\mathsf{s}}$ satisfies
\begin{eqnarray}
\Pr(\hat{P}_{\mathsf{s}}=P_{\max}) &=& \Pr(P_{\mathsf{s}}=P_{\max})\label{Eq:QIPC:Pmax}\\
\Pr(\hat{P}_{\mathsf{s}} < \tau)&\leq& e^{-\frac{\theta_{\mathsf{p}}}{\gamma_{\mathsf{p}}}}
\l[\frac{(L-1)\theta_{\mathsf{p}}\lambda(\tau+\Delta
P)}{\gamma_{\mathsf{p}}\sigma^2}\r]2^{-\frac{B}{L-1}}
+O\l(2^{-\frac{2B}{L-1}}\r), ~\forall ~ 0\leq \tau\leq P_{\max}
\label{Eq:QIPC:CDF}
\end{eqnarray}
where $\Pr(P_{\mathsf{s}}=P_{\max})$ and  $\Delta P$ are given in Lemma~\ref{Lem:Ps} and Lemma~\ref{Lem:Delta:P}, respectively.
\end{lemma}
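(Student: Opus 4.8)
The plan is to relate $\hat{P}_{\mathsf{s}}$ to the unquantized power $P_{\mathsf{s}}$ from Lemma~\ref{Lem:Ps} using the deterministic sandwich $P_{\mathsf{s}} - \Delta P \leq \hat{P}_{\mathsf{s}} \leq P_{\mathsf{s}}$, where $\Delta P$ is the quantizer step bound of Lemma~\ref{Lem:Delta:P}, and then propagate the already-established asymptotic expansions. First I would establish \eqref{Eq:QIPC:Pmax}: from the quantization rule \eqref{Eq:QIPC}, $\hat{\eta}=P_{\max}$ exactly when $\omega<0$ or $\eta\geq P_{\max}$, which is precisely the event $\{P_{\mathsf{s}}=P_{\max}\}$ (since $P_{\mathsf{s}}=\min\{\eta,P_{\max}\}$ in the unquantized description and the $\lfloor\cdot\rfloor_{\mathcal{P}}$ operator never reaches $P_{\max}$ from below by the codebook construction in \eqref{Eq:IPCCbk}). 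Hence the two events coincide and the probabilities are equal; I can simply cite the value from \eqref{Eq:P1} of Lemma~\ref{Lem:Ps}.

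For the CDF bound \eqref{Eq:QIPC:CDF}, the key inequality is that $\hat{\eta}=\lfloor\eta\rfloor_{\mathcal{P}}\geq \eta-\Delta P$ whenever $\omega\geq 0$ and $\eta<P_{\max}$, because consecutive codebook points differ by at most $\Delta P$ on the relevant range $[0,P_{\max}]$ by the definition \eqref{Eq:Delta:P}. Consequently $\hat{P}_{\mathsf{s}}\geq P_{\mathsf{s}}-\Delta P$ pointwise, so $\{\hat{P}_{\mathsf{s}}<\tau\}\subseteq\{P_{\mathsf{s}}<\tau+\Delta P\}$ and therefore $\Pr(\hat{P}_{\mathsf{s}}<\tau)\leq\Pr(P_{\mathsf{s}}<\tau+\Delta P)$. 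Plugging $\tau+\Delta P$ into \eqref{Eq:P2} of Lemma~\ref{Lem:Ps} gives
\begin{equation}
\Pr(\hat{P}_{\mathsf{s}}<\tau)\leq e^{-\frac{\theta_{\mathsf{p}}}{\gamma_{\mathsf{p}}}}\l[\frac{(L-1)\theta_{\mathsf{p}}\lambda(\tau+\Delta P)}{\gamma_{\mathsf{p}}\sigma^2}\r]2^{-\frac{B}{L-1}}+O\l(2^{-\frac{2B}{L-1}}\r),\nn
\end{equation}
which is exactly \eqref{Eq:QIPC:CDF}. One small check: I must confirm $\tau+\Delta P$ still lies in the validity range of \eqref{Eq:P2}; since $\Delta P = \Theta(2^{\frac{B}{L-1}-A})$ is small (it vanishes as $A\to\infty$ and the $O(2^{-B/(L-1)})$ correction absorbs any mild overshoot past $P_{\max}$), this causes no difficulty, but I would note it explicitly, perhaps by extending Lemma~\ref{Lem:Ps}'s expansion slightly past $P_{\max}$ or by observing the bound is trivial once $\tau+\Delta P\geq P_{\max}$.

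The main obstacle — really the only subtle point — is the bookkeeping around the boundary: making sure that the event $\{\hat{P}_{\mathsf{s}}=P_{\max}\}$ is handled consistently between the unquantized and quantized descriptions (so that \eqref{Eq:QIPC:Pmax} is an exact equality rather than an approximation), and that the substitution $\tau\mapsto\tau+\Delta P$ in \eqref{Eq:P2} does not push the argument outside the region where that expansion was derived. Both are resolved by the monotone sandwich $P_{\mathsf{s}}-\Delta P\leq\hat{P}_{\mathsf{s}}\leq P_{\mathsf{s}}$ together with the fact that $\Delta P$ and the error term are of the same or smaller order than the terms already present, so no new asymptotic order is introduced. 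Everything else is a direct substitution into Lemmas~\ref{Lem:Ps} and~\ref{Lem:Delta:P}.
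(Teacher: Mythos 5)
Your proposal is correct and follows essentially the same route as the paper: establish \eqref{Eq:QIPC:Pmax} directly from the quantization rule \eqref{Eq:QIPC}, then use the quantizer-step bound $\hat{\eta}=\lfloor\eta\rfloor_{\mathcal{P}}\geq\eta-\Delta P$ to get the event inclusion $\{\hat{P}_{\mathsf{s}}<\tau\}\subseteq\{P_{\mathsf{s}}<\tau+\Delta P\}$ and conclude via Lemma~\ref{Lem:Ps} evaluated at the shifted argument. The only cosmetic difference is that the paper re-runs the derivation of $\Pr(P_{\mathsf{s}}<\tau)$ with $\tau+\Delta P$ in place of $\tau$ rather than citing \eqref{Eq:P2} directly, which quietly disposes of the boundary issue you correctly flagged.
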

\begin{proof}
See Appendix~\ref{App:Ps:QuantPower}.
\end{proof}

Using Lemma~\ref{Lem:Ps:QuantPower} and following the  procedure for
proving Theorem~\ref{Theo:Pout:PerfPwrFb}, the outage probability
for OCB without feedforward  is bounded as shown below.
\begin{proposition}\label{Prop:Pout:QuantPwr}
Given both quantized CDI and IPC feedback, the SU outage
probability for  OCB without feedforward satisfies
\begin{equation}\label{Eq:Pout:QuantPower}
\hat{P}_{\mathsf{out}} \leq 1 - \frac{\Gamma\l(L-1,  \frac{\theta_{\mathsf{s}}}{\gamma_{\max}}\r)}{\Gamma(L-1)} +\phi 2^{-\frac{B}{L-1}}+\alpha 2^{-A}  + O\l(2^{-\frac{2B}{L-1}}\r)
\end{equation}
where $\phi$ is given in Theorem~\ref{Theo:Pout:PerfPwrFb} and
$\alpha =e^{-\frac{\theta_{\mathsf{p}}}{\gamma_{\mathsf{p}}}} \frac{\Gamma\l(L-1,  \frac{\theta_{\mathsf{s}}}{\gamma_{\max}}\r)}{\Gamma(L-1)}$. 
\end{proposition}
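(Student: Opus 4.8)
The plan is to mirror the proof of Theorem~\ref{Theo:Pout:PerfPwrFb}, substituting the perturbed power distribution of Lemma~\ref{Lem:Ps:QuantPower} for the unquantized-power distribution of Lemma~\ref{Lem:Ps}. Under OCB the beamformer is $\bff_{\mathsf{o}}=\sqrt{\hat{P}_{\mathsf{s}}}\,\hat{\bs}_{\perp}$, so $\SNR_{\mathsf{s}}=\frac{\hat{P}_{\mathsf{s}}}{\sigma^2}\,\tilde{g}_{\mathsf{s}}$ with $\tilde{g}_{\mathsf{s}}$ as in Lemma~\ref{Lem:gs}, and, exactly as in the proof of Theorem~\ref{Theo:Pout:PerfPwrFb}, $\hat{P}_{\mathsf{s}}$ and $\tilde{g}_{\mathsf{s}}$ are independent. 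Conditioning on $\tilde{g}_{\mathsf{s}}=u$ and splitting at $\theta_{\mathsf{s}}\sigma^2/u=P_{\max}$,
\[
\hat{P}_{\mathsf{out}} = \Pr\!\l(\tilde{g}_{\mathsf{s}}\leq \tfrac{\theta_{\mathsf{s}}}{\gamma_{\max}}\r) + \int_{\theta_{\mathsf{s}}/\gamma_{\max}}^{\infty}\Pr\!\l(\hat{P}_{\mathsf{s}}<\tfrac{\theta_{\mathsf{s}}\sigma^2}{u}\r) f_{\tilde{g}_{\mathsf{s}}}(u)\,du ,
\]
the split being legitimate because for $u\leq\theta_{\mathsf{s}}/\gamma_{\max}$ outage occurs for every admissible value of $\hat{P}_{\mathsf{s}}\in[0,P_{\max}]$, while the continuity of $\tilde{g}_{\mathsf{s}}$ makes the discrete point masses of $\hat{P}_{\mathsf{s}}$ on $\mathcal{P}$ immaterial inside the integral.

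Next I would upper-bound the integrand using \eqref{Eq:QIPC:CDF} with $\tau=\theta_{\mathsf{s}}\sigma^2/u$, insert $f_{\tilde{g}_{\mathsf{s}}}(u)=u^{L-2}e^{-u}/\Gamma(L-1)$ from Lemma~\ref{Lem:gs}, and evaluate the two resulting integrals $\int_{\theta_{\mathsf{s}}/\gamma_{\max}}^{\infty}u^{L-3}e^{-u}\,du$ and $\int_{\theta_{\mathsf{s}}/\gamma_{\max}}^{\infty}u^{L-2}e^{-u}\,du$ as incomplete gamma functions $\Gamma(L-2,\theta_{\mathsf{s}}/\gamma_{\max})$ and $\Gamma(L-1,\theta_{\mathsf{s}}/\gamma_{\max})$. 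The $\theta_{\mathsf{s}}\sigma^2/u$ part of the $(\tau+\Delta P)$ bound reproduces, after the same constant bookkeeping as in Theorem~\ref{Theo:Pout:PerfPwrFb}, precisely $\phi\,2^{-B/(L-1)}$, while the first term above, $\Pr(\tilde{g}_{\mathsf{s}}\leq\theta_{\mathsf{s}}/\gamma_{\max})$, equals $1-\Gamma(L-1,\theta_{\mathsf{s}}/\gamma_{\max})/\Gamma(L-1)$ by Lemma~\ref{Lem:gs}. The $\Delta P$ part contributes $e^{-\theta_{\mathsf{p}}/\gamma_{\mathsf{p}}}\frac{(L-1)\theta_{\mathsf{p}}\lambda}{\gamma_{\mathsf{p}}\sigma^2}\,\Delta P\,\frac{\Gamma(L-1,\theta_{\mathsf{s}}/\gamma_{\max})}{\Gamma(L-1)}\,2^{-B/(L-1)}$.

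Finally, I would substitute $\Delta P=\frac{\gamma_{\mathsf{p}}\sigma^2}{(L-1)\theta_{\mathsf{p}}\lambda}2^{B/(L-1)-A}+O(2^{-B/(L-1)})$ from Lemma~\ref{Lem:Delta:P}: the explicit $2^{B/(L-1)}$ cancels the $2^{-B/(L-1)}$ prefactor, the scalar constants collapse to $e^{-\theta_{\mathsf{p}}/\gamma_{\mathsf{p}}}$, and the $\Delta P$ part becomes exactly $\alpha\,2^{-A}$ with $\alpha=e^{-\theta_{\mathsf{p}}/\gamma_{\mathsf{p}}}\Gamma(L-1,\theta_{\mathsf{s}}/\gamma_{\max})/\Gamma(L-1)$; the $O(2^{-B/(L-1)})$ slack inside $\Delta P$ and the $O(2^{-2B/(L-1)})$ remainder in \eqref{Eq:QIPC:CDF} both fold into the stated $O(2^{-2B/(L-1)})$, which gives \eqref{Eq:Pout:QuantPower}. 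I expect the only real difficulty to be the error bookkeeping: since $\Delta P$ is itself of order $2^{B/(L-1)-A}$ rather than vanishing, one must check that every cross term produced by composing the shift $\tau\mapsto\tau+\Delta P$ with the $O(2^{-2B/(L-1)})$ tail of \eqref{Eq:QIPC:CDF} is still $O(2^{-2B/(L-1)})$, which implicitly assumes $A$ grows fast enough relative to $B/(L-1)$ that $\Delta P$ stays bounded; the independence of $\hat{P}_{\mathsf{s}}$ and $\tilde{g}_{\mathsf{s}}$ and the incomplete-gamma evaluations are carried over unchanged from the proof of Theorem~\ref{Theo:Pout:PerfPwrFb}.
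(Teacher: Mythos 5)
Your proposal is correct and follows exactly the route the paper intends: the paper gives no separate proof but states that the result follows "using Lemma~\ref{Lem:Ps:QuantPower} and following the procedure for proving Theorem~\ref{Theo:Pout:PerfPwrFb}," which is precisely your argument, and your constant bookkeeping (the $\theta_{\mathsf{s}}\sigma^2/u$ part reproducing $\phi\,2^{-B/(L-1)}$ and the $\Delta P$ part collapsing to $\alpha\,2^{-A}$ via Lemma~\ref{Lem:Delta:P}) checks out. The error-term caveat you raise about $\Delta P$ is already implicit in Lemma~\ref{Lem:Ps:QuantPower} as stated, so it introduces no gap relative to the paper.
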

Comparing the above result with  \eqref{Eq:Pout}, the increment of 
$\Pout$ due to IPC feedback quantization is upper-bounded by the
term $\alpha2^{-A}$. The asymptotic result parallel to that in
Corollary~\ref{Cor:Pout:HiSNR} is given below.
\begin{corollary}\label{Cor:Pout:HiSNR new}
For large $P_{\max}$, the upper bound on the SU outage probability
$\hat{P}_{\mathsf{out}}$ converges as
\begin{eqnarray}\label{Eq:Pout:Asymp:QIPC}
\lim_{P_{\max}\rightarrow\infty}\hat{P}_{\mathsf{out}} \leq \phi'
2^{-\frac{B}{L-1}}+\alpha' 2^{-A}\label{Eq:Pout:HiSNR:QIPC}
\end{eqnarray}
where $\phi' = e^{-\frac{\theta_{\mathsf{p}}}{\gamma_{\mathsf{p}}}}\frac{(L-1)\lambda\theta_{\mathsf{p}}\theta_{\mathsf{s}}}{(L-2)\gamma_{\mathsf{p}}}$ and $\alpha' =e^{-\frac{\theta_{\mathsf{p}}}{\gamma_{\mathsf{p}}}} $.
\end{corollary}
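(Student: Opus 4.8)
The plan is to take the limit $P_{\max}\to\infty$ directly inside the upper bound of Proposition~\ref{Prop:Pout:QuantPwr}, i.e. inside \eqref{Eq:Pout:QuantPower}, using that $\gamma_{\max}=P_{\max}/\sigma^2\to\infty$ as $P_{\max}\to\infty$. The key observation is that the only dependence on $P_{\max}$ in \eqref{Eq:Pout:QuantPower} enters through $\gamma_{\max}$, and $\gamma_{\max}$ appears only inside an incomplete gamma function $\Gamma(\cdot,\theta_{\mathsf{s}}/\gamma_{\max})$ in three places: the term $1-\Gamma(L-1,\theta_{\mathsf{s}}/\gamma_{\max})/\Gamma(L-1)$, the coefficient $\phi$ of Theorem~\ref{Theo:Pout:PerfPwrFb}, and the coefficient $\alpha=e^{-\theta_{\mathsf{p}}/\gamma_{\mathsf{p}}}\Gamma(L-1,\theta_{\mathsf{s}}/\gamma_{\max})/\Gamma(L-1)$ of Proposition~\ref{Prop:Pout:QuantPwr}. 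So the whole computation reduces to evaluating $\lim_{x\to 0^+}\Gamma(s,x)$ for $s=L-1$ and $s=L-2$.

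First I would invoke the elementary fact that $\lim_{x\to 0^+}\Gamma(s,x)=\Gamma(s)$ for every $s>0$, since $\Gamma(s,x)=\int_x^\infty t^{s-1}e^{-t}\,dt\to\int_0^\infty t^{s-1}e^{-t}\,dt=\Gamma(s)$ and the integral is convergent at $0$ precisely when $s>0$. Taking $s=L-1$ gives $1-\Gamma(L-1,\theta_{\mathsf{s}}/\gamma_{\max})/\Gamma(L-1)\to 0$ and $\alpha\to e^{-\theta_{\mathsf{p}}/\gamma_{\mathsf{p}}}=\alpha'$. Taking $s=L-2$ (which needs $L>2$, the implicit regime for OCB, exactly as in Corollary~\ref{Cor:Pout:HiSNR} where $L-2$ also appears in the denominator) gives $\phi\to e^{-\theta_{\mathsf{p}}/\gamma_{\mathsf{p}}}\,(L-1)\lambda\theta_{\mathsf{p}}\theta_{\mathsf{s}}\Gamma(L-2)/\big(\gamma_{\mathsf{p}}\Gamma(L-1)\big)$, and then the recursion $\Gamma(L-1)=(L-2)\Gamma(L-2)$ collapses this to $\phi'=e^{-\theta_{\mathsf{p}}/\gamma_{\mathsf{p}}}\,(L-1)\lambda\theta_{\mathsf{p}}\theta_{\mathsf{s}}/\big((L-2)\gamma_{\mathsf{p}}\big)$, as claimed. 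Substituting these three limits term by term into \eqref{Eq:Pout:QuantPower} then yields $\lim_{P_{\max}\to\infty}\hat{P}_{\mathsf{out}}\le \phi'2^{-B/(L-1)}+\alpha'2^{-A}+O\big(2^{-2B/(L-1)}\big)$, and the remainder $O\big(2^{-2B/(L-1)}\big)$ is suppressed in the statement following the same convention used in \eqref{Eq:Pout:HiSNR:a}.

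The only point requiring a little care — and the main, rather minor, obstacle — is legitimizing the interchange of $\lim_{P_{\max}\to\infty}$ with the bound, i.e. checking that the constant hidden in the $O\big(2^{-2B/(L-1)}\big)$ remainder of \eqref{Eq:Pout:QuantPower} does not blow up as $P_{\max}\to\infty$. I would handle this by tracing that remainder back through the proof of Proposition~\ref{Prop:Pout:QuantPwr}: it originates from the $O$-terms in Lemma~\ref{Lem:Ps} and Lemma~\ref{Lem:Ps:QuantPower} together with $\Delta P$ from Lemma~\ref{Lem:Delta:P}, whose coefficients are each either a finite limit or monotone in $\gamma_{\max}$, so the remainder stays bounded and the termwise limit above is valid. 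Beyond this bookkeeping, the argument uses nothing more than continuity of $\Gamma(s,\cdot)$ at the origin and the gamma-function recursion, so the proof is short.
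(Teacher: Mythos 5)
Your proposal is correct and follows essentially the same route the paper intends: take $P_{\max}\to\infty$ (hence $\gamma_{\max}\to\infty$) in the bound of Proposition~\ref{Prop:Pout:QuantPwr}, use $\Gamma(s,\theta_{\mathsf{s}}/\gamma_{\max})\to\Gamma(s)$ for $s=L-1,L-2$ together with $\Gamma(L-1)=(L-2)\Gamma(L-2)$ to identify $\phi\to\phi'$ and $\alpha\to\alpha'$, exactly paralleling how Corollary~\ref{Cor:Pout:HiSNR} follows from Theorem~\ref{Theo:Pout:PerfPwrFb}. Your extra remark on checking that the $O\bigl(2^{-2B/(L-1)}\bigr)$ remainder stays bounded in $P_{\max}$ is a reasonable piece of added care, not a departure from the paper's argument.
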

The two terms at the right-hand side of \eqref{Eq:Pout:Asymp:QIPC}
quantify  the effects of CDI and IPC quantization, respectively.
The exponent of the first term, namely $-\frac{B}{L-1}$, is scaled
by the factor $\frac{1}{L-1}$, which does not appear in that of the
second term. The reason is that the CDI quantization partitions the
space of $L$-dimensional  unitary vectors while the IPC
quantization discretizes the nonnegative real axis.

Consider the sum-feedback constraint $A+B=F$. Note that  $(F+1)$
represents the total number of feedback bits where the additional
bit is used as an indicator of an outage event at $\mathsf{R_{\mathsf{p}}}$. 
Assume $B\gg 1$ and the second-order term
$O\l(2^{-\frac{2B}{L-1}}\r)$ in \eqref{Eq:Pout:QuantPower} is
negligible. Then  the optimal value of
$B$ that minimizes  the upper bound on $\hat{P}_{\mathsf{out}}$ in \eqref{Eq:QIPC:CDF}, denoted as $B^\star$,   is obtained as
\begin{equation}\label{Eq:Tradeoff:Prob}
B^\star = \arg\min_{0\leq B\leq F} J(B)
\end{equation}
where the function $J(B)$ is defined as 
\begin{equation}\label{Eq:Tradeoff:J}
J(B) = \phi 2^{-\frac{B}{L-1}}+\alpha2^{-(F-B)}.
\end{equation}
The function $J(B)$ can be shown to be convex. Thus, by relaxing the
integer constraint, $B^\star$ can be computed using the following
equation
\begin{equation}
\frac{dJ}{dB}(B^\star) = -\frac{\ln 2\times \phi}{L-1}2^{-\frac{B^\star}{L-1}} + \ln 2\times \alpha 2^{-F} 2^{B^\star} = 0.
\end{equation}
It follows that
\noindent \begin{equation}\label{Eq:B:Optim}
B^\star = \min\l[\frac{L-1}{L}\l(F-\log_2\chi\r)^+, F\r]
\end{equation}
where
\begin{equation}
\chi = \frac{\gamma_{\mathsf{p}}\Gamma(L-1, \frac{\theta_{\mathsf{s}}}{\gamma_{\max}})}{\lambda\theta_{\mathsf{p}}\theta_{\mathsf{s}}\Gamma(L-2, \frac{\theta_{\mathsf{s}}}{\gamma_{\max}})}\nn
\end{equation}
and the operator $(\cdot)^+$ is defined as $(\cdot)^+ = \max(\cdot, 0)$. 
The value of $B^\star$ as computed above can then be rounded to
satisfy the integer constraint. The derivation of \eqref{Eq:B:Optim}
uses the first-order approximation of the upper bound on
$\hat{P}_{\mathsf{out}}$  in Proposition~\ref{Prop:Pout:QuantPwr},
which is accurate for relatively small value of
$\frac{\theta_{\mathsf{p}}\theta_{\mathsf{s}}}{\gamma_{\mathsf{p}}}2^{-\frac{B}{L-1}}$. In this range, the feedback allocation
using \eqref{Eq:B:Optim} closely  predicts  the optimal feedback
tradeoff as observed from simulation results in
Fig.~\ref{Fig:Tradeoff} in Section~\ref{Section:Simulation}.  However, the mentioned first-order approximation  is inaccurate for large $\theta_{\mathsf{p}}\theta_{\mathsf{s}}$ or small 
$\gamma_{\mathsf{p}}$. For these cases,  it is necessary to derive the optimal
feedback allocation based on analyzing the exact distribution of 
$\hat{P}_{\mathsf{out}}$, which, however, has no simple form.

Next, consider OCB with feedforward. The feedforward counterpart of
Proposition~\ref{Prop:Pout:QuantPwr} is obtained as the following corollary.
\begin{corollary}\label{Cor:Pout:QuantPwr}
Given both quantized CDI and IPC feedback, the SU outage probability for
 OCB with feedforward satisfies
\begin{equation}\label{Eq:Pout:QuantPower:FF}
\acute{P}_{\mathsf{out}} \leq 1 - \frac{\Gamma\l(L-1,  \frac{\theta_{\mathsf{s}}}{\gamma_{\max}}\r)}{\Gamma(L-1)} + \frac{\phi+\alpha 2^{-A}}{L-1}  2^{-\frac{B}{L-1}} + O\l(2^{-\frac{2B}{L-1}}\r)
\end{equation}
\end{corollary}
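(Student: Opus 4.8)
The plan is to reproduce, essentially line by line, the chain of estimates behind Proposition~\ref{Prop:Pout:QuantPwr}, with every no-feedforward object replaced by its feedforward counterpart, in precisely the way Theorem~\ref{Theo:Pout:PerfPwrFb:FF} was derived from Theorem~\ref{Theo:Pout:PerfPwrFb}. Concretely, the CDI quantization error $\epsilon$ is replaced throughout by $\delta=|\bs_{\textsf{x}}^\dagger\hat{\bs}_{\perp}|^2$, whose density is Lemma~\ref{Lem:PDF:EspP}; the unquantized IPC signal $\eta$ is replaced by $\acute{\eta}$ of \eqref{Eq:Ps:FF}; and the law of the perfect-IPC transmit power is taken from Lemma~\ref{Lem:Ps:FF} in place of Lemma~\ref{Lem:Ps}. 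The single ingredient that survives unchanged is Lemma~\ref{Lem:gs}: feedforward affects only the IPC signal and never the beamforming direction $\hat{\bs}_{\perp}$, so $\tilde{g}_{\mathsf{s}}=g_{\mathsf{s}}|\hat{\bff}_{\mathsf{o}}^\dagger\bs_{\mathsf{s}}|^2$ is still $\chi^2$ with $(L-1)$ complex degrees of freedom and independent of the transmit power.

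With this dictionary, the argument has three steps. First, I would establish the feedforward analogue of Lemma~\ref{Lem:Delta:P}, bounding the largest IPC quantization-cell width $\Delta\acute{P}$ used when flooring $\acute{\eta}$; one convenient route is to reuse the scalar codebook $\mathcal{P}$ of Section~\ref{Section:IPC:OCB:Quant} (as prescribed below \eqref{Eq:QIPC}), so that $\Delta\acute{P}\leq\Delta P$, while the sharp value follows from the conditional law of $\acute{\eta}=\omega/(\lambda g_{\textsf{x}}\delta)$ given $\omega\geq 0$, obtained by averaging $\Pr(\delta\geq t)=\left(1-2^{\frac{B}{L-1}}t\right)^{L-1}$ over the joint law of $(g_{\mathsf{p}},g_{\textsf{x}})$ and keeping the leading order in $2^{-A}$ and $2^{-\frac{B}{L-1}}$. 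Second, I would prove the feedforward analogue of Lemma~\ref{Lem:Ps:QuantPower}: flooring can only decrease the transmit power and returns $P_{\max}$ exactly when the perfect-IPC power is $P_{\max}$, so, writing $\hat{\acute{P}}_{\mathsf{s}}$ for the transmit power under quantized IPC, $\Pr(\hat{\acute{P}}_{\mathsf{s}}=P_{\max})=\Pr(\acute{P}_{\mathsf{s}}=P_{\max})$ and $\Pr(\hat{\acute{P}}_{\mathsf{s}}<\tau)\leq\Pr(\acute{P}_{\mathsf{s}}<\tau+\Delta\acute{P})$ for $0\leq\tau\leq P_{\max}$; inserting Lemma~\ref{Lem:Ps:FF} gives a cdf bound affine in $(\tau+\Delta\acute{P})$ with prefactor $2^{-\frac{B}{L-1}}$. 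Third, I would feed this into $\acute{P}_{\mathsf{out}}=\E\!\left[1-\Gamma\!\left(L-1,\theta_{\mathsf{s}}\sigma^2/\hat{\acute{P}}_{\mathsf{s}}\right)/\Gamma(L-1)\right]$, separate the atom at $P_{\max}$, integrate by parts on $(0,P_{\max})$ and use the substitution $u=\theta_{\mathsf{s}}\sigma^2/\tau$ exactly as in Appendices~\ref{App:Outage} and \ref{App:Ps:QuantPower}. The constant part of the cdf bound yields $1-\Gamma(L-1,\theta_{\mathsf{s}}/\gamma_{\max})/\Gamma(L-1)$; the part proportional to $\tau$ reproduces the $\frac{\phi}{L-1}2^{-\frac{B}{L-1}}$ term already in \eqref{Eq:Pout:FF}; and the part proportional to $\Delta\acute{P}$ collapses, after the prefactor is multiplied in, to the remaining term $\frac{\alpha 2^{-A}}{L-1}2^{-\frac{B}{L-1}}$, with all products of two quantization corrections absorbed into $O\!\left(2^{-\frac{2B}{L-1}}\right)$.

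The main obstacle is the first step, and specifically checking that, once $\Delta\acute{P}$ is multiplied by the $2^{-\frac{B}{L-1}}$ prefactor of Lemma~\ref{Lem:Ps:FF}, the residual IPC penalty carries the advertised factor $\frac{1}{L-1}2^{-\frac{B}{L-1}}$ rather than the $O(2^{-A})$ behaviour seen without feedforward in \eqref{Eq:Pout:QuantPower}. The subtlety is that the law of $\delta$ is qualitatively unlike that of $\epsilon$: it is supported on the shrinking interval $\bigl[0,2^{-\frac{B}{L-1}}\bigr]$ and, for $L>2$, its density stays bounded away from zero at the origin, whereas that of $\epsilon$ vanishes there. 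Since $\acute{\eta}$ depends on $\delta$ reciprocally, one must track carefully which portion of the nonnegative real axis the equal-probability cells actually tile below $P_{\max}$; the payoff is that feedforward tightens the CDI- and IPC-induced penalties by the same $(L-1)$ factor, which is exactly what produces the compact form $\frac{\phi+\alpha 2^{-A}}{L-1}2^{-\frac{B}{L-1}}$. A final routine check is that Lemma~\ref{Lem:gs} together with $u=\theta_{\mathsf{s}}\sigma^2/\tau$ reintroduces precisely the incomplete-gamma factors $\Gamma(L-2,\theta_{\mathsf{s}}/\gamma_{\max})$ inside $\phi$ and $\Gamma(L-1,\theta_{\mathsf{s}}/\gamma_{\max})$ inside $\alpha$, so no new special functions appear.
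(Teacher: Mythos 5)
Your overall route is the one the paper itself intends: there is no separate appendix proof of this corollary, and it is meant to follow by rerunning Lemma~\ref{Lem:Delta:P}, Lemma~\ref{Lem:Ps:QuantPower} and the integration of Appendix~\ref{App:Outage} with $\epsilon$ replaced by $\delta$ and Lemma~\ref{Lem:Ps} replaced by Lemma~\ref{Lem:Ps:FF}, exactly as Theorem~\ref{Theo:Pout:PerfPwrFb:FF} follows from Theorem~\ref{Theo:Pout:PerfPwrFb}. The gap is the step you yourself flag as the main obstacle: you assert, but never compute, that the IPC part ``collapses'' to $\frac{\alpha 2^{-A}}{L-1}2^{-\frac{B}{L-1}}$, and neither of the two routes you propose for $\Delta\acute{P}$ delivers that term. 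If you reuse the codebook $\mathcal{P}$ of Section~\ref{Section:IPC:OCB:Quant} (so $\Delta\acute{P}\leq\Delta P$ with $\Delta P=\frac{\gamma_{\mathsf{p}}\sigma^2}{(L-1)\theta_{\mathsf{p}}\lambda}2^{\frac{B}{L-1}-A}+O\l(2^{-\frac{B}{L-1}}\r)$ from Lemma~\ref{Lem:Delta:P}), then multiplying by the Lemma~\ref{Lem:Ps:FF} prefactor $e^{-\frac{\theta_{\mathsf{p}}}{\gamma_{\mathsf{p}}}}\frac{\theta_{\mathsf{p}}\lambda}{\gamma_{\mathsf{p}}\sigma^2}2^{-\frac{B}{L-1}}$ gives $e^{-\frac{\theta_{\mathsf{p}}}{\gamma_{\mathsf{p}}}}\frac{2^{-A}}{L-1}$: the $2^{\frac{B}{L-1}}$ hidden inside $\Delta P$ cancels the $2^{-\frac{B}{L-1}}$ prefactor just as it does without feedforward, so after integrating against $f_{\tilde{g}_{\mathsf{s}}}$ over $[\theta_{\mathsf{s}}/\gamma_{\max},\infty)$ the IPC penalty is $\frac{\alpha}{L-1}2^{-A}$, with no additional factor $2^{-\frac{B}{L-1}}$. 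If instead you take your ``sharp'' option and build equal-probability cells from the conditional law of $\acute{\eta}$ given $\omega\geq 0$, the cell width becomes $\frac{\gamma_{\mathsf{p}}\sigma^2}{\theta_{\mathsf{p}}\lambda}2^{\frac{B}{L-1}-A}$ (larger by a factor $L-1$) and the penalty is $\alpha 2^{-A}$, i.e.\ not reduced at all. In both cases the CDI term correctly becomes $\frac{\phi}{L-1}2^{-\frac{B}{L-1}}$, but ``tightening both penalties by the same $(L-1)$ factor'' yields $\frac{\phi 2^{-\frac{B}{L-1}}+\alpha 2^{-A}}{L-1}$, which is not the same expression as $\frac{\phi+\alpha 2^{-A}}{L-1}2^{-\frac{B}{L-1}}$ in \eqref{Eq:Pout:QuantPower:FF}: the latter's IPC term is smaller by the factor $2^{-\frac{B}{L-1}}$, and nothing in your outline produces it.

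So as written the proposal does not establish the stated inequality; the crucial identity is asserted rather than derived, and the derivation you sketch actually terminates at $\acute{P}_{\mathsf{out}}\leq 1-\frac{\Gamma\l(L-1,\frac{\theta_{\mathsf{s}}}{\gamma_{\max}}\r)}{\Gamma(L-1)}+\frac{\phi}{L-1}2^{-\frac{B}{L-1}}+\frac{\alpha}{L-1}2^{-A}+O\l(2^{-\frac{2B}{L-1}}\r)$. To obtain an IPC term of order $2^{-A-\frac{B}{L-1}}$ you would need quantization cells of width $O(2^{-A})$ below $P_{\max}$, which the equal-probability design of Section~\ref{Section:IPC:OCB:Quant} does not provide. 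Note also that the form your computation actually gives is the one consistent with the paper's own follow-up remark: $\frac{\phi 2^{-\frac{B}{L-1}}+\alpha 2^{-A}}{L-1}$ is proportional to $J(B)$ in \eqref{Eq:Tradeoff:J}, so the same $B^\star$ of \eqref{Eq:B:Optim} minimizes it, whereas minimizing $\l(\phi+\alpha 2^{-(F-B)}\r)2^{-\frac{B}{L-1}}$ would give a different allocation. You should either supply the missing argument for the extra $2^{-\frac{B}{L-1}}$ factor (which I do not believe the paper's quantizer supports) or state and prove the bound in the form your chain of lemmas actually yields, flagging the discrepancy with the printed corollary.
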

The result in \eqref{Eq:Pout:QuantPower:FF} shows that feedforward
reduces the increment on outage probability due to IPC quantization
by a factor of $(L-1)$. Since the solution of the optimization
problem in \eqref{Eq:Tradeoff:Prob} also minimizes the upper bound on
$\acute{P}_{\mathsf{out}}$ in \eqref{Eq:Pout:QuantPower:FF}, the
optimal number of CDI feedback bits in \eqref{Eq:B:Optim} holds for the case with feedforward.

\section{Simulation Results }\label{Section:Simulation}
Unless specified otherwise, the simulation parameters  are set as: the SINR/SNR thresholds $\theta_{\mathsf{p}}=\theta_{\mathsf{s}} = 3$, the
path-loss factor $\lambda = 0.1$, the noise
variance $\sigma^2=1$, the 
number of antennas at $\mathsf{T_{\mathsf{s}}}$ $L = 4$, and the PU transmit SNR 
$\gamma_{\mathsf{p}} = 10$ dB. All curves in the following figures are obtained by simulation except for the curve with the legend ``Quantized CDI (Infinite SNR)" in Fig~\ref{Fig:Pout}, which is based on numerical computation using \eqref{Eq:Pout:HiSNR:a}. 
\begin{figure}[t]
\begin{center}
\includegraphics[width=10cm]{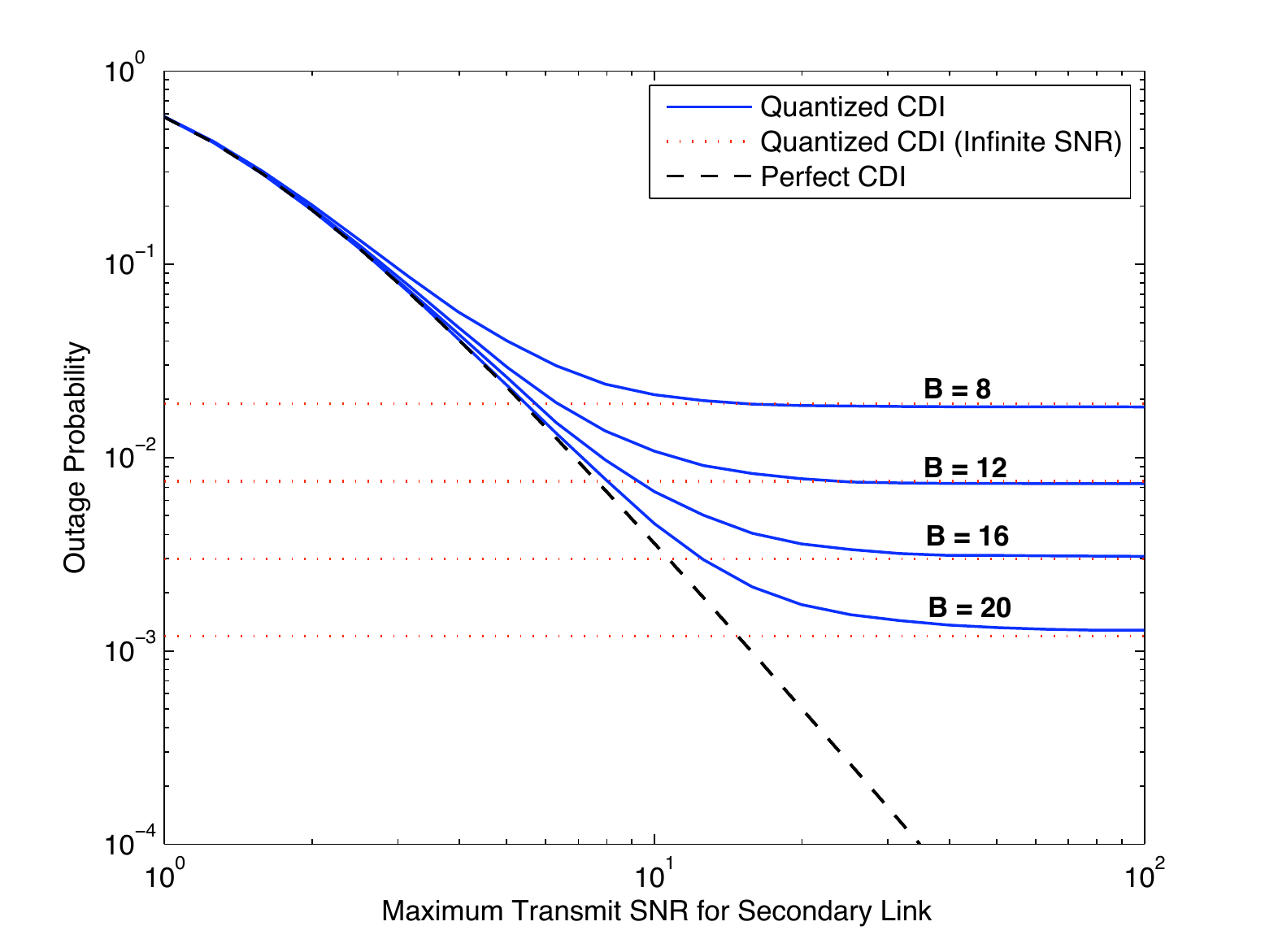}
\caption{SU outage probability for OCB versus maximum SU-transmit SNR for quantized CDI and perfect IPC feedback}
\label{Fig:Pout}
\end{center}
\end{figure}

Figs.~\ref{Fig:Pout} to Fig.~\ref{Fig:QuantFeedforward} concern OCB with quantized CDI and perfect IPC feedback.
Fig.~\ref{Fig:Pout} displays the curves of SU
outage probability $\Pout$ versus maximum SU transmit SNR
$\gamma_{\max}$  for the number of cooperative CDI feedback bits $B = \{8,12,
16, 20\}$.   For comparison, we also plot the first-order terms of the $\Pout$ limits for large $\gamma_{\max}$  as given in
\eqref{Eq:Pout:HiSNR:a}. As observed from
Fig.~\ref{Fig:Pout}, with $B$ fixed, $\Pout$ converges from above to
the corresponding limit  as $\gamma_{\max}$ increases, consistent with 
the result in Corollary~\ref{Cor:Pout:HiSNR}. The limit of $\Pout$
in the interference limiting regime is observed to decrease
exponentially with increasing $B$.

Fig.~\ref{Fig:OptimBeam} shows that  the SU outage probabilities of OCB and NOCB converge 
as $\gamma_{\max}$ increases, agreeing with  Proposition ~\ref{Cor:Converge}. The convergence is
slower for larger $B$. However,  NOCB significantly outperforms  OCB outside the interference limiting  regime.
\begin{figure}
\begin{center}
\includegraphics[width=10cm]{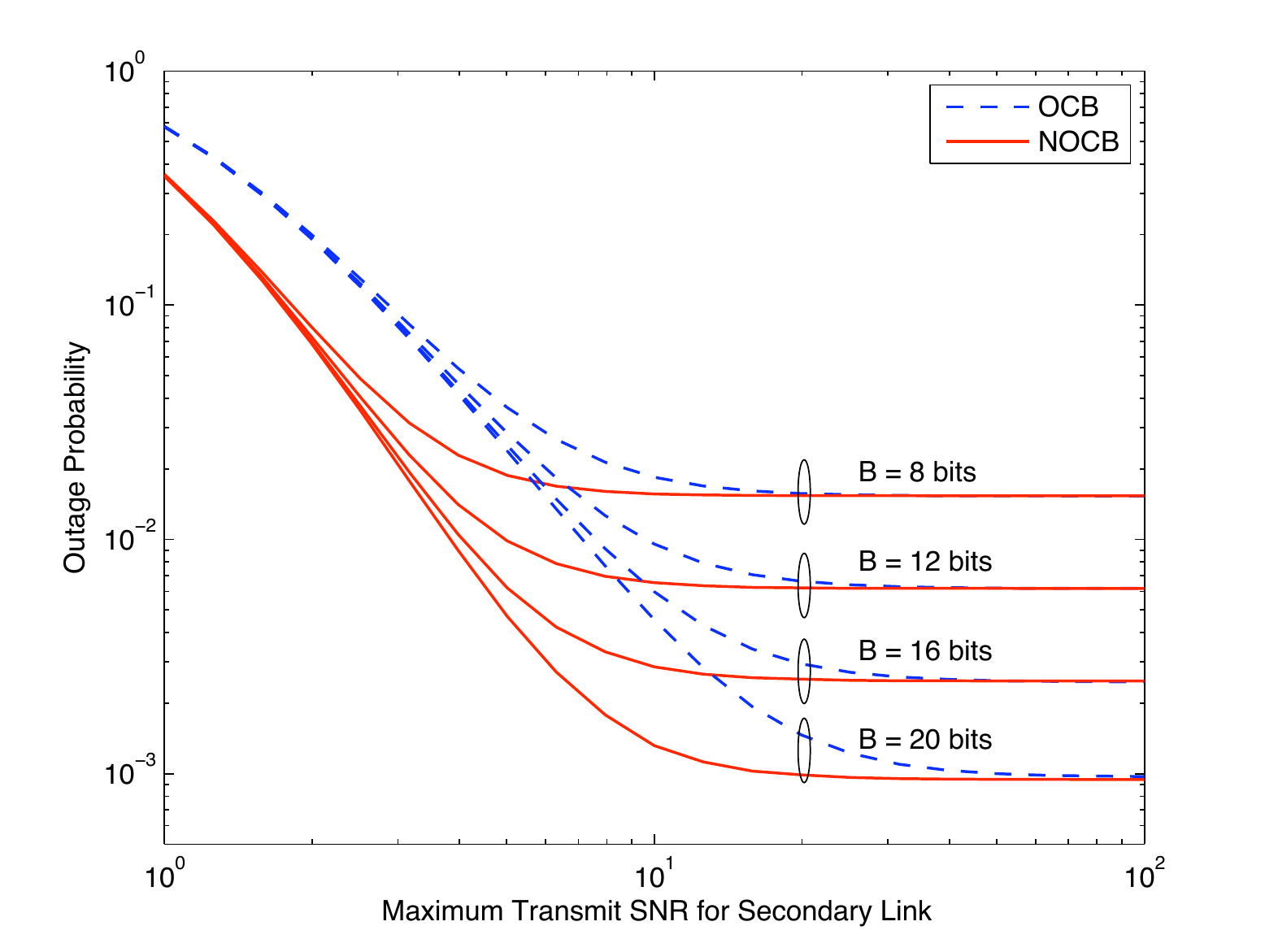}
\caption{Performance comparison between OCB and NOCB in terms of SU outage probability  versus maximum SU transmit SNR. The CDI feedback is quantized  and the IPC feedback is assumed perfect. }
\label{Fig:OptimBeam}
\end{center}
\end{figure}

Fig.~\ref{Fig:Feedforward} illustrates the effects of 
the SU feedforward on  the SU outage probability.  The
OCB and NOCB beamforming designs are considered in
Fig.~\ref{Fig:Feedforward}(a) and Fig.~\ref{Fig:Feedforward}(b),
respectively. 
It can be observed from both figures that the decrease of the SU outage probability due to feedforward is
more significant in the interference limiting regime and for larger
$L$. However, increasing $L$ is found to result in higher outage
probability in the interference limiting regime, which agrees with the remark on Corollary~\ref{Cor:Pout:HiSNR}. 

\begin{figure}
\begin{center}
\hspace{-20pt}\subfigure[OCB]{\includegraphics[width=8.6cm]{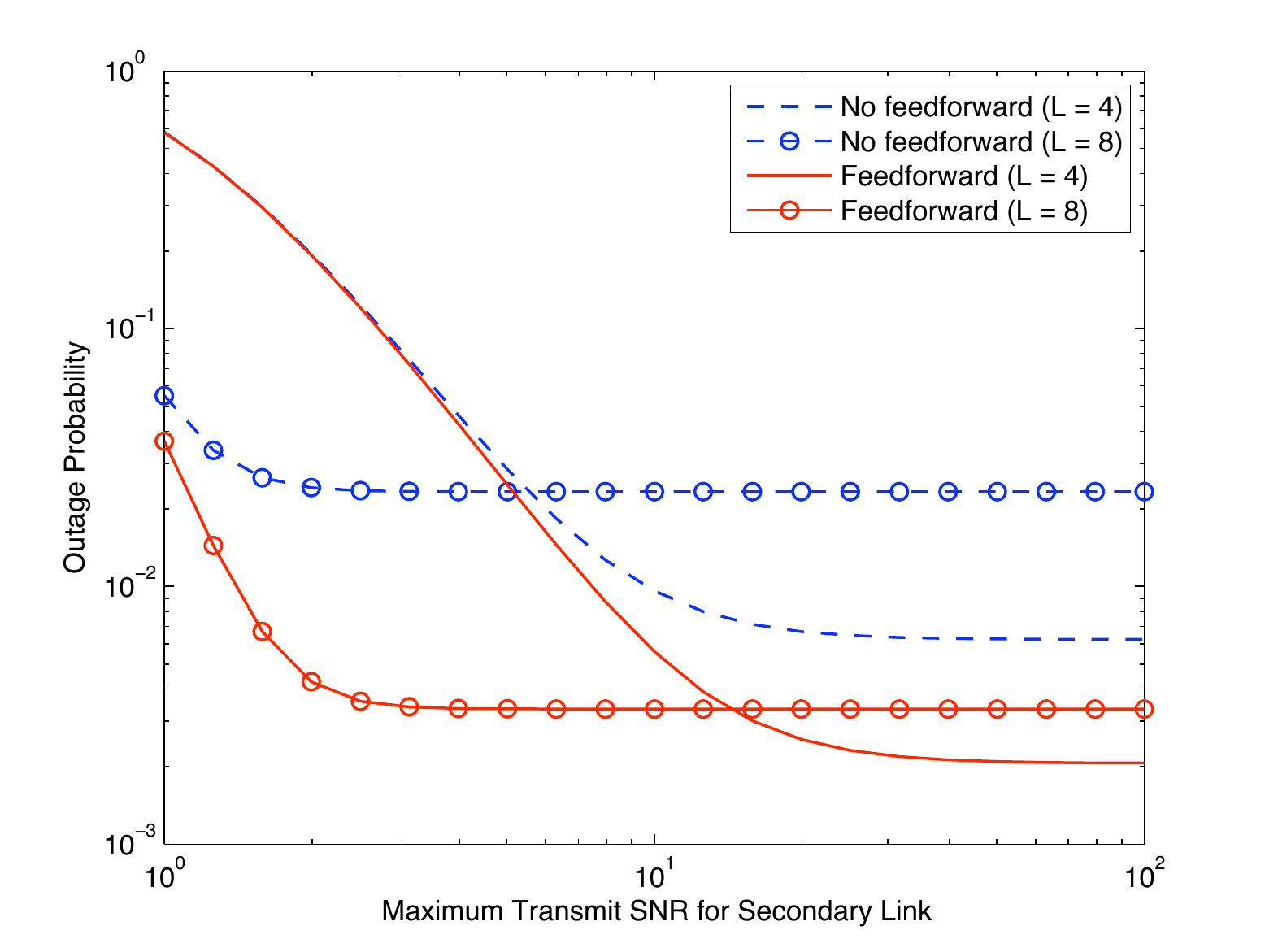}\hspace{-20pt}}
\subfigure[NOCB]{\includegraphics[width=8.6cm]{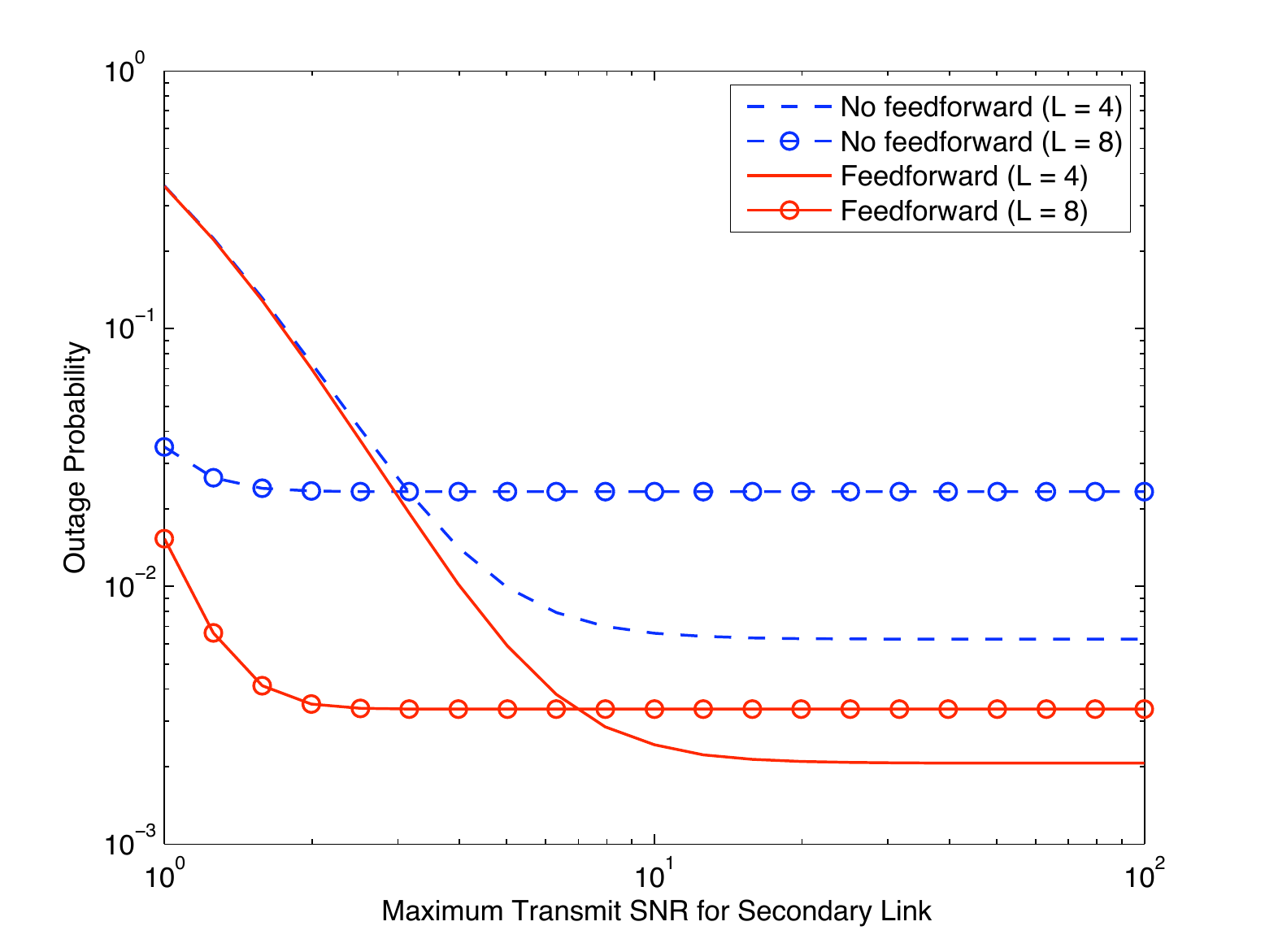}}\hspace{-20pt}
\caption{SU outage probability  versus the maximum SU transmit SNR  for (a) OCB and (b) NOCB. For each type of beamforming, both the cases of feedforward and no feedforward are considered. The number of cooperative CDI feedback bits is $B = 12$. }
\label{Fig:Feedforward}
\end{center}
\end{figure}

Fig.~\ref{Fig:QuantFeedforward} demonstrates the effects of finite-rate ($B'$ bits) local feedback and feedforward of $\bs_{\mathsf{s}}$ on the SU outage probability for OCB with $B = 12$ and $B' = 8$.  It can be observed for both OCB and NOCB that the increase of the SU outage probability due to the quantization of $\bs_{\mathsf{s}}$ is insignificant, justifying the assumption of perfect local feedback and feedforward in the analysis. 

\begin{figure}
\begin{center}
\hspace{-20pt}\subfigure[OCB]{\includegraphics[width=8.5cm]{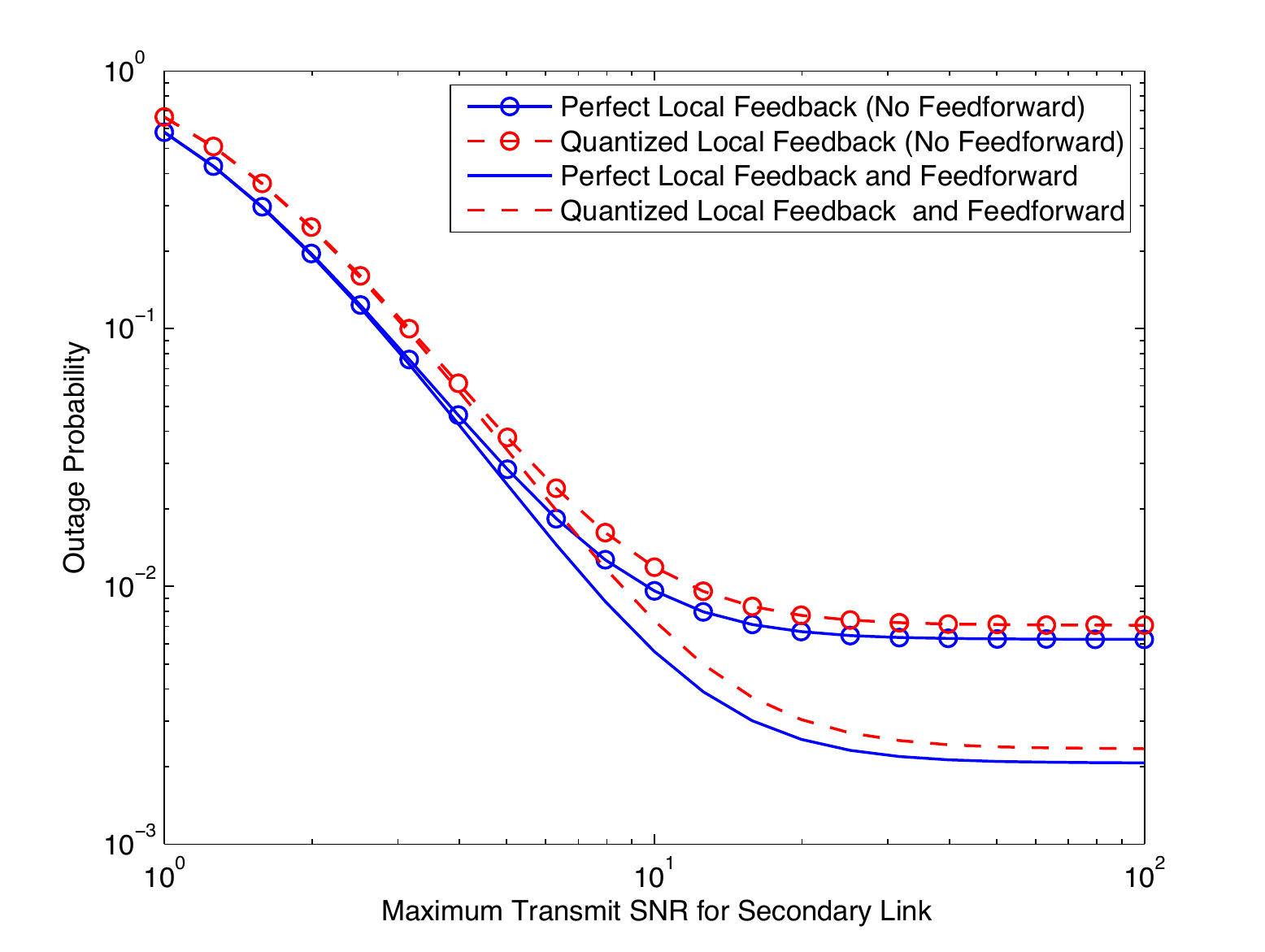}\hspace{-20pt}}
\subfigure[NOCB]{\includegraphics[width=8.5cm]{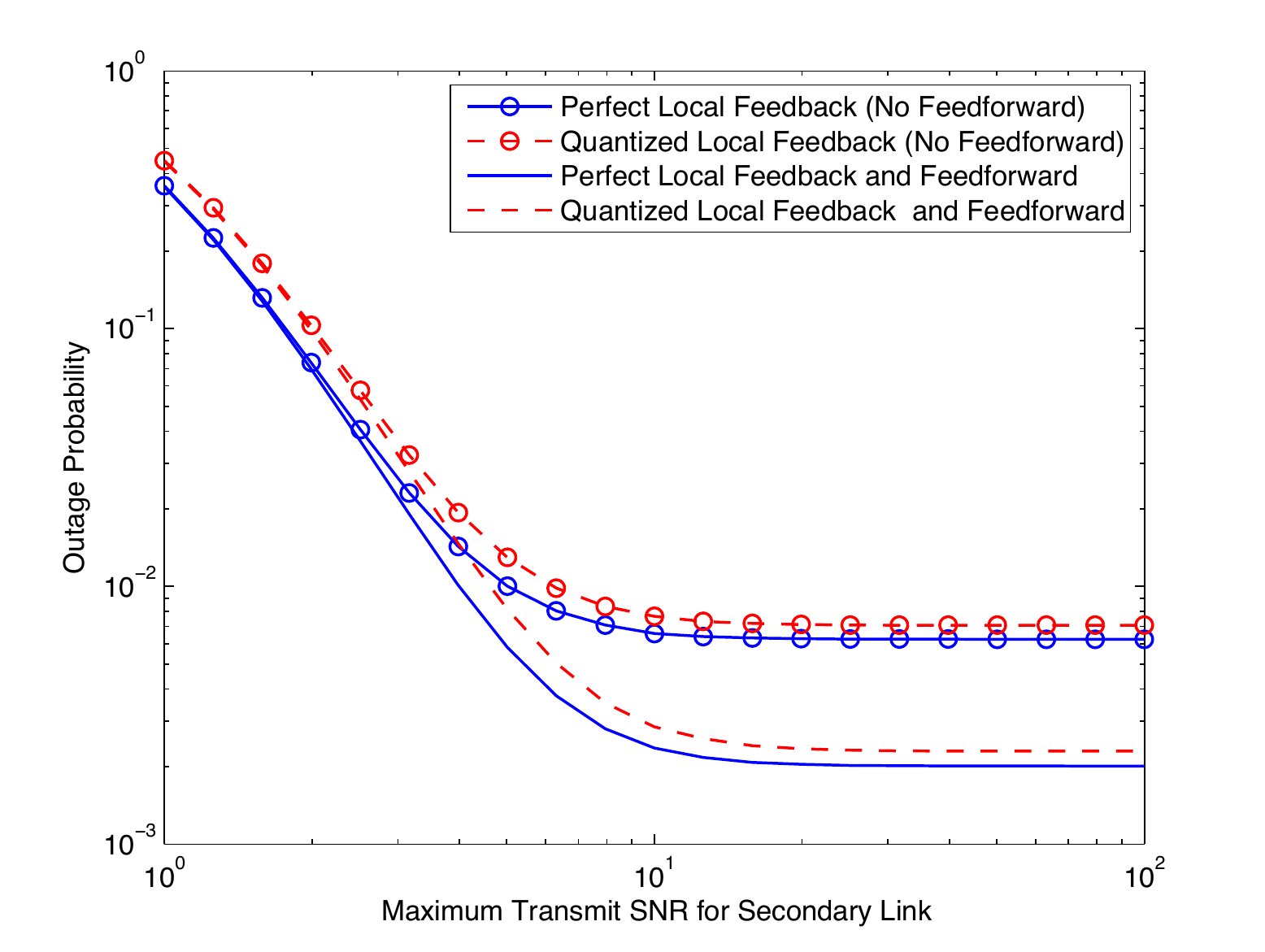}\hspace{-20pt}}
\caption{Effects of quantizing feedforward and local feedback on the SU outage probability    for (a) OCB and (b) NOCB. The number of cooperative CDI feedback bits is $B = 12$, and the local feedback/feedforward has  $B' = 8$ bits. }
\label{Fig:QuantFeedforward}
\end{center}
\end{figure}

Last, consider  both quantized CDI and IPC feedback.
Fig.~\ref{Fig:Tradeoff} shows the curves of the SU outage probability 
for the case of OCB without feedforward versus the number of bits $A$
for IPC feedback. It is observed from Fig.~\ref{Fig:Tradeoff} that for  given
$\gamma_{\max}$,  there exists an optimal combination of $(A,B)$
that  minimizes the outage probability. The optimal values of $A$
are indicated by the marker ``o" and those computed using the
theoretic result in \eqref{Eq:B:Optim} by the marker ``x". The simulation and theoretic results are closer for larger $\gamma_{\mathsf{p}}$. Specifically, they differ at most by two  bits for $\gamma_{\mathsf{p}} = 10$ dB (see  Fig.~\ref{Fig:Tradeoff}(a)) and by one bit for $\gamma_{\mathsf{p}} = 13$ dB (see  Fig.~\ref{Fig:Tradeoff}(b)).  
 These observations  agree with the remark in Section~\ref{Section:Tradeoff} that 
the derived feedback tradeoff is a more accurate approximation of the optimal one for larger $\gamma_{\mathsf{p}}$.
\begin{figure}
\begin{center}
\hspace{-20pt}\subfigure[$\gamma_{\mathsf{p}}=10$ dB]{\includegraphics[width=8.5cm]{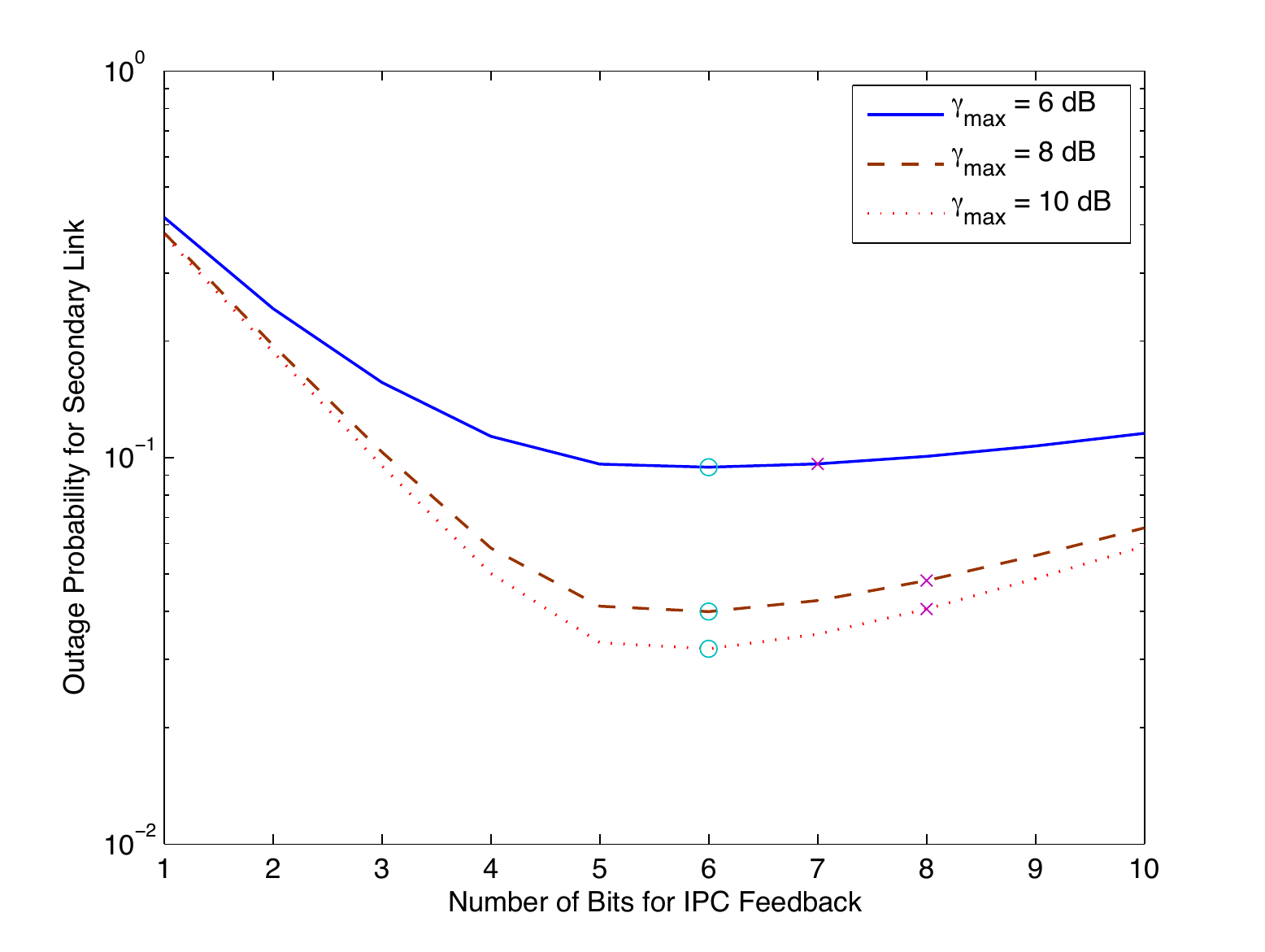}\hspace{-20pt}}
\subfigure[$\gamma_{\mathsf{p}}=13$ dB]{\includegraphics[width=8.5cm]{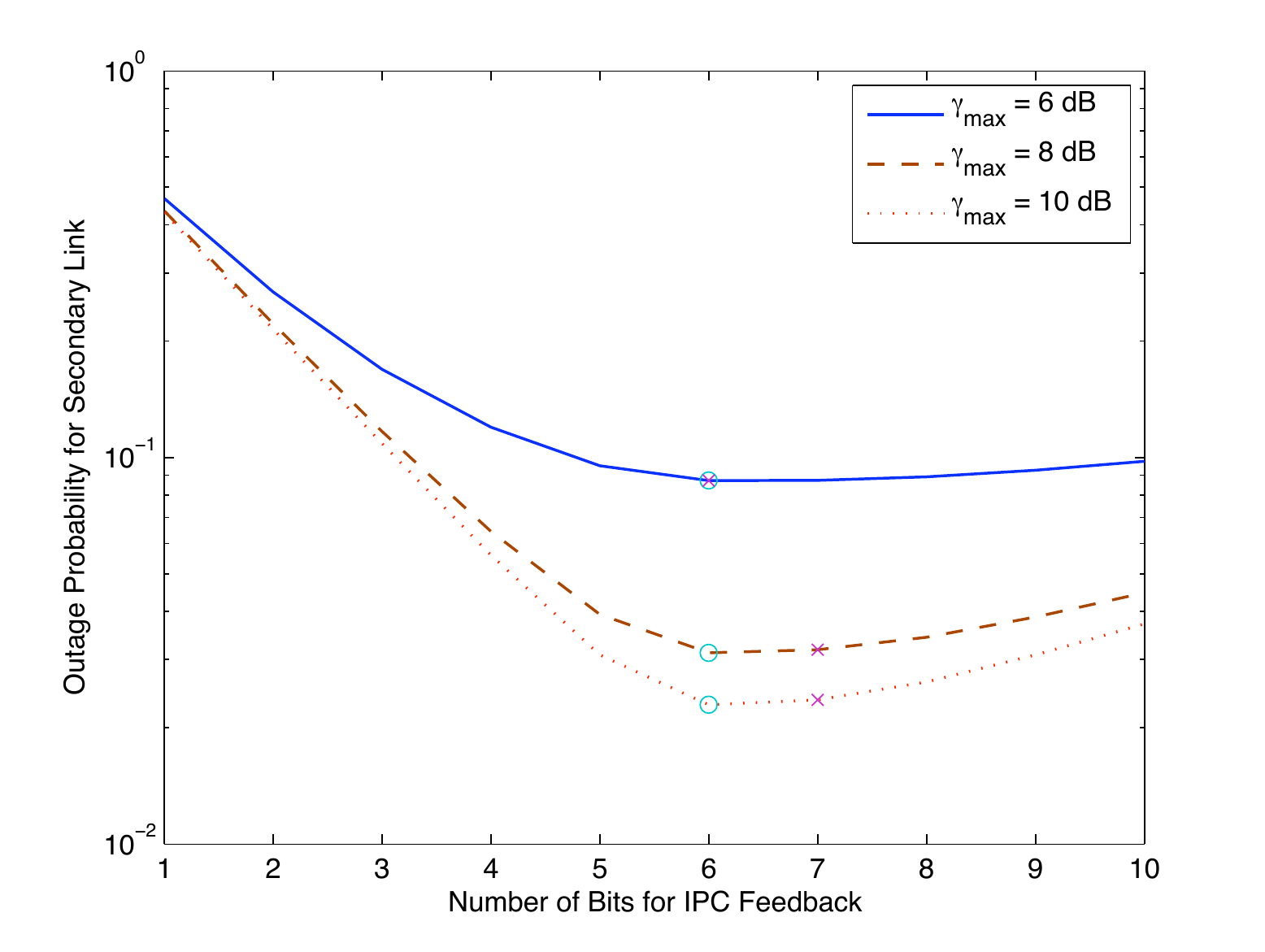}\hspace{-20pt}}\\
\caption{SU outage probability  versus the number of quantized IPC feedback bits for OCB without feedforward. The PU transmit SNR (a) $\gamma_{\mathsf{p}} = 10$ dB and (a) $\gamma_{\mathsf{p}} = 13$ dB. The total number of bits for  CDI and IPC feedback is $A+B = 12$.}
\label{Fig:Tradeoff}
\end{center}
\end{figure}

\section{Concluding Remarks}\label{Section:Conclusion}
We have introduced a new operation model for coexisting PU
and SU links in a spectrum sharing network, where the PU receiver
cooperatively  feeds back quantized side information to the SU
transmitter for facilitating its opportunistic transmission, such
that the resultant PU link performance degradation is minimized. Furthermore, based on cooperative feedback, we have proposed two algorithms for  the SU transmit beamforming  to improve the SU-link
performance. 
Under a  PU-feedback-rate constraint, we have derived  the optimal feedback bits allocation for the CDI and IPC feedback. 
In addition, we have shown  that additional cooperative
feedforward of the SU CDI from the SU transmitter to the PU receiver further enhances the SU-link performance. 

To the authors' best knowledge, this paper is the first attempt in
the literature to study the design of cooperative feedback from the
PU to the SU in a cognitive radio network. This work opens several issues 
worth further investigation. This paper has assumed single antennas for both PU and SU receivers. It is interesting to
extend the proposed  CB and cooperate feedback schemes  to the more general case with MIMO PU and SU links.
Moreover,  we have assumed a single SU link coexisting with a single PU link, while it is pertinent to investigate the more
general system model with multiple coexisting PU and SU links.

\appendix
\subsection{Proof of Proposition~\ref{Cor:Converge}}\label{App:Converge}

For the case without feedforward,  we can expand  $\tPout$ and  $\Pout$ as
\begin{eqnarray}
\tPout &=& \Pr(|\bff_{\mathsf{n}}^\dagger\bh_{\mathsf{s}}|^2 < \theta_{\mathsf{s}}\sigma^2)\nn\\
&=&\l[\Pr(|\bff_{\mathsf{n}}^\dagger\bh_{\mathsf{s}}|^2 < \theta_{\mathsf{s}}\sigma^2\mid \hat{\mu}_1\geq 0 , \omega \geq 0)\Pr(\hat{\mu}_1\geq 0 \mid \omega \geq 0)\r.+\nn\\
&& \l. \Pr(|\bff_{\mathsf{n}}^\dagger\bh_{\mathsf{s}}|^2 < \theta_{\mathsf{s}}\sigma^2\mid \hat{\mu}_1=0, \omega \geq 0)\Pr(\hat{\mu}_1=0\mid \omega \geq 0)\r]\Pr( \omega \geq 0)+\nn\\
 &&\Pr(|\bff_{\mathsf{n}}^\dagger\bh_{\mathsf{s}}|^2 < \theta_{\mathsf{s}}\sigma^2\mid \omega < 0)\Pr( \omega < 0)  \label{Eq:App:j}\\
 \Pout&=& \Pr(|\bff_{\mathsf{o}}^\dagger\bh_{\mathsf{s}}|^2 < \theta_{\mathsf{s}}\sigma^2)\nn\\
 &=&\Pr(|\bff_{\mathsf{o}}^\dagger\bh_{\mathsf{s}}|^2 < \theta_{\mathsf{s}}\sigma^2\mid \omega \geq 0)\Pr( \omega \geq 0)+ \Pr(|\bff_{\mathsf{o}}^\dagger\bh_{\mathsf{s}}|^2 < \theta_{\mathsf{s}}\sigma^2\mid \omega < 0)\Pr( \omega < 0)  \label{Eq:App:o} 
\end{eqnarray}
where $\omega$ is defined in \eqref{Eq:Omega} and $\hat{\mu}_1$ is the IPC feedback parameter in \eqref{Eq:IPC:NOCB:Quant}. 
Using \eqref{Eq:Const:Alpha} and \eqref{Eq:IPC:NOCB:Quant}
\begin{eqnarray}
\lim_{P_{\max}\rightarrow \infty} \Pr(\hat{\mu}_1=0\mid \omega \geq 0) & =&\lim_{P_{\max}\rightarrow \infty} \Pr(\nu < 0 \mid \omega \geq 0) \nn\\
&=& 1. \label{Eq:App:g}
\end{eqnarray}
Moreover, from Lemma~\ref{Lem:NOBeam} and  \eqref{Eq:IPC:NOCB:Quant}
\begin{eqnarray}
\lim_{P_{\max}\rightarrow \infty} \Pr(|\bff_{\mathsf{n}}^\dagger\bh_{\mathsf{s}}|^2 < \theta_{\mathsf{s}}\sigma^2\mid \omega < 0) &=& \lim_{P_{\max}\rightarrow \infty}\Pr\l(P_{\max} < \frac{\theta_{\mathsf{s}}\sigma^2}{g_{\mathsf{s}}}\r)\nn\\
& =&0.\label{Eq:App:h}
\end{eqnarray}
Similarly, it can be shown that 
\begin{eqnarray}
\lim_{P_{\max}\rightarrow \infty} \Pr(|\bff_{\mathsf{o}}^\dagger\bh_{\mathsf{s}}|^2 < \theta_{\mathsf{s}}\sigma^2\mid \omega < 0) & =&0.\label{Eq:App:m}
\end{eqnarray}
By combining \eqref{Eq:App:j}, \eqref{Eq:App:g}, and \eqref{Eq:App:h}
\begin{eqnarray}
\lim_{P_{\max}\rightarrow \infty}\tPout &=&\lim_{P_{\max}\rightarrow \infty}  \Pr(|\bff_{\mathsf{n}}^\dagger\bh_{\mathsf{s}}|^2 < \theta_{\mathsf{s}}\sigma^2\mid \hat{\mu}_1=0, \omega \geq 0)\Pr( \omega \geq 0)\nn\\
&=&\lim_{P_{\max}\rightarrow \infty}  \Pr(|\bff_{\mathsf{o}}^\dagger\bh_{\mathsf{s}}|^2 < \theta_{\mathsf{s}}\sigma^2\mid\omega \geq 0)\Pr( \omega \geq 0)\label{Eq:App:k}
\end{eqnarray}
where \eqref{Eq:App:k} holds since  $\bff_{\mathsf{o}} = \bff_{\mathsf{n}}$ for $\hat{\mu}_1 = 0$. 
Combining \eqref{Eq:App:o}, \eqref{Eq:App:k} and \eqref{Eq:App:m} gives the desired result for the case without feedforward. The proof for the case with feedforward is similar and omitted for brevity. 

\subsection{Proof of Lemma~\ref{Lem:Ps}}\label{App:Ps}
Given perfect IPC feedback and the OCB design specified by \eqref{Eq:Beam:ZF} and \eqref{Eq:Ps}
\begin{eqnarray}\label{Eq:P1:a}
 \Pr(P_{\mathsf{s}} = P_{\max}) &=& \bar{P}_{\textsf{out}} +\Pr\( \frac{\omega}{\lambda g_{\textsf{x}}\epsilon}\geq P_{\max} \r)
\end{eqnarray}
where $\bar{P}_{\textsf{out}}$ is given in \eqref{Eq:Pout:Prim}.
From the definition of $\omega$ in \eqref{Eq:Omega} and define $u = \frac{\theta_{\mathsf{p}}\lambda \gamma_{\max}}{\gamma_{\mathsf{p}}}$, the last term in \eqref{Eq:P1:a} can be obtained as 
\begin{eqnarray}
\Pr\( \frac{\omega}{\lambda g_{\textsf{x}}\epsilon}\geq P_{\max} \r) &=& \int\limits_0^{2^{-\frac{B}{L-1}}}\int\limits_0^\infty\int\limits_{\frac{\theta_{\mathsf{p}}}{\gamma_{\mathsf{p}}}(1+\lambda \gamma_{\max}\tau_2\tau_3 )}^\infty \! f_{g_{\mathsf{p}}}(\tau_1)f_{g_{\textsf{x}}}(\tau_2)f_\epsilon(\tau_3)d\tau_1d\tau_2 d\tau_3\nn\\
&=& \int\limits_0^{2^{-\frac{B}{L-1}}}\int\limits_0^\infty e^{-\frac{\theta_{\mathsf{p}}}{\gamma_{\mathsf{p}}}(1+\lambda \gamma_{\max}\tau_2\tau_3)} f_{g_{\textsf{x}}}(\tau_2)f_\epsilon(\tau_3)d\tau_2 d\tau_3\nn\\
&=& \frac{e^{-\frac{\theta_{\mathsf{p}}}{\gamma_{\mathsf{p}}}}}{\Gamma(L)}\int_0^{2^{-\frac{B}{L-1}}}\int_0^\infty \tau_2^{L-1}e^{-\l(u\tau_3+1\r)\tau_2} d\tau_2 f_\epsilon(\tau_3)d\tau_3 \nn\\
&=& e^{-\frac{\theta_{\mathsf{p}}}{\gamma_{\mathsf{p}}}}\int_0^{2^{-\frac{B}{L-1}}}\frac{f_\epsilon(\tau)}{(1+u\tau)^L }d\tau\label{Eq:App:q}\\
&=& e^{-\frac{\theta_{\mathsf{p}}}{\gamma_{\mathsf{p}}}}(L-1)2^B\int_0^{2^{-\frac{B}{L-1}}}\frac{\tau^{L-2}}{(1+u\tau)^L }d\tau\nn\\
&=& e^{-\frac{\theta_{\mathsf{p}}}{\gamma_{\mathsf{p}}}}(L-1)2^B\int_0^{2^{-\frac{B}{L-1}}}\tau^{L-2}\l[1-Lu\tau+O(\tau^2)\r]d\tau\nn\\
&=& e^{-\frac{\theta_{\mathsf{p}}}{\gamma_{\mathsf{p}}}}\l[1-(L-1)u2^{-\frac{B}{L-1}}\r]+O\l(2^{-\frac{2B}{L-1}}\r).
\label{Eq:Pb}
\end{eqnarray}
The substitution of \eqref{Eq:Pb} into \eqref{Eq:P1:a} gives  \eqref{Eq:P1}.
Next, from \eqref{Eq:Beam:ZF} and \eqref{Eq:Ps}, 
\begin{eqnarray}
 \Pr(P_{\mathsf{s}} < \tau)  &\!=\!&\Pr\(0\leq  \frac{\omega}{\lambda g_{\textsf{x}}\epsilon}\leq \tau \r)\nn\\
&\!=\!&  \Pr\l( \omega \geq 0\r) - \Pr\( \frac{\omega}{\lambda g_{\textsf{x}}\epsilon}\geq \tau \r).\nn
\end{eqnarray}
Using the above equation, \eqref{Eq:P2} is obtained following similar steps as \eqref{Eq:Pb}. This completes the proof. 

\subsection{Proof of Theorem~\ref{Theo:Pout:PerfPwrFb}}\label{App:Outage}
Since the receive SNR at $\mathsf{R_{\mathsf{s}}}$ is  $P_{\mathsf{s}}\tilde{g}_{\mathsf{s}}$, 
\begin{eqnarray}
\Pout &=& \Pr(P_{\mathsf{s}}\tilde{g}_{\mathsf{s}} \leq \theta_{\mathsf{s}}\sigma^2)\nn\\
&=& \int_0^{\frac{\theta_{\mathsf{s}}}{\gamma_{\max}}} \Pr(P_{\mathsf{s}}\leq P_{\max})f_{\tilde{g}_{\mathsf{s}}}(\tau)d\tau + \int_{\frac{\theta_{\mathsf{s}}}{\gamma_{\max}}}^{\infty} \Pr\l(P_{\mathsf{s}}\leq \frac{\theta_{\mathsf{s}}\sigma^2}{\tau}\r)f_{\tilde{g}_{\mathsf{s}}}(\tau)d\tau\nn\\
&\overset{(a)}{=}& 1 - \frac{\Gamma\l(L-1,  \frac{\theta_{\mathsf{s}}}{\gamma_{\max}}\r)}{\Gamma(L-1)} + e^{-\frac{\theta_{\mathsf{p}}}{\gamma_{\mathsf{p}}}} \l[\frac{(L-1)\lambda \theta_{\mathsf{p}}\theta_{\mathsf{s}}}{\gamma_{\mathsf{p}}}\r]2^{-\frac{B}{L-1}}\int_{\frac{\theta_{\mathsf{s}}}{\gamma_{\max}}}^\infty\frac{ \tau^{L-3}}{\Gamma(L-1)}e^{-\tau}d\tau +O\l(2^{-\frac{2B}{L-1}}\r) \nn
\end{eqnarray}
where (a) uses both Lemmas~\ref{Lem:Ps} and \ref{Lem:gs}. The
desired result  follows from the above equation.

\subsection{Proof of Lemma~\ref{Lem:PDF:EspP}}\label{App:PDF:EspP}
Define the random variable  $\kappa=|\bs_1^{\dagger}\bs_2|^2$ where
$\bs_1$ and $\bs_2$ are independent  isotropic vectors in
$\mathds{C}^{L-1}$ with unit norm. The distribution function of $\kappa$ is given
as \cite{MukSabETAL:BeamFiniRateFeed:Oct:03}
\begin{equation}
\Pr\l(\kappa > \tau\r) = (1-\tau)^{L-2}, \quad 0 \leq \tau \leq 1. \label{Eq:PDF:Nu}
\end{equation}
As shown in
\cite{Jindal:MIMOBroadcastFiniteRateFeedback:06},
$\delta$ follows the same distribution as $\kappa\epsilon$. Using
the above results,  the distribution of $\delta$ is readily obtained
as follows:
\begin{eqnarray}
\Pr(\delta\leq t) &=& \Pr(\kappa\epsilon \leq t)\nn\\
&=& \int_0^{2^{-\frac{B}{L-1}}} \Pr\l(\kappa \leq \frac{t}{\tau}\r)f_{\epsilon}(\tau)d\tau \nn\\
&=& \int_0^{t}\Pr\l(\kappa \leq \frac{t}{\tau}\r)f_{\epsilon}(\tau)d\tau+\int_{t}^{2^{-\frac{B}{L-1}}}\Pr\l(\kappa \leq \frac{t}{\tau}\r)f_{\epsilon}(\tau)d\tau\nn\\
&=& 1- \int_{t}^{2^{-\frac{B}{L-1}}}\Pr\l(\kappa > \frac{t}{\tau}\r)f_{\epsilon}(\tau)d\tau\nn\\
&\overset{(a)}{=}& 1- 2^B(L-1)\int_{t}^{2^{-\frac{B}{L-1}}}\l(1-\frac{t}{\tau}\r)^{L-2}\tau^{L-2}d\tau\label{Eq:App:c} \\
&=& 1- 2^B(L-1)\int_{0}^{2^{-\frac{B}{L-1}}-t}\tau^{L-2}d\tau\nn\\
&=& 1- 2^B\l(2^{-\frac{B}{L-1}}-t\)^{L-1}\label{Eq:App:p}
\end{eqnarray}
where \eqref{Eq:App:c} is  obtained by substituting \eqref{Eq:PDF:Nu}. Differentiating
both sides of \eqref{Eq:App:p} gives the desired result.

\subsection{Proof of Lemma~\ref{Lem:Ps:FF}}\label{App:Ps:FF}
Following \eqref{Eq:P1:a} in the proof of 
Lemma~\ref{Lem:Ps}, we can write
\begin{equation}
\Pr(\acute{P}_{\mathsf{s}} = P_{\max}) = \bar{P}_{\textsf{out}} + \Pr\( \frac{\omega}{\lambda g_{\textsf{x}}\delta}\geq P_{\max} \r) \label{Eq:App:r}
\end{equation}
where $ \bar{P}_{\textsf{out}}$  is given in \eqref{Eq:Pout:Prim}.
Using \eqref{Eq:App:q} with  $\epsilon$ replaced by $\delta$,  the last term in \eqref{Eq:App:r} is obtained as 
\begin{eqnarray}
 \Pr\( \frac{\omega}{\lambda g_{\textsf{x}}\delta}\geq P_{\max} \r) &=& e^{-\frac{\theta_{\mathsf{p}}}{\gamma_{\mathsf{p}}}}\int_0^{2^{-\frac{B}{L-1}}}\frac{f_{\delta}(\tau)}{(1+u\tau)^L }d\tau\nn\\
&=& e^{-\frac{\theta_{\mathsf{p}}}{\gamma_{\mathsf{p}}}}(L-1)2^{\frac{B}{L-1}}\int_0^{2^{-\frac{B}{L-1}}}\l(1-2^{\frac{B}{L-1}}\tau\r)^{L-2}[1-Lu\tau + O(\tau^2)]d\tau\label{Eq:App:s}\\
&=& e^{-\frac{\theta_{\mathsf{p}}}{\gamma_{\mathsf{p}}}}\l\{1-(L-1)2^{\frac{B}{L-1}}\int_0^{2^{-\frac{B}{L-1}}}\l(1-2^{\frac{B}{L-1}}\tau\r)^{L-2}[Lu\tau + O(\tau^2)]d\tau\r\}\nn
\end{eqnarray}
\begin{eqnarray}
&=& e^{-\frac{\theta_{\mathsf{p}}}{\gamma_{\mathsf{p}}}}\l[1-L(L-1)2^{-\frac{B}{L-1}} b\int_0^{1}(1-\tau)^{L-2}\tau d\tau + O\l(2^{-\frac{2B}{L-1}}\r)\r]\nn\\
&=& e^{-\frac{\theta_{\mathsf{p}}}{\gamma_{\mathsf{p}}}}\l[1-L(L-1)2^{-\frac{B}{L-1}} b\mathcal{B}(2, L-1)+ O\l(2^{-\frac{2B}{L-1}}\r)\r]\label{Eq:App:a}
\end{eqnarray}
where \eqref{Eq:App:s} applies Lemma~\ref{Lem:PDF:EspP} and 
$\mathcal{B}(\cdot,\cdot)$ represents the beta function. By substituting $\mathcal{B}(x, y) = \frac{\Gamma(x)\Gamma(y)}{\Gamma(x+y)}$ \cite[8.384]{GradRyzhik:Integral:2007} into \eqref{Eq:App:a}
\begin{eqnarray}
 \Pr\( \frac{\omega}{\lambda g_{\textsf{x}}\delta}\geq P_{\max} \r)&=& e^{-\frac{\theta_{\mathsf{p}}}{\gamma_{\mathsf{p}}}}\l[1-L(L-1)2^{-\frac{B}{L-1}} b\frac{\Gamma(2)\Gamma(L-1)}{\Gamma(L+1)}+ O\l(2^{-\frac{2B}{L-1}}\r)\r]\nn\\
&=& e^{-\frac{\theta_{\mathsf{p}}}{\gamma_{\mathsf{p}}}}\l[1-2^{-\frac{B}{L-1}} b+ O\l(2^{-\frac{2B}{L-1}}\r)\r]\label{Eq:App:d}
\end{eqnarray}
where \eqref{Eq:App:d} uses  $\Gamma(L+1) = L!$. Substituting \eqref{Eq:Pout:Prim} and \eqref{Eq:App:d} into  \eqref{Eq:App:r} gives \eqref{Eq:P1:FF}. The desired result in \eqref{Eq:P2:FF} can be obtained following similar steps as given above.

\subsection{Proof of Lemma~\ref{Lem:Delta:P}}\label{App:Delta:P}
Based on the IPC feedback quantization  algorithm in Section~\ref{Section:IPC:OCB:Quant} and for $1\leq n \leq n_0$
\begin{eqnarray}
\Pr(p_{n-1} \leq \eta \leq p_{\mathsf{n}}\mid \gamma_{\mathsf{p}}g_{\mathsf{p}}\geq \theta_{\mathsf{p}}) &=& \frac{\Pr(p_{n-1}\leq P_{\mathsf{s}} < p_{\mathsf{n}})}{\Pr(\gamma_{\mathsf{p}}g_{\mathsf{p}}\geq \theta_{\mathsf{p}})}\nn\\
&=& \frac{(L-1)\theta_{\mathsf{p}}\lambda 2^{-\frac{B}{L-1}}}{\gamma_{\mathsf{p}}\sigma^2}(p_{\mathsf{n}} - p_{n-1}) + O\l(2^{-\frac{2B}{L-1}}\r)\label{Eq:App:e}
\end{eqnarray}
where \eqref{Eq:App:e} uses Lemma~\ref{Lem:Ps}. Combining \eqref{Eq:IPCCbk}, \eqref{Eq:App:e} and $N= 2^A$ gives
\begin{equation}
p_n-p_{n-1} = \frac{\gamma_{\mathsf{p}}\sigma^2}{(L-1)\theta_{\mathsf{p}}\lambda}2^{\frac{B}{L-1}-A} + O\l(2^{-\frac{B}{L-1}}\r),\quad  1\leq n \leq n_0. \label{Eq:App:f}
\end{equation}
The desired result follows from \eqref{Eq:Delta:P} and \eqref{Eq:App:f}.

\subsection{Proof of Lemma~\ref{Lem:Ps:QuantPower}}\label{App:Ps:QuantPower}
The equality in \eqref{Eq:QIPC:Pmax} follows from  the quantized IPC feedback algorithm in Section~\ref{Section:IPC:OCB:Quant}. Based on this algorithm
\begin{eqnarray}
\Pr(\hat{P}_{\mathsf{s}} < \tau)\!\!\! &=&\!\!\! \Pr\l(0\leq\l \lfloor \frac{\omega}{\lambda g_{\textsf{x}}\epsilon}\r\rfloor_{\mathcal{P}}\leq \tau \r)\nn\\
&\leq& \Pr\l(0\leq\frac{\omega}{\lambda g_{\textsf{x}}\epsilon}\leq \tau+\Delta P \r)\label{Eq:App:t}\\
&\leq&  \Pr\(\omega \geq 0 \r) - \Pr\( \frac{\omega}{\lambda g_{\textsf{x}}\epsilon}\geq(\tau+\Delta P) \r)\label{Eq:App:n}
\end{eqnarray}
where \eqref{Eq:App:t} follows from \eqref{Eq:Delta:P}. The
desired result in \eqref{Eq:QIPC:CDF} is obtained using
\eqref{Eq:App:n} and following similar steps as deriving  $\Pr(P_{\mathsf{s}} <
\tau)$ in  Lemma~\ref{Lem:Ps}.

\bibliographystyle{ieeetr}

\end{document}